\newtheorem{theorem}{Theorem}
\newtheorem{lemma}[theorem]{Lemma}
\DeclareMathOperator*{\argmin}{argmin}
\DeclareMathOperator*{\sargmin}{sargmin}
\DeclareMathOperator{\rank}{rank}
\title{Random-projection ensemble dimension reduction}
\author{
 Wenxing Zhou and Timothy I. Cannings\\
 {\small \emph{School of Mathematics and Maxwell Institute for Mathematical Sciences}} \\ \emph{ University of Edinburgh} \\
}
\date{}
\begin{document}
\maketitle
\begin{abstract}
We introduce a new framework for dimension reduction in the context of high-dimensional regression. Our proposal is to aggregate an ensemble of random projections, which have been carefully chosen based on the empirical regression performance after being applied to the covariates. More precisely, we consider disjoint groups of independent random projections, apply a base regression method after each projection, and retain the projection in each group based on the empirical performance.   We aggregate the selected projections by taking the singular value decomposition of their empirical average and then output the leading order singular vectors. A particularly appealing aspect of our approach is that the singular values provide a measure of the relative importance of the corresponding projection directions, which can be used to select the final projection dimension. We investigate in detail (and provide default recommendations for) various aspects of our general framework, including the projection distribution and the base regression method, as well as the number of random projections used.  Additionally, we investigate the possibility of further reducing the dimension by applying our algorithm twice in cases where projection dimension recommended in the initial application is too large.  Our theoretical results show that the error of our algorithm stabilises as the number of groups of projections increases.  We demonstrate the excellent empirical performance of our proposal in a large numerical study using simulated and real data.  
\end{abstract}

\noindent \textbf{Keywords:} High-dimensional, mean central subspace, random projection, singular value decomposition, sufficient dimension reduction. 

\def\spacingset#1{\renewcommand{\baselinestretch}%
{#1}\small\normalsize} \spacingset{1}

\section{Introduction}
\label{sec:intro}
In regression, we seek to understand the relationship between a $p$-dimensional vector of predictors $X$ and a real-valued response variable $Y$, based on a sample of $n$ observations from the target population.  In many modern datasets, the number of predictors $p$ is often large, and can even exceed the number of observations $n$ in the dataset.  This `large $p$, small $n$' setting frequently arises across various applications, for example, in biomedicine, -omics data often consists of measurements on many thousands of genes \citep{horvath2013dna,bell2019dna}.  In these high-dimensional problems, many conventional regression methods suffer from the \emph{curse of dimensionality}  and may even become intractable. For instance, ordinary least squares is not applicable when $p > n$ because the solution involves the inverse of a singular sample covariance matrix. Moreover nonparametric methods, which are often based on smoothness assumptions, hinge on having a sufficient quantity of data around the points of interest for prediction. As a result, the number of observations $n$ required for such methods scales exponentially as the number of predictors $p$ increases \citep[see, for example,][Chapter~1.2]{wainwright2019high}. 

There are two fundamental strategies that attempt to address this curse of dimensionality. The first approach balances model fit and model complexity via regularisation. In particular, in linear regression problems, the extremely popular and widely studied \emph{ridge} \citep{Hoerl1970ridge}, \emph{lasso} \citep{tibshirani1996regression} and \emph{elastic net} \citep{zou2005regularization} algorithms provide elegant solutions. The latter two approaches operate on the assumption of \emph{sparsity} that only a few variables have explanatory effects on the response, enabling these algorithms to identify a suitable subset of the predictors while simultaneously performing prediction. This is particularly appealing due to the simple interpretation that can be derived from the output. However, if the relationship between the predictors and the response is nonlinear, or if many of the variables are important, then these approaches may fail. The second strategy is to directly target a low-dimensional representation of the predictors, while still retaining the essential information on the relationship between $X$ and $Y$.  In particular, one general approach, called \emph{sufficient dimension reduction} (SDR) \citep{cook1991sliced, Cook2007dimension, adragni2009sufficient}, aims to find a function $f:\mathbb{R}^p \rightarrow \mathbb{R}^{d}$, for some $d < p$, such that $X$ and $Y$ are independent given $f(X)$.  A closely related problem focuses on the so-called \emph{central mean subspace} \citep{cook2002dimension}, where the conditional mean of $Y$ given $X$ can be written in terms of $f(X)$. By restricting $f$ to be linear, these approaches are able to offer interpretable results, since the components in the low-dimensional representation may be more easily understood than those in the original $p$-dimensional vector.

In this paper, we propose a new method called \emph{random projection ensemble dimension reduction} (see Algorithm~\ref{alg:RPEDR estimator}). The main idea is to apply many low-dimensional random projections to the data, fit a base regression model after each projection and choose \emph{good} projections according their empirical performance. We then aggregate the chosen projections using  singular value decomposition.  Figure~\ref{fig:intro} below presents an example of our random projection-based method in a toy regression problem. We see that a randomly chosen 2-dimensional  projection retains very little of the relationship between $X$ and $Y$, but a projection carefully selected from a group of $L = 200$ random ones yields more promising results. Ultimately, by repeating this process and aggregating the chosen projections, we are able to retain a large amount of the structure observed after applying the oracle projection.  

\begin{figure*}[!ht]
    \centering
        \includegraphics[width=0.24\textwidth]{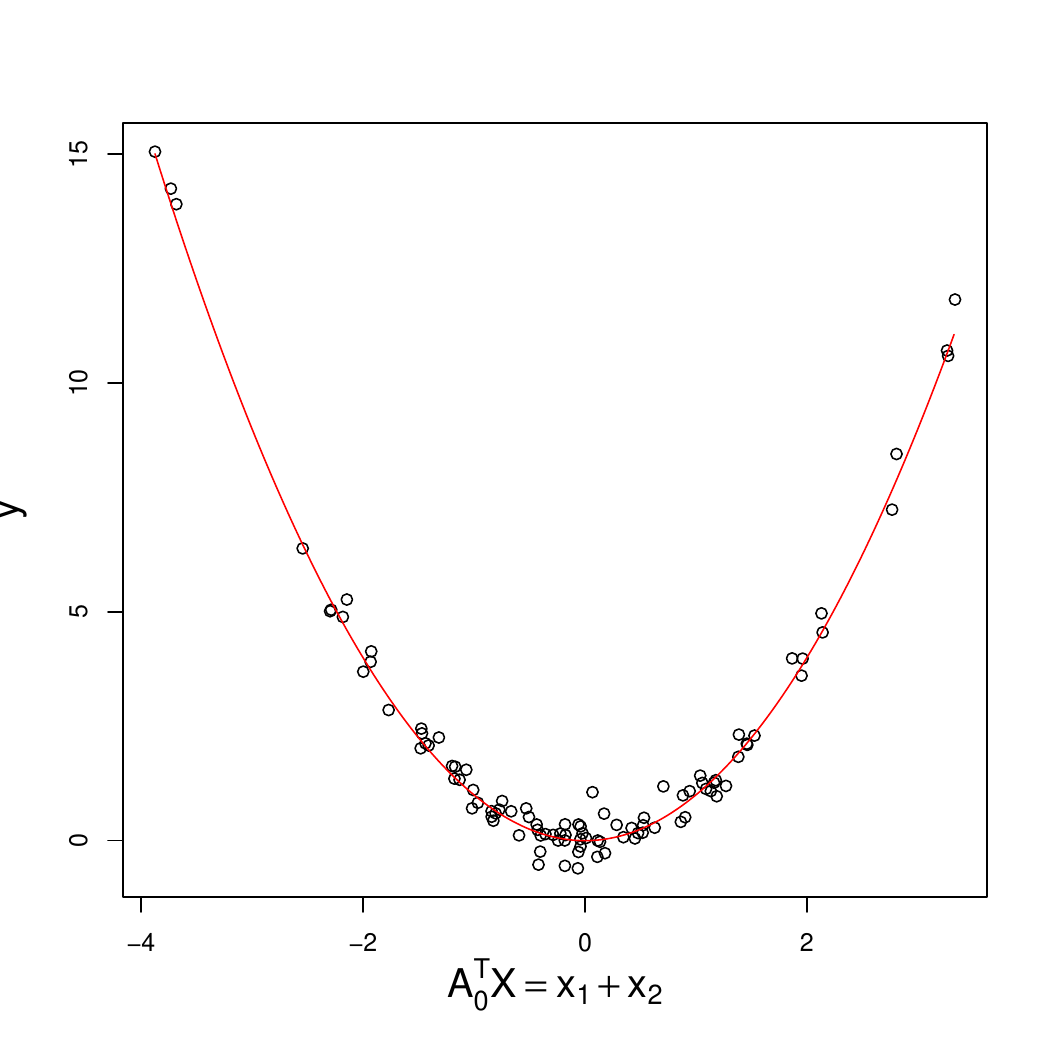}
        \includegraphics[width=0.24\textwidth]{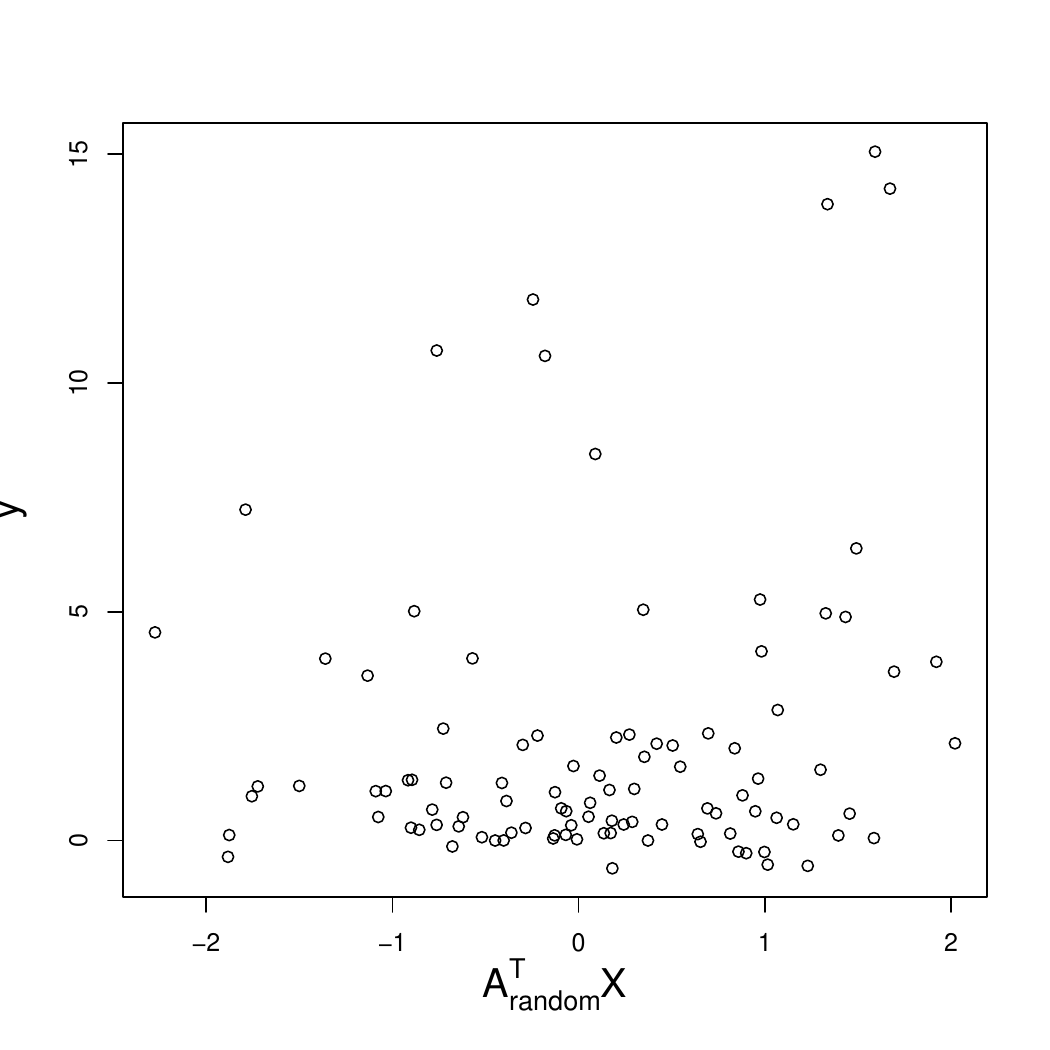}
        \includegraphics[width=0.24\textwidth]{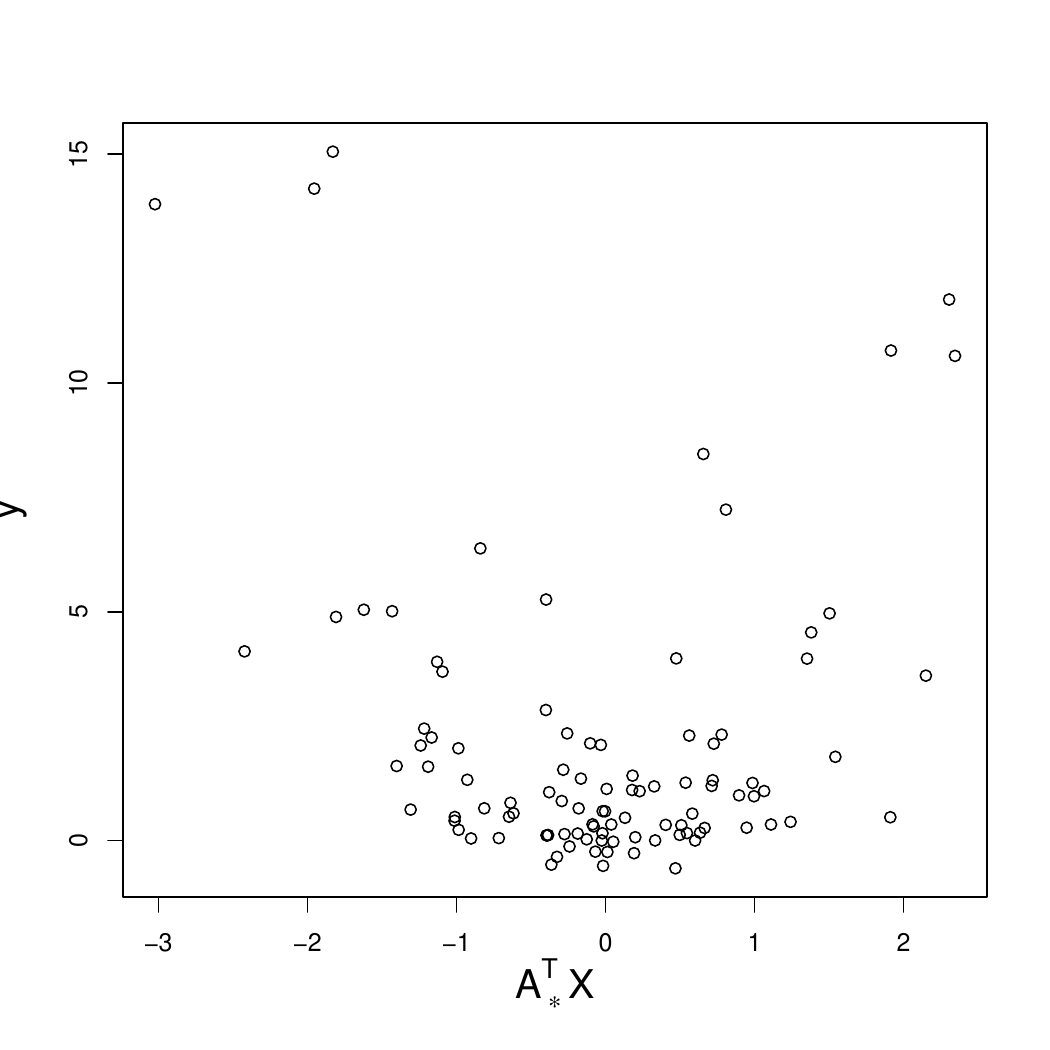}
        \includegraphics[width=0.24\textwidth]{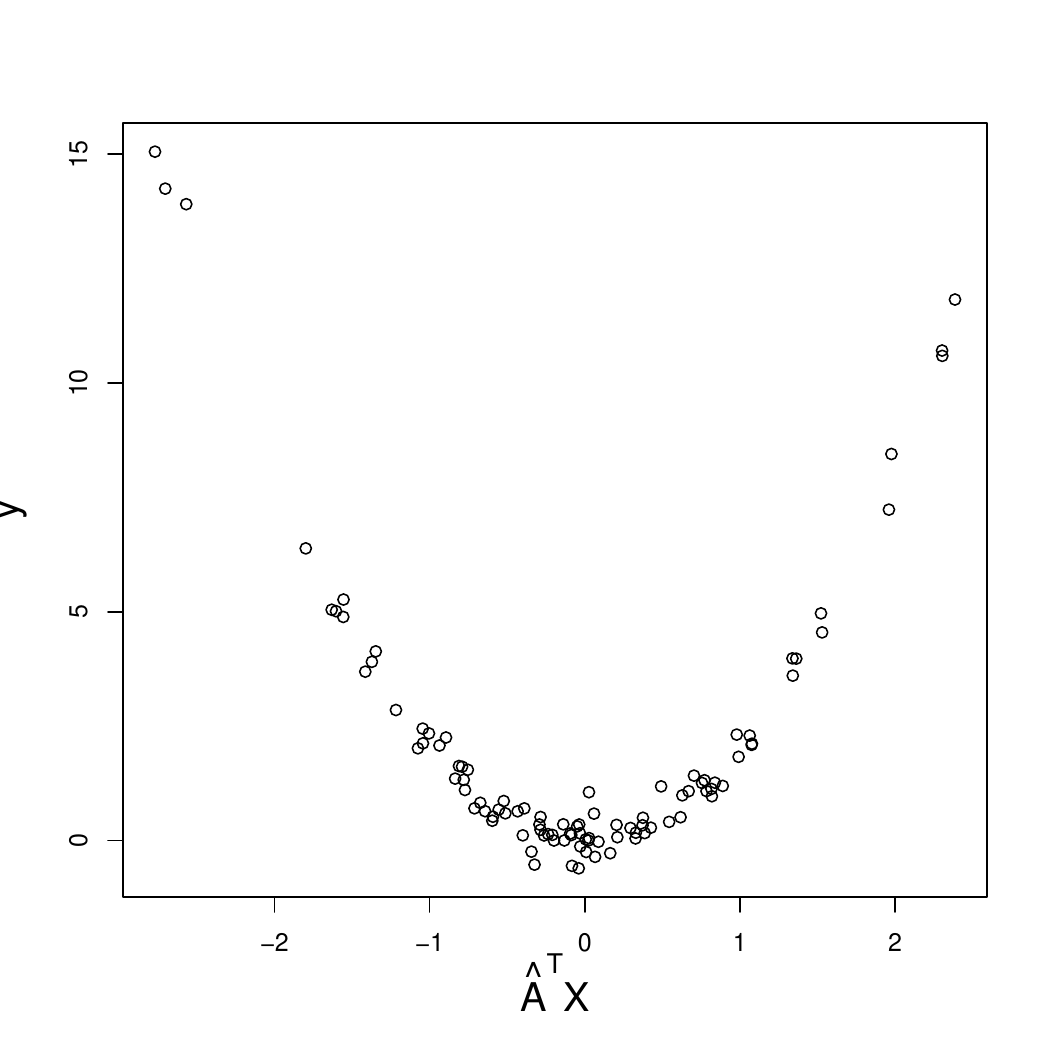}
    \caption{\small Dimension reduction example with $n = 100$ observations in $p=20$ dimensions. The true model is $Y = A_0^T X + 0.3\epsilon$, with $A_0 = (1/\sqrt{2}, 1/\sqrt{2}, 0, \ldots, 0)^T \in \mathbb{R}^{20}$, $X \sim N_{20}(0,I)$ and $\epsilon \sim N(0,1)$.  Far left: $Y$ plotted against the oracle projection $x_1 + x_2$, the red line shows the true regression function. Middle left: $Y$ plotted against a random projection. Middle right: $Y$ plotted against the best projection among a group of 200 random projections. Far right: $Y$ plotted against the projection estimated by our proposed method (with $L = 200$ groups).}  
    \label{fig:intro}
\end{figure*}

Our proposed framework is highly flexible and allows for user-specified choices of the random projection distribution and base regression method depending on the problem at hand.  We investigate in detail the choice of several aspects of our algorithm. This includes, in Section~\ref{sec:RPchoice}, the choice of the random projection distribution, where we find that simply using Gaussian projections is not always the best approach, particularly when the rows of the true projection matrix are sparse. In such cases projections with rescaled Cauchy entries perform better.  Based on our results, we recommend using an equal mixture of rescaled Gaussian and Cauchy projections, which are competitive in both the sparse and dense regimes.   We also compare the performance of our algorithm with different base regression methods. As demonstrated in Section~\ref{sec:basechoice}, simple parametric (linear and quadratic) models perform very well when they are correctly specified in the low-dimensional space, but potentially suffer prohibitively if they are misspecified. In this case, we recommend using more flexible nonparametric models, with multivariate adaptive regression splines (MARS) \citep{Friedman1991MARS} emerging as a particularly effective choice.  Our theoretical results (Theorem~\ref{thm:choiceofL}) show that the error of our approach converges at a rate no slower than $L^{-1/2}$, where $L$ is the number of groups of projections considered.  This rate is observed in our empirical investigations, which also consider the choices of the number of projections in each group and the dimension of the random projections; see Section~\ref{sec:choieofLM}.  

One of the key features of our algorithm is that the output from the singular value decomposition step, namely the diagonal matrix of singular values, provides a measure of relative importance of the corresponding singular vectors.  These values can therefore be used to determine a suitable projection dimension if this is not known in advance.  We propose in Algorithm~\ref{alg:DimensionEstimator} in Section~\ref{sec:choiceofd0} to choose the projection dimension by comparing the singular values with those obtained when the group size is one (i.e.~when no selection of \emph{good} random projections is performed).  

In Section~\ref{sec:doubleRPEDR}, we demonstrate that the performance of our algorithm can perhaps be further improved by applying it a second time on the projected data from the initial application.  In some cases, the combination of Algorithms~\ref{alg:RPEDR estimator} and~\ref{alg:DimensionEstimator} performs well in terms of the false negative rate (i.e.~they output a projection that contains a majority of the true signals, but may also includes additional `noise' directions that are not needed).  One way of viewing this is that the initial application acts more like a screening step, removing some of the noise directions and preserving the signal directions.   A second application of our algorithm (see Algorithm~\ref{alg:doubleDR}) applied to the projected data, can then refine the projection by combining signals and removing irrelevant directions. 

We demonstrate the performance of our algorithms through a large numerical study using simulated data in Section~\ref{sec:numerical}, and real data in Section~\ref{sec:realdata}.  Our proposal is very competitive in terms of empirical performance across a wide range of different settings. Code to implement our algorithms in \texttt{R} is available via GitHub\footnote{\url{https://github.com/Wenxing99/RPEDR}}.  

Work on sufficient dimension reduction dates back to the idea of \emph{inverse regression}, where in contrast to regressing $Y$ on $X$, we target the expectation of $X$ given $Y$. The earliest work in this area is \emph{sliced inverse regression} (SIR) \citep{SIR}, which uses `slices' of the response $Y$ to estimate $\mathrm{Cov}(X \mid Y)$, then the eigenvectors of this estimate are used as the basis of the dimension reduction space. This idea led to a long line of work on inverse regression, including, among many others, \emph{sliced average variance estimation} (SAVE) \citep{cook1991sliced}, \emph{principal Hessian directions} (PHD) \citep{li1992principal}, \emph{sparse SIR} \citep{li2006sparseSIR} and \emph{directional regression} (DR) \citep{li2007directional}. In another line of work, the nonparametric \emph{minimum average variance estimation} (MAVE) \citep{MAVE} algorithm estimates the dimension reduction space by constructing a minimization problem that quantifies how the dimension reduction model fits the observations.  \citet{ma2012semiparametric} propose a semiparametric based approach,  which take advantage of influence functions \citep{hampel1968contributions} to construct a new kind of estimator. An excellent overview of these methods, as well as many other early works, is provided in the review article by \citet{ma2013review}.  More recent proposals include, \emph{gradient based kernel dimension reduction} (gKDR) \citep{fukumizu2014gradient}, \emph{deep SDR} \citep{banijamali2018deep, kapla2022fusing}, \emph{conditional variance estimator} (CVE) \citep{fertl2022conditional} and \emph{dimension reduction and MARS} (drMARS) \citep{drMARS}; see also the recent review article by \citet{ghojogh2021sufficient}.

Random projections offer a natural approach to dimension reduction, and the celebrated Johnson--Lindenstrauss Lemma \citep{lindenstrauss1984extensions,dasgupta2003elementary} shows that applying a random projection to the data may approximately preserve the pairwise distances between the observations. However, as illustrated in Figure~\ref{fig:intro}, naively applying a single random projection typically fails.  Our proposal is motivated by the work of \citet{RPEnsembleClassification2017}, who propose a random projection ensemble method for binary classification based on aggregating the results of applying a base classifier to many carefully selected random projections of the data.  Related ideas have been used more recently in the context of sparse principal component analysis \citep{gataric2020sparse}, sparse sliced inverse regression \citep{zhang2023sparse} and semi-supervised learning \citep{wang2024sharp}.  Other works on (unsupervised) dimension reduction using random projection techniques include \citet{bingham2001random} and \citet{reeve2024heterogeneous}. Random projection based techniques have been used in a wide range of other statistical applications, including precision matrix estimation \citep{marzetta2011}, two-sample testing \citep{lopes2011}, clustering \citep{fern2003RPclustering,anderlucci2022high}, high-dimensional regression \citep{thanei2017RPregression,slawski2018principal,dobriban2019asymptotics,ahfock2021statistical} and differentially private learning \citep{huang2024efficient}.  See also the survey papers by \citet{xie2017RPRegressionSurvey} and \citet{cannings2021random}.

The remainder of our paper is as follows: in Section~\ref{sec:settingandmethods} we formally introduce our statistical setting and present our main algorithm, Section~\ref{sec:practicalconsiderations} focuses on various practical considerations and provides recommendations on how to choose the inputs to our algorithm.  In Section~\ref{sec:doubleRPEDR} we investigate the potential improvement gained by a second application of our main algorithm. The results of our numerical experiments with simulated and real data are presented in Sections~\ref{sec:numerical} and~\ref{sec:realdata}, respectively.  We conclude our paper with a discussion in Section~\ref{sec:conclusions}.  The Appendix contains the proofs of our theoretical results (Section~\ref{sec:proofs}), as well as the full results of our large simulation study (Section~\ref{sec:fullsims}).

\section{Statistical setting and methodology}
\label{sec:settingandmethods}
Let $P$ denote a distribution on $\mathbb{R}^p \times \mathbb{R}$ and  suppose that the covariate-response pair $(X, Y) \sim P$. Let $\eta(x) := \mathbb{E}(Y \mid X =x)$ be the regression function. We seek to find a dimension $d \in \{1, \ldots, p-1\}$ and a corresponding projection matrix $A \in \mathcal{A}_{p \times d} := \{A \in \mathbb{R}^{p \times d}: A^T A = I_{d \times d}\}$ for which there exists a function $g: \mathbb{R}^d \rightarrow \mathbb{R}$ satisfying 
\begin{equation}
\label{eq:SDR}
\eta(x) = g(A^T x). 
\end{equation}
Here $A^T$ maps $X$ onto a $d$-dimensional subspace, and when $d < p$ the lower-dimensional representation $A^T X$ contains all the information available in $X$ about the conditional mean of $Y$ given $X$. 

Equation~\eqref{eq:SDR} is of course satisfied by taking $d = p$, $A = I_{p\times p}$ (the $p$-dimensional identity matrix) and $g = \eta$. The main interest, however, is in finding a solution to~\eqref{eq:SDR} for a minimal choice of $d$, which we will denote by $d_0$.  The corresponding projection matrix and link function will be denoted by $A_0$ and $g_0$, respectively. While $d_0$ is unique, the associated $A_0$ is not.  Indeed, if $A_0 \in \mathcal{A}_{p \times d_0}$ and $g_0$ satisfy~\eqref{eq:SDR}, then so do $A_1 = A_0 B$ and $g_1(\cdot) = g_0(B \cdot)$, for any $B \in \mathcal{A}_{d_0 \times d_0}$.

Since $A_0$ is not uniquely identifiable, our focus is on  the space spanned by the columns of $A_0$. For a projection matrix $A \in \mathcal{A}_{p\times d}$, we write $\mathcal{S}(A) = \mathrm{span}(A)$ for the $d$-dimensional subspace spanned by the columns of $A$. If $A$ satisfies \eqref{eq:SDR}, then $\mathcal{S}(A)$ is called a \emph{mean dimension reduction subspace} \citep{cook2002dimension}.
The space spanned by the columns of the $d_0$-dimensional projection $A_0$, $\mathcal{S}(A_0)$, is called the \emph{central mean dimension reduction subspace}, or simply \emph{central mean subspace (CMS)}. 

For the purposes of this paper, we assume the existence and uniqueness of the CMS. This assumptions allows us to focus on our main goal of estimating the space $\mathcal{S}(A_0)$ based on a dataset of $n$ independent and identically distributed pairs $(X_1, Y_1), \ldots, (X_n, Y_n) \sim P$.  Our estimate in this problem will be the space spanned by the columns of a projection $\hat{A}_0 \in \mathcal{A}_{p\times \hat{d}_0}$ for some $\hat{d}_0 \in [p]$, say, where $\hat{d}_0$ and $\hat{A}_0$ are (possibly randomised) functions of the data $\mathcal{D} = \{(X_1, Y_1), \ldots, (X_n, Y_n)\}$.  There are many ways of measuring the distance between the spaces $\mathcal{S}(\hat{A}_0)$ and $\mathcal{S}(A_0)$. We will primarily focus on the measure $d_{\mathrm{F}}(\mathcal{S}(\hat{A}_0), \mathcal{S}(A_0) \bigr)$, where
\begin{equation}
\label{eq:dist}
  d_{\mathrm{F}}^2(\mathcal{S}(\hat{A}_0), \mathcal{S}(A_0) \bigr) := \frac{1}{2} \bigl \| \hat{A}_0 \hat{A}_0^T - A_0 A_0^T \bigr \|_{F}^2. 
\end{equation}
The scaling in \eqref{eq:dist} ensures that, if $d_0 = \hat{d}_0$, then the measure $d_{\mathrm{F}}$ is equivalent to the sin-theta distance \citep{davis1970rotation, yu2015useful}. 

\subsection{The random projection ensemble dimension reduction  algorithm}

In this section, we formally introduce our new procedure for dimension reduction, which is presented in Algorithm~\ref{alg:RPEDR estimator} below.  First, we outline some of the notation used, starting with the notion of a base regression method.  Given a data set of $n_1$ covariate-response pairs $((z_1,y_1),...,(z_{n_1},y_{n_1})) \in (\mathbb{R}^d \times \mathbb{R})^{n_1}$, a $d$-dimensional data-dependent regression method is a measurable function $\hat{g} : \mathbb{R}^d \times (\mathbb{R}^d \times \mathbb{R})^{n_1} \rightarrow \mathbb{R}$. The function $\hat{g}$ uses the data $(z_1,y_1),...,(z_{n_1},y_{n_1})$ to estimate a regression function $g : \mathbb{R}^d \rightarrow \mathbb{R}$. 
We write $\mathcal{G}_{d, n_1}$ for the set of all such regression methods.  In Algorithm~\ref{alg:RPEDR estimator}, $z_1, \ldots, z_{n_1}$ are different (randomly chosen) projections of a subset of the $p$-dimensional covariate observations $x_1, \ldots, x_n$.   

Our algorithm also takes as inputs a projection dimension $d \in [p]$ alongside a corresponding projection distribution $Q$ on $\mathcal{Q}_{p\times d} : = \{A \in \mathbb{R}^{p \times d} : \mathrm{diag}(A^TA) = (1,\ldots, 1)^T\}$, a number of groups of projections $L \in \mathbb{N}$, a group size $M \in \mathbb{N}$ and a subsample size $n_1 \in [n]$, as well as a base regression method $\hat{g} \in \mathcal{G}_{d, n_1}$. A full investigation, alongside recommendations of how to choose these inputs in practice is given in Section~\ref{sec:practicalconsiderations}. 

\begin{algorithm}[!ht]
\caption{Random projection ensemble dimension reduction}
\label{alg:RPEDR estimator}

\textbf{Input}: Data $((x_1,y_1),...,(x_n,y_n)) \in (\mathbb{R}^p \times \mathbb{R})^n$, projection dimension $d \in [p]$, projection distribution $Q$ on $\mathcal{Q}_{p\times d}$, number of groups $L$, group size $M$, sample-split size $n_1 \in [n-1]$, base regression method $\hat{g} \in \mathcal{G}_{d, n_1}$. 

Let $\mathbf{P}_{1,1},...,\mathbf{P}_{L,M} \stackrel{\mathrm{i.i.d}}{\sim} Q$. 

\For{$\ell \in [L]$} {
    Let $\mathcal{N}_1$ be a random sample of size $n_1$ chosen without replacement from $\{1,\ldots, n\}$. 
    
    \For{$m \in [M]$} {
    for $i \in \mathcal{N}_1^c$, let $\hat{h}_{\ell,m}(x_i) = \hat{g}\bigl(\mathbf{P}_{\ell,m}^Tx_i ; (\mathbf{P}_{\ell,m}^T x_j,y_j)_{j\in \mathcal{N}_1} \bigr)$.  
        
     let $\hat{R}_{\ell,m} = \sum_{i \in \mathcal{N}_1^c} \bigl\{ y_i - \hat{h}_{\ell,m}(x_i) \bigr\}^2$.
    }

    Choose the projection that minimizes the mean squared error within the group, $\mathbf{P}_{\ell, *} := \sargmin_{m \in [M]} \hat{R}_{\ell, m}$
}

Set $\hat{\Pi} := \frac{1}{L} \sum_{\ell = 1}^{L} \mathbf{P}_{\ell, *} \mathbf{P}_{\ell, *}^T$

Calculate the singular value decomposition of $\hat{\Pi}= U D U^T$

\textbf{Output}: The matrix $U \in \mathbb{R}^{p \times p}$ and  $\mathrm{diag}(D) \in \mathbb{R}^p$. 
\end{algorithm}

Algorithm~\ref{alg:RPEDR estimator} starts by generating $L \cdot M$ random projection matrices $\mathbf{P}_{\ell, m} \in \mathbb{R}^{p \times d}$, where each is sampled from the specified distribution $Q$. These projections are partitioned into $L$ disjoint groups of size $M$. For each projection  $\mathbf{P}_{\ell,m}$, we apply it to the covariates and fit the base regression model $\hat{g}$ using a subsample of the projected data. The performance of each projection is measured by the mean squared error on the complement of the training subsample and within each group, we retain the projection with the best performance.  These selected projections, denoted $\mathbf{P}_{\ell,*}$, are then aggregated by taking the average of their outer products to give $\hat{\Pi}$. Finally, the algorithm outputs the $p\times p$ orthonormal matrix $U = (U_1, \ldots, U_p)$ alongside the vector $\mathrm{diag}(D)$, provided by the singular value decomposition of $\hat{\Pi}$. 

The idea is that the columns of $U$ contain the estimated dimension reduction directions, with the singular values $D$ providing information about the relative importance of each direction. If the desired projection dimension $\hat{d}_0$ is predetermined, then we retain the first $\hat{d}_0$ columns and set $\hat{A}_0 = (U_1, \ldots, U_{\hat{d}_0})$ as our estimate of $A_0$.  On the other hand, if $\hat{d}_0$ is not determined in advance, then the singular values provide a useful guide.  In some cases, there may be a clear candidate choice of $\hat{d}_0$, but when an appropriate choice is less obvious, we propose to choose $\hat{d}_0$ by comparing the output of Algorithm~\ref{alg:RPEDR estimator} with the corresponding output obtained when $M=1$, i.e.~compared with random matrices without selection (see Algorithm~\ref{alg:DimensionEstimator} in Section~\ref{sec:choiceofd0}).

\begin{figure*}[!ht]
    \centering
        \includegraphics[width=0.44\textwidth]{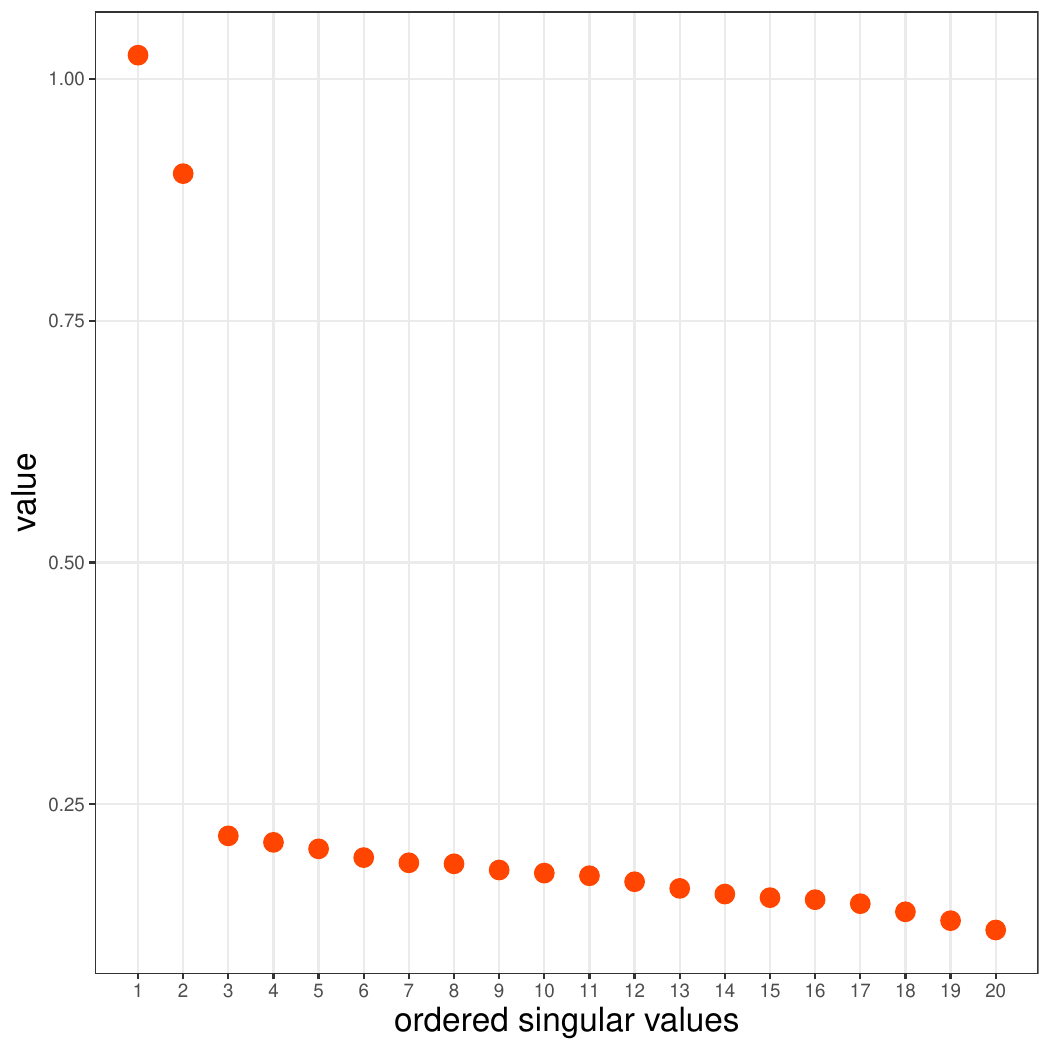}
          \includegraphics[width=0.44\textwidth]{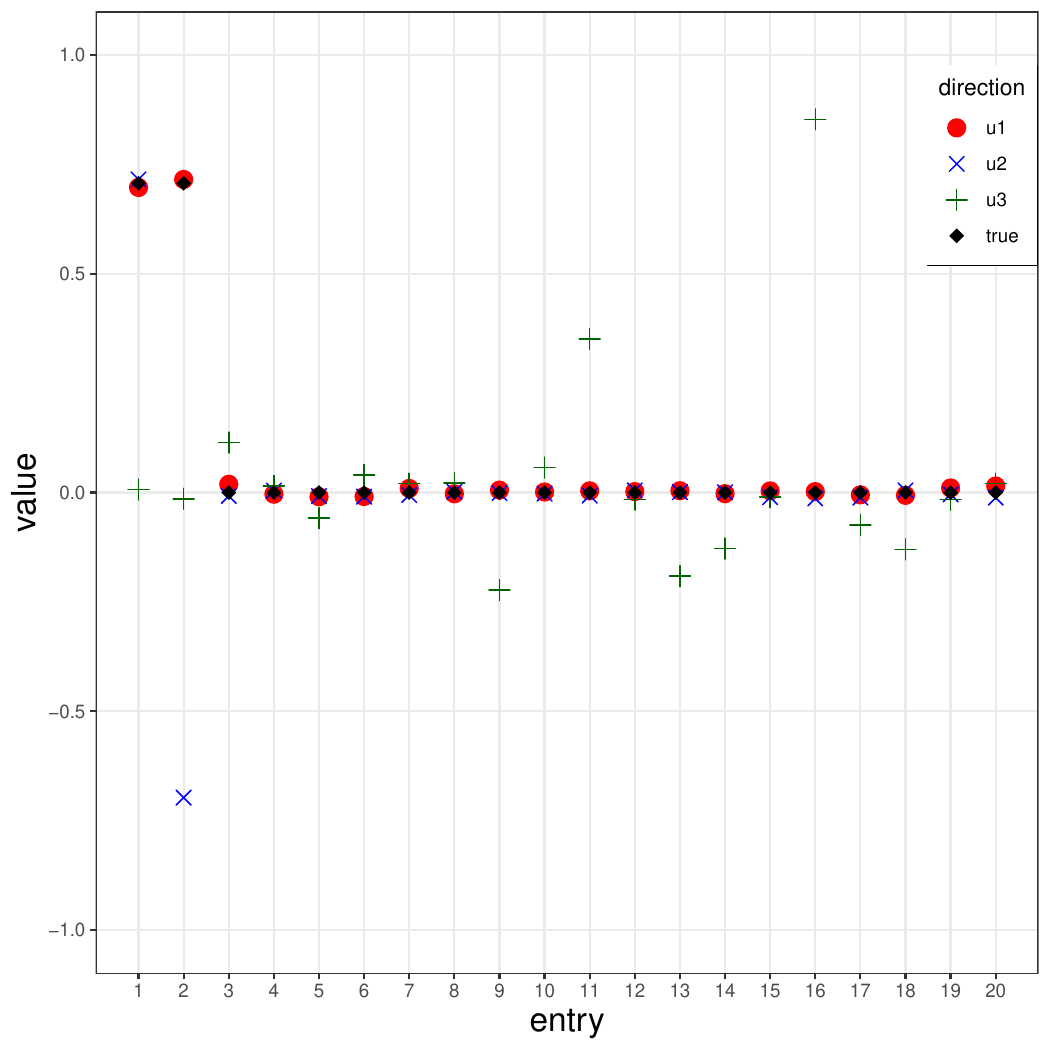}
     \caption {\small An example of some of the output from Algorithm~\ref{alg:RPEDR estimator} applied to the data used in Figure~\ref{fig:intro} with $p=20$ and $n = 100$. Here we take $d=5$, $L=200$, $M=200$, $n_1 = 66$, the random projections $Q$ are taken to have independent standard Cauchy random variables, and $\hat{g}$ the quadratic least-squares regressor. We present the entries of the diagonal matrix $D$ (left plot), alongside the entries of the first (red filled circles), second (blue $\times$) and third (green $+$) columns of $U$ (right plot). We also include the entries of the true dimension reduction direction (black filled diamonds).}  
    \label{fig:Alg1demo}
\end{figure*}

To demonstrate how our approach works, we applied Algorithm~\ref{alg:RPEDR estimator} to the data used in Figure~\ref{fig:intro}. In Figure~\ref{fig:Alg1demo} we observe that the first two entries of $D$ are relatively large, suggesting that the singular vectors $U_1$ and $U_2$ contain a large proportion of the signal.  In this case, the true sufficient dimension reduction direction $A_0$ is one-dimensional and  nearly all of the signal is found in the first singular vector, with $|U_1^TA_0| = 0.996$.  The second singular vector here contains a small amount of the signal with $|U_2^TA_0| = 0.048$.   The remaining entries of $D$ are relatively small, suggesting that $U_3,\ldots, U_p$ can be discarded (indeed $\max_{j=3, \ldots, p} |U_j^TA_0| < 0.01)$.

\section{Practical considerations}
\label{sec:practicalconsiderations}

\subsection{Random projection generating distribution}
\label{sec:RPchoice}
In this section, we investigate the effect of the random projection distribution in Algorithm~\ref{alg:RPEDR estimator}. For brevity, we focus on two distributions:
\begin{enumerate} 
\item \emph{Gaussian rows:} Let $Q_\mathrm{N}$ denote the distribution $\mathcal{Q}_{p\times 1}$ formed by first generating $p$ independent standard  Gaussian random variables and then normalising the row to have unit norm.  In other words, we say $\mathbf{Q} \sim Q_\mathrm{N}$ if $\mathbf{Q} \stackrel{d}{=} Z/\|Z\|$, where $Z \sim N_p(0, I)$. 
\item \emph{Cauchy rows:} We write $\mathbf{Q} \sim Q_\mathrm{C}$ if $\mathbf{Q} \stackrel{d}{=} W/\|W\|$, where $W = (W_1, \ldots, W_p)$ and $W_1, \ldots, W_p \stackrel{\mathrm{i.i.d.}}{\sim} \mathrm{Cauchy}(0,1)$. 
\end{enumerate} 

Projections with independent Gaussian entries are widely used in the literature and have desirable rotational invariance properties \citep[see, for example,][Proposition~3.3.2]{Vershynin_2018}. On the other hand, several works have explored the use of Cauchy random projections in various contexts, offering distinct advantages for certain applications. For example, \citet{Li2007Nonlinear} extended the Johnson-Lindenstrauss Lemma to the 
$l_1$-norm using Cauchy random projections, while \citet{Ramirez2012Reconstruction} used them for sparse signal reconstruction.

 We conduct a numerical experiment using three different versions of the following model:
 \begin{itemize}
  \item Model 1: Let $X \sim N_p(0, I)$ and $Y = 2(A_0^T X)^2 + \epsilon$, where $\epsilon \sim N(0,1/4)$ and  $A_0^T = \frac{1}{\sqrt{q}}(\mathbf{1}_{q}, \mathbf{0}_{p-q})$, for $q \in [p]$. Here $\mathbf{1}_{q}$ and $\mathbf{0}_{p-q}$ denote a $q$-dimensional vector of ones and a ($p-q$)-dimensional vector of zeros, respectively.
  \end{itemize}
 We set $p=20$ and vary $q$ to simulate different sparsity levels, with $q = 2$ (Model 1a), $q = 10$ (Model 1b) and $q = 20$ (Model 1c). We consider three different projection distributions, namely $Q_\mathrm{N}^{\otimes d}$ (Gaussian),  $Q_\mathrm{C}^{\otimes d}$ (Cauchy) and $\frac{1}{2} Q_\mathrm{N}^{\otimes d} + \frac{1}{2} Q_\mathrm{C}^{\otimes d}$ (mixture). The idea is to show the effect of the choice of random projection distribution for the different levels of sparsity.  The experiment is designed in such a way that the signal strength is the same in each setting. The other inputs to Algorithm~\ref{alg:RPEDR estimator} are the same in each case. Indeed, in all the experiments in this subsection, we set $n = 200$, and the inputs to Algorithm~\ref{alg:RPEDR estimator} are $d = 5$, $L = 200$, $M = 200$, $n_1 = \lceil 2n/3 \rceil$, the base method $\hat{g}$ is taken to be the global quadratic least squares estimator (see equation \eqref{eq:globalQLS} for a formal definition).  

\begin{figure}[!ht]
    \centering
        \includegraphics[width=\textwidth]{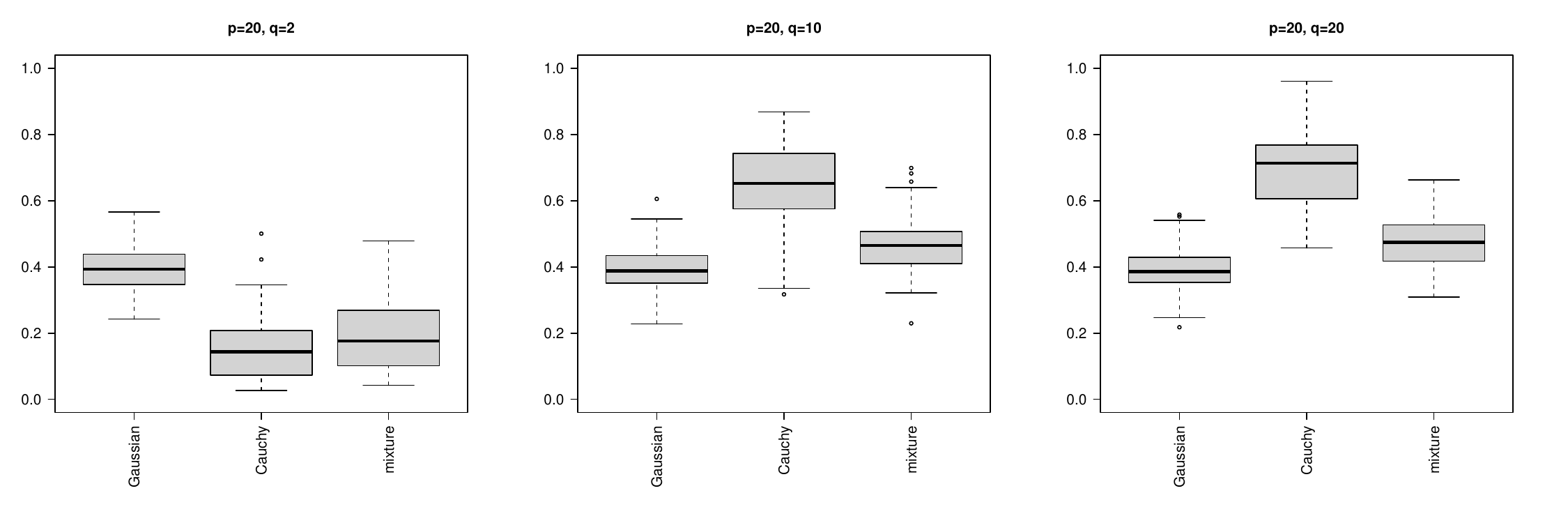}
    
    \caption{{\small Boxplots of the $\sin$-theta distance (see~\eqref{eq:dist}) between $U_1$ from the output of Algorithm~\ref{alg:RPEDR estimator} and the true direction $A_0 = (\mathbf{1}_q, \mathbf{0}_{p-q})$ over 100 repeats of the simulation. We present the results for three different instances of Model~1, namely when $q= 2$ (left), $q=10$ (middle) and $q= 20$ (right). We compare the results when the projection distribution $Q$ is $Q_{\mathrm{N}}^{\otimes d}$ (Gaussian), $Q_{\mathrm{C}}^{\otimes d}$ (Cauchy), and $\frac{1}{2} Q_{\mathrm{C}}^{\otimes d} + \frac{1}{2} Q_{\mathrm{N}}^{\otimes d}$ (mixture). \label{fig:RPchoice}}}  
\end{figure}

Figure~\ref{fig:RPchoice} shows that in sparse models Cauchy projections do indeed outperform Gaussian projections, but the heavy tailed Cauchy distribution is not well-suited to cases where there is no sparsity.  As expected, due to the rotational invariance in the problem, Gaussian random projections exhibit the same performance regardless of the sparsity level.  We see that using a fifty-fifty mixture of Gaussian and Cauchy projections in our algorithm leads to good performance regardless of the sparsity level, and if the practitioner has no knowledge of the sparsity level of $A_0$, then we recommend using this mixture as the default option and indeed this strategy will be employed in all of the numerical experiments presented later in the paper.

\subsection{Choice of base regression method}
\label{sec:basechoice}
The choice of base regression method 
$\hat{g} \in \mathcal{G}_{d,n_1}$ used in Algorithm~\ref{alg:RPEDR estimator} depends on two primary factors.  First, the method should be computationally efficient, as it will be applied $L\cdot M$ times to different randomly projected datasets. We delay our discussion of computational aspects of the problem to Section~\ref{sec:comp} and focus here on the second primary factor: the base regression method must capture at least some of the signal after a \emph{good} projection has been applied.  We will demonstrate that a global polynomial based method is often effective, even when the true signal is not exactly a polynomial. However, there are scenarios in which a more flexible nonparametric approach may be needed.   

We investigate four options in detail. In each case, we describe how $\hat{g} \in \mathcal{G}_{d,m}$ is defined based on a dataset $(z_1, y_1), \ldots, (z_m, y_m) \in (\mathbb{R}^d \times \mathbb{R})^m$ of size\footnote{this sample size requirement is mild and ensures that we can apply the global quadratic method below -- if it is not satisfied, then either $d$ should be reduced, or we may be limited to using the linear least squares method which only requires $m > d$.} $m > 1 + d(d+3)/2$ and a test point $z \in \mathbb{R}^d$.  
\begin{enumerate}
    \item Global linear least squares (LLS): $\hat{g}_{\mathrm{GL}}(z) = \hat{\alpha} + \hat{\beta}^T z$, where 
    \[
    (\hat{\alpha}, \hat{\beta}) := \argmin_{(\alpha, \beta) \in \mathbb{R}\times \mathbb{R}^d} \Bigl\{\sum_{i=1}^m (y_i - \alpha - \beta^Tz_i)^2\Bigr\}.
    \]
    \item Global quadratic least squares (QLS): $\hat{g}_{\mathrm{GQ}}(z) := \hat{a} + \hat{b}^T z + z^T\hat{C} z $, where 
    \begin{equation}
    \label{eq:globalQLS}
    (\hat{a}, \hat{b}, \hat{C}) := \argmin_{(a, b, C) \in \mathbb{R}\times \mathbb{R}^d \times \mathbb{S}_{d\times d}} \Bigl\{\sum_{i=1}^m (y_i - a - b^Tz_i - z_i^TCz_i)^2\Bigr\},
    \end{equation}
    and $\mathbb{S}_{d\times d}$ denotes the set of $d \times d$ symmetric matrices. 
     \item Nadaraya--Watson: Fix a kernel $K$ and a bandwidth $h >0$, then let 
     \[
     \hat{g}_{\mathrm{NW}}(z) = \frac{\sum_{i=1}^m y_i K(h^{-1} \|z_i-z\|) }{\sum_{i=1}^m K(h^{-1}\|z_i-z\|)}.
     \] 
     In our experiments below, we take $K(t) = \frac{e^{-t^2/2}}{\sqrt{2\pi}}$ and $h = 0.1$.
    \item Multivariate Adaptive Regression Splines (MARS):  Let $\hat{g}_{\mathrm{MARS}}$ be Algorithms~2 and~3 of \citet{Friedman1991MARS}, which fits piecewise linear models to the data and automatically selects interactions between variables. We use the implementation of this method from the \texttt{R}-package \texttt{earth} \citep{package2024earth}. The maximum degree of interaction is taken to be $3$.
\end{enumerate}
While many alternative base regression methods are available, we focus on these four methods for their balance of flexibility and computational feasibility.

To demonstrate how these different options perform in practice, we carry out a further set of experiments using three different models:
\begin{itemize}
   \item Model 1a: as used in the previous subsection with $A_0 = (\mathbf{1}_2, \mathbf{0}_{p-2})^T$.
    \item Model 2: Let $X = (X_1,\ldots, X_p)^T \sim \mathrm{Unif}([-1,1]^p)$ and $Y = 2 \sin(2\pi X_3 ) + \epsilon$ with $\epsilon \sim N(0,1/4)$ independent of $X$.
     \item Model 3\footnote{Model 3 was used in the simulation studies in \citet{SIR,MAVE,drMARS}.}: Let $X = (X_1, \ldots, X_p)^T \sim N_p(0,I)$ and 
     \[
     Y =  \frac{X_6}{1/2 + (X_7+3/2)^2} + \epsilon
     \] 
     with $\epsilon \sim N(0,1/4)$ independent of $X$.  Here $A_0 = (e_6,e_7)$ is $2$-dimensional, where $e_j$ denotes the $j$th canonical Euclidean basis vector in $\mathbb{R}^p$.
\end{itemize}
In each case, we set $p =20$ and $n=200$ and apply Algorithm~\ref{alg:RPEDR estimator} with  $d = 5$, $L = 200$, $M = 200$, $n_1 = \lceil 2n/3 \rceil$, and $Q = \frac{1}{2} Q_{\mathrm{N}}^{\otimes d} + \frac{1}{2} Q_{\mathrm{C}}^{\otimes d}$.

\begin{figure}[!ht]
    \centering
        \includegraphics[width=0.9\textwidth]{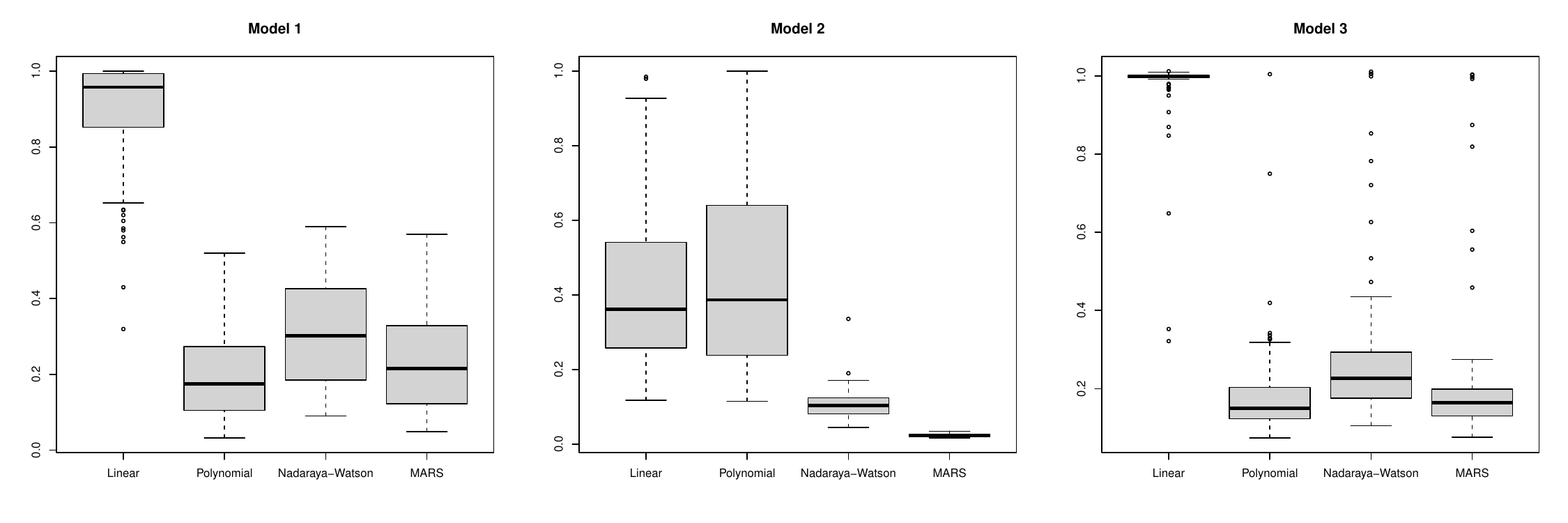}
    \caption{{\small Boxplots of the $\sin$-theta distance (see~\eqref{eq:dist}) between $U_{1:d_0}$ from the output of Algorithm~\ref{alg:RPEDR estimator} and the true direction $A_0$ over 100 repeats of the simulation. We present the results for Model~1a (left), Model 2 (middle) and Model 3 (right), for $p=20$ and $n = 200$. The base method is linear least squares, quadratic least squares, Nadaraya-Watson or the MARS algorithm. }\label{fig:basechoice}}  
\end{figure}

Figure~\ref{fig:basechoice} exhibits some interesting behaviour.  In Model 1a,  the linear regression base method is not effective because no linear signal exists in any $d$-dimensional projection of the data. For a similar reason,  the linear and quadratic least squares methods do not perform well in Model 2. However, in Model 1a, the quadratic base method is very effective, as this model is correctly specified after applying the oracle projection $A_0$. MARS also performs well here, although at a higher computational cost.  For Model 2, the flexibility of the nonparametric methods is required to find the signal, and MARS outperforms the Nadarya-Watson base method.  The results for Model 3 show that, even when the base model is incorrectly specified, good performance is still achieved. Here the true regression function after the oracle projection is not a polynomial. The linear base method is effective in finding the (approximately) linear signal in $X_6$, but does not find any of the signal in $X_7$. However, the signal is sufficiently well-approximated by a quadratic function that we are able to find a lot of the signal with a the quadratic base method.  Here the slightly more computationally expensive MARS algorithm performs similarly to the quadratic least squares approach.  

Based on results in this section, MARS is recommended as a suitable default base regression method. It performs well across all examples presented here, as well as in a broad range of settings in our large simulation study in Section~\ref{sec:numerical}. 

\subsection{Choice of \texorpdfstring{$L$}{L}, \texorpdfstring{$M$}{M} and \texorpdfstring{$d$}{d}}
\label{sec:choieofLM}
Our Algorithm~\ref{alg:RPEDR estimator} combines the results of assessing the empirical performance of the base regression method in a total of $L \cdot M \cdot d$ projection directions.  More precisely, we aggregate the results of $L$ ($d$-dimensional) projections, where each projection is selected as the best out of a (disjoint) block of size $M$.  In this section, we explore  how varying these parameters affects performance. As we will show, the performance typically improves as $L$ increases, in the sense that the sin-theta distance decreases at rate $L^{-1/2}$ (assuming the other quantities are kept fixed); see Theorem~\ref{thm:choiceofL} and our numerical results in Figure~\ref{fig:increasingL}. Overall, the results in this section lead to our recommendation of a default choice of $L = 200$.  The effect of the parameters $M$ and $d$ is slightly less straightforward, and their choices are influenced by the ambient dimension $p$. Indeed, ultimately we recommend taking $M = 10p$ and $d = \lceil p^{1/2} \rceil$. 

We first focus on the choice of $L$. Theorem~\ref{thm:choiceofL} below compares the sin-theta distance between the first $d_0$ columns of $U$ from the output of Algorithm~\ref{alg:RPEDR estimator} applied with $L$ projections, with the corresponding output when considering an infinite simulation version of our algorithm which takes ``$L = \infty$".  More precisely, given data $\{(x_1,y_1), \ldots, (x_n, y_n)\}$ (here considered to be fixed pairs in $\mathbb{R}^p \times \mathbb{R}$), $d\in [p]$, $Q \in \mathcal{Q}_{d\times p}$, $L \in \mathbb{N}$, $M \in \mathbb{N}$, $n_1 \in [n-1]$ and $\hat{g} \in \mathcal{G}_{d,n_1}$, let $\hat{A}_0^{L} := \hat{A}_0 = (U_1, \ldots, U_{d_0})$ denote the first $d_0$ columns of the output of Algorithm~\ref{alg:RPEDR estimator}.  Here we assume knowledge of the true $d_0$ and explicitly emphasise the dependence on $L$.  Further define $\Pi^{\infty} := \mathbb{E} (\mathbf{P}_{\ell, *}\mathbf{P}_{\ell, *}^T) \in \mathbb{S}_{p\times p}$ to be the expected value of $\hat{\Pi}$ in line 9 of Algorithm~\ref{alg:RPEDR estimator}. Here the expected value is taken only over the randomness in the projections and in the sample-split in line 4 of the algorithm, but not the data.  Let $\hat{A}_0^{\infty} := (U^{\infty}_1, \ldots, U^{\infty}_{d_0})$, where $U^{\infty}_1, \ldots, U^{\infty}_p$ are the singular vectors of $\Pi^{\infty}$.

\begin{theorem}    
\label{thm:choiceofL}
We have 
\[
   \mathbb{E} d_{\mathrm{F}}\bigl(\mathcal{S}(\hat{A}_0^{L}), \mathcal{S}({A}_0)\bigr)  \leq  2d_0^{1/2}\|\Pi^{\infty} - A_0A_0^T\|_{\mathrm{op}} +  \frac{2d_0^{1/2} (2\pi)^{1/2}p}{L^{1/2}}.
    \]
\end{theorem}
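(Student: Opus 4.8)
The plan is to reduce everything to a single Davis--Kahan type $\sin\Theta$ bound in which $\hat\Pi$ is compared \emph{directly} to the rank-$d_0$ projection $A_0 A_0^T$, and then to split the resulting perturbation into a deterministic bias and a stochastic fluctuation by the triangle inequality. Both $\hat\Pi$ and $\Pi^{\infty}$ are symmetric and positive semidefinite, so their singular value decompositions coincide with their eigendecompositions and $\hat A_0^{L}$ spans the top-$d_0$ eigenspace of $\hat\Pi$. The matrix $A_0 A_0^T$ is an orthogonal projection of rank $d_0$, hence has eigenvalues $1$ (multiplicity $d_0$) and $0$ (multiplicity $p-d_0$); crucially, the gap between its $d_0$-th and $(d_0+1)$-th eigenvalues is exactly $1$.

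Applying the Frobenius-norm $\sin\Theta$ theorem (in the form of \citealp{yu2015useful}) with the roles of the two matrices played by $\hat\Pi$ and $A_0 A_0^T$, the unit eigengap makes the denominator trivial and, since $\hat A_0^{L}$ is $d_0$-dimensional so that $d_{\mathrm F}$ equals the $\sin\Theta$ distance, gives
\[
d_{\mathrm F}\bigl(\mathcal S(\hat A_0^{L}), \mathcal S(A_0)\bigr) \;\leq\; 2 d_0^{1/2}\,\bigl\|\hat\Pi - A_0 A_0^T\bigr\|_{\mathrm{op}}.
\]
This is the key structural step: by benchmarking against $A_0 A_0^T$ rather than against the ``$L=\infty$'' matrix $\Pi^{\infty}$ (whose eigengap could be arbitrarily small), I avoid any eigengap appearing in the final bound. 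A triangle inequality $\|\hat\Pi - A_0A_0^T\|_{\mathrm{op}} \le \|\hat\Pi - \Pi^{\infty}\|_{\mathrm{op}} + \|\Pi^{\infty} - A_0A_0^T\|_{\mathrm{op}}$ followed by taking expectations (the data being fixed, so the second summand is deterministic) then produces the two terms of the theorem, the second being precisely the bias $2 d_0^{1/2}\|\Pi^{\infty} - A_0 A_0^T\|_{\mathrm{op}}$.

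It remains to establish the variance bound $\mathbb E\|\hat\Pi - \Pi^{\infty}\|_{\mathrm{op}} \le (2\pi)^{1/2} p / L^{1/2}$, which I expect to be the main obstacle. Writing $Z_\ell := \mathbf P_{\ell,*}\mathbf P_{\ell,*}^T$, the independence of the projections and of the sample-splits across groups makes the $Z_\ell$ i.i.d.\ with mean $\Pi^{\infty}$, so $\hat\Pi - \Pi^{\infty} = L^{-1}\sum_{\ell=1}^L (Z_\ell - \mathbb E Z_\ell)$ is an average of i.i.d.\ centred matrices (the $\sargmin$ selection is harmless, as it acts within each group and preserves the cross-group i.i.d.\ structure). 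I would bound the expected operator norm by the standard symmetrization inequality (a factor of $2$), then the Rademacher-to-Gaussian comparison using $\mathbb E|g| = (2/\pi)^{1/2}$ (a factor of $(\pi/2)^{1/2}$, which together give the overall constant $(2\pi)^{1/2}$), and finally pass to the Frobenius norm, $\mathbb E_g\|\sum_\ell g_\ell Z_\ell\|_{\mathrm{op}} \le \mathbb E_g\|\sum_\ell g_\ell Z_\ell\|_F \le (\sum_\ell \|Z_\ell\|_F^2)^{1/2}$. The only deterministic ingredient is the uniform bound $\|Z_\ell\|_F \le d$: with $G := \mathbf P_{\ell,*}^T \mathbf P_{\ell,*}$ one has $\|Z_\ell\|_F^2 = \Tr(G^2) = \sum_{i,j} G_{ij}^2 \le d^2$, since $G$ has unit diagonal (the columns of $\mathbf P_{\ell,*}$ are unit vectors) and $|G_{ij}|\le 1$ by Cauchy--Schwarz. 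Combining yields $\mathbb E\|\hat\Pi - \Pi^{\infty}\|_{\mathrm{op}} \le (2\pi)^{1/2} d / L^{1/2} \le (2\pi)^{1/2} p / L^{1/2}$, and substituting into the display above completes the argument.

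A sanity check I would keep in mind is that the fluctuation is controlled entirely through the boundedness of a single $Z_\ell$ in Frobenius norm, with the $L^{-1/2}$ rate arising purely from averaging i.i.d.\ terms, so no eigenvalue structure of $\Pi^{\infty}$ is ever invoked. The only genuinely delicate point is matching the stated constant: a direct second-moment (Frobenius) estimate of $\mathbb E\|\hat\Pi - \Pi^{\infty}\|_{\mathrm{op}}^2$ already delivers the $d\,L^{-1/2}$ rate with a smaller constant, so the factor $(2\pi)^{1/2}$ is exactly the cost of routing the argument through symmetrization and Gaussian comparison, which is the natural route to operator-norm control in this form.
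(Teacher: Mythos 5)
Your proposal is correct and reaches the stated bound, but it handles the key stochastic term by a genuinely different route from the paper. The first half is identical: both arguments apply the Davis--Kahan $\sin\Theta$ theorem of \citet{yu2015useful} with $A_0A_0^T$ (unit eigengap) as the benchmark, giving $d_{\mathrm{F}} \leq 2d_0^{1/2}\|\hat{\Pi} - A_0A_0^T\|_{\mathrm{op}}$, followed by the same triangle inequality. For $\mathbb{E}\|\hat{\Pi}-\Pi^{\infty}\|_{\mathrm{op}}$, however, the paper establishes the exponential tail bound $\mathbb{P}(\|\hat{\Pi}-\Pi^{\infty}\|_{\mathrm{op}}\geq t)\leq p\,e^{-t^2L/8}$ via the matrix Hoeffding inequality of \citet{tropp2012user} (after showing $L^{-2}(\Pi_\ell-\Pi^{\infty})^2 \preccurlyeq L^{-2}I_{p\times p}$ almost surely) and then integrates the tail, whereas you use symmetrization, Rademacher-to-Gaussian comparison, and the passage $\|\cdot\|_{\mathrm{op}}\leq\|\cdot\|_F$ together with the deterministic bound $\|\mathbf{P}_{\ell,*}\mathbf{P}_{\ell,*}^T\|_F\leq d$. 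Each route has something to recommend it. Yours yields the sharper constant $(2\pi)^{1/2}d/L^{1/2}$ (with $d$ in place of $p$; the two coincide with the theorem's bound only after the crude step $d\leq p$), and it needs only the unit-norm-columns condition $\mathrm{diag}(\mathbf{P}^T\mathbf{P})=(1,\ldots,1)^T$ actually guaranteed by $\mathcal{Q}_{p\times d}$ --- the paper's argument, by contrast, asserts $\mathbf{P}_{\ell,*}^T\mathbf{P}_{\ell,*}=I_{d\times d}$ to conclude that the eigenvalues of $\Pi_\ell$ lie in $\{0,1\}$, which requires orthonormal (not merely unit-norm) columns, so your version is more robust on this point. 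The paper's route, on the other hand, delivers a full exponential concentration inequality rather than only a first-moment bound, which is stronger information even though only the expectation is needed for the theorem as stated. Your bookkeeping of the constants ($2\cdot(\pi/2)^{1/2}=(2\pi)^{1/2}$, and $\sum_{i,j}G_{ij}^2\leq d^2$ via Cauchy--Schwarz) is correct throughout.
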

The proof of Theorem~\ref{thm:choiceofL} is given in Section~\ref{sec:proofs}. The second term in the bound given by Theorem~\ref{thm:choiceofL} suggests that the error of our algorithm decreases at a rate of $L^{-1/2}$ as $L$ increases. Our numerical results below suggest that this is typically the case in practice.  
The first term in the bound does not depend on $L$, and can be seem as the infinite simulation error of our method.  This term depends on the choice of $M$ and $d$ (as well as the base regression method and projection distribution) and would be small when, on average, the projections selected are close to $A_0$.  Since $A_0$ is unknown, determining a theoretically optimal choice of $M$ and $d$ seems intractable and we will instead provide recommendations based on the empirical experiments below -- see Figure~\ref{fig:choiceofMd}.

To further elucidate the effect of $L$ in practice, we conduct a numerical experiment using versions of Models 1a, 2 and 3 from the previous subsection with $p = 20$ and $p=100$; see Figure~\ref{fig:increasingL}.  We set $n = 200$, with the other inputs to Algorithm~1 being $d = 5$, $M = 10p$, $n_1 = \lceil 2n/3 \rceil$,  $Q = \frac{1}{2} Q_{\mathrm{C}}^{\otimes d} + \frac{1}{2} Q_{\mathrm{N}}^{\otimes d}$ for the projection distribution, and $\hat{g}$ being the MARS algorithm.  The number of projections $L$ varies from $10$ to $1000$.

\begin{figure}[!ht]
    \centering
   \includegraphics[width=0.3\textwidth]{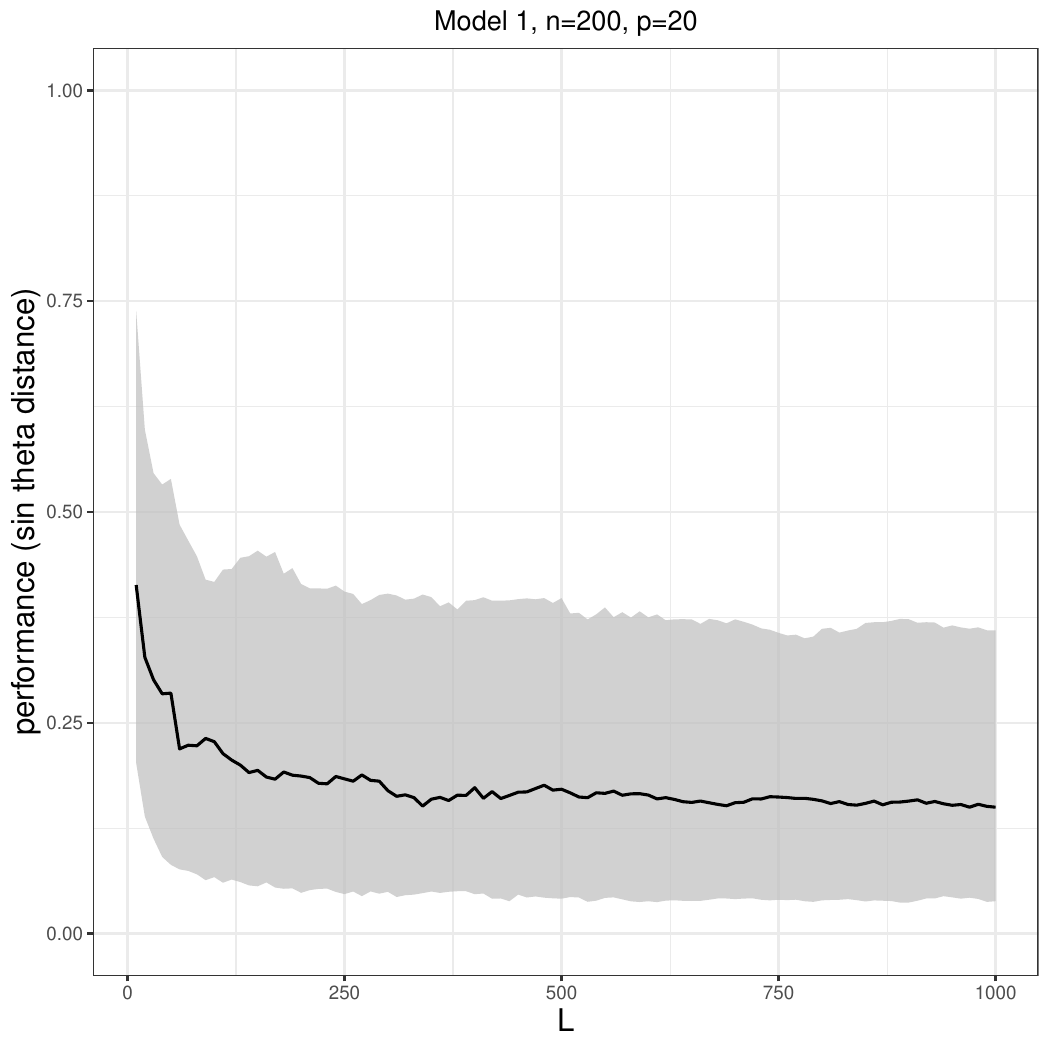}
   \includegraphics[width=0.3\textwidth]{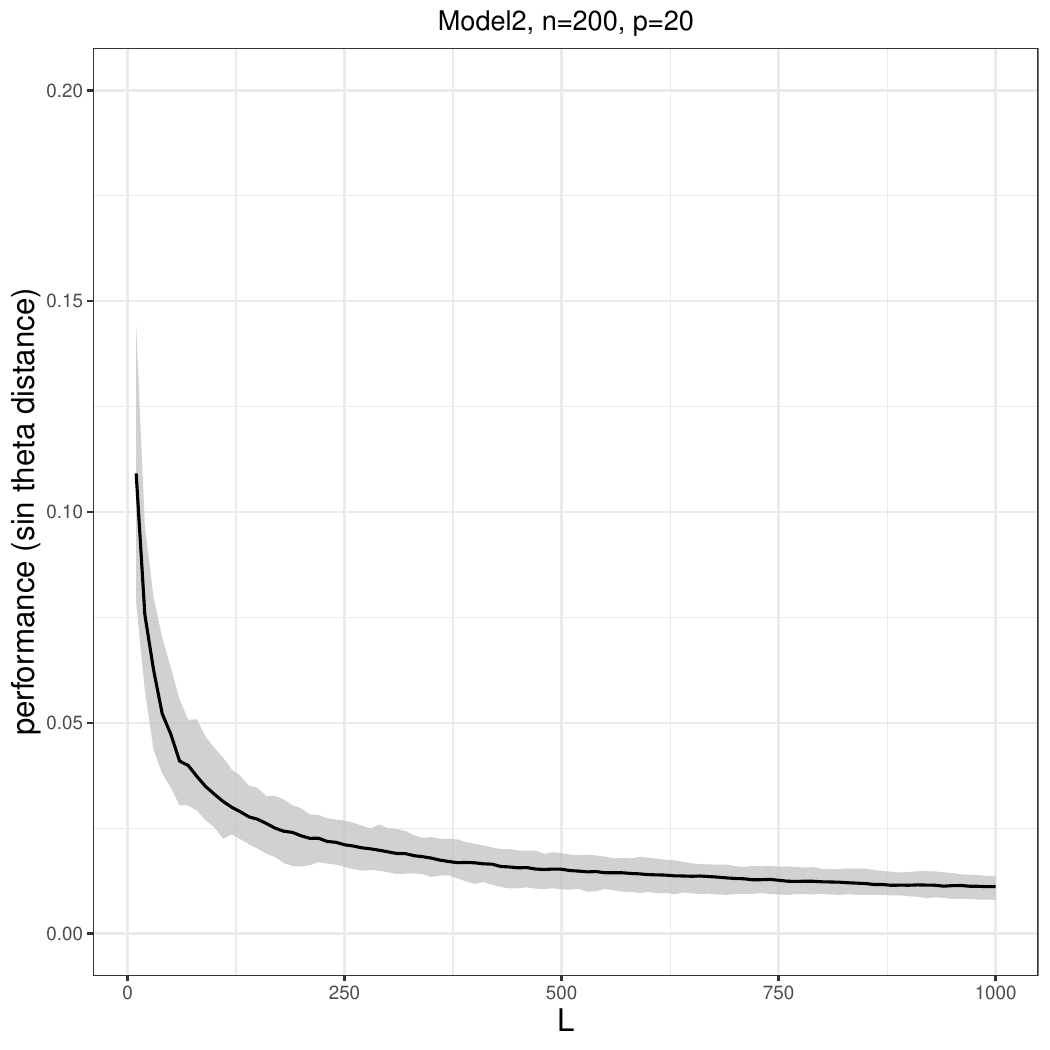}
   \includegraphics[width=0.3\textwidth]{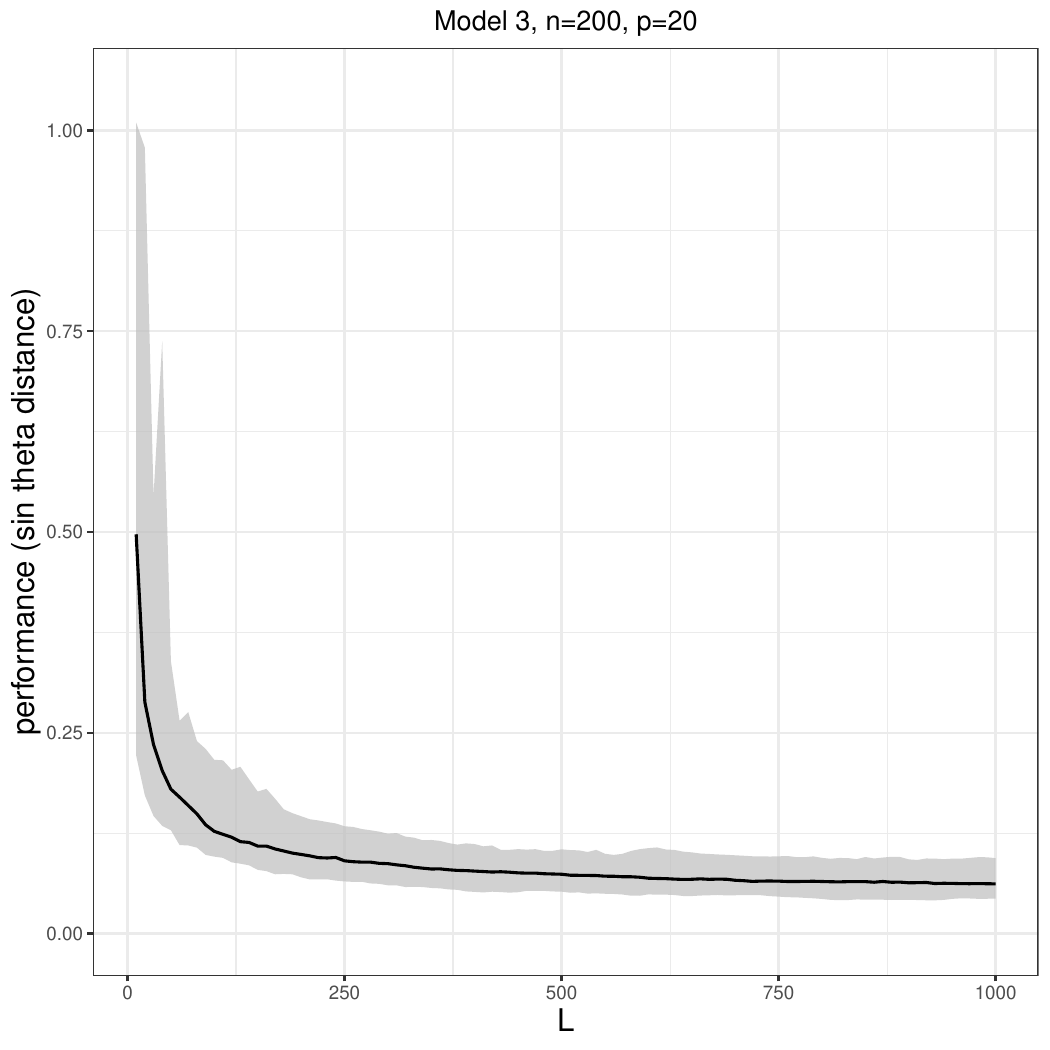}

    \includegraphics[width=0.3\textwidth]{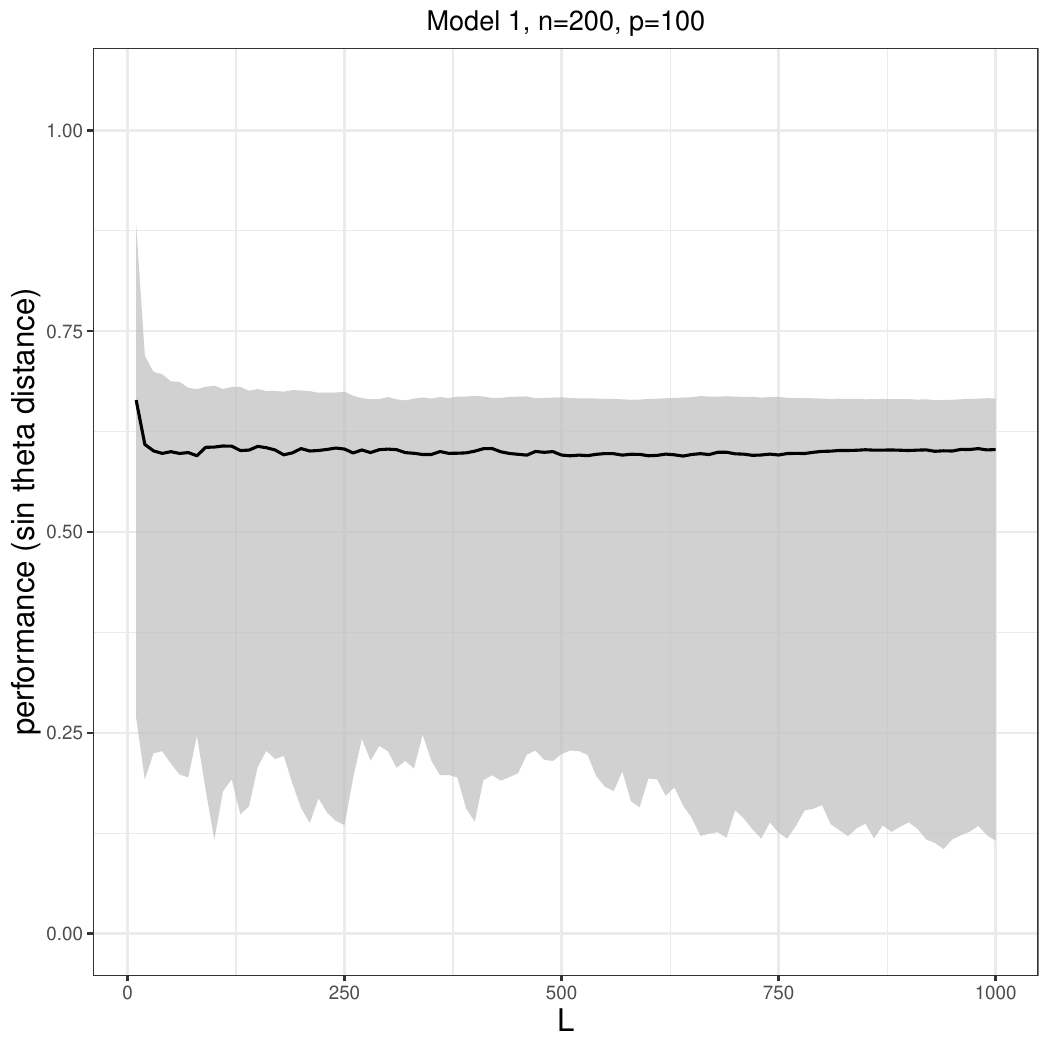} 
    \includegraphics[width=0.3\textwidth]{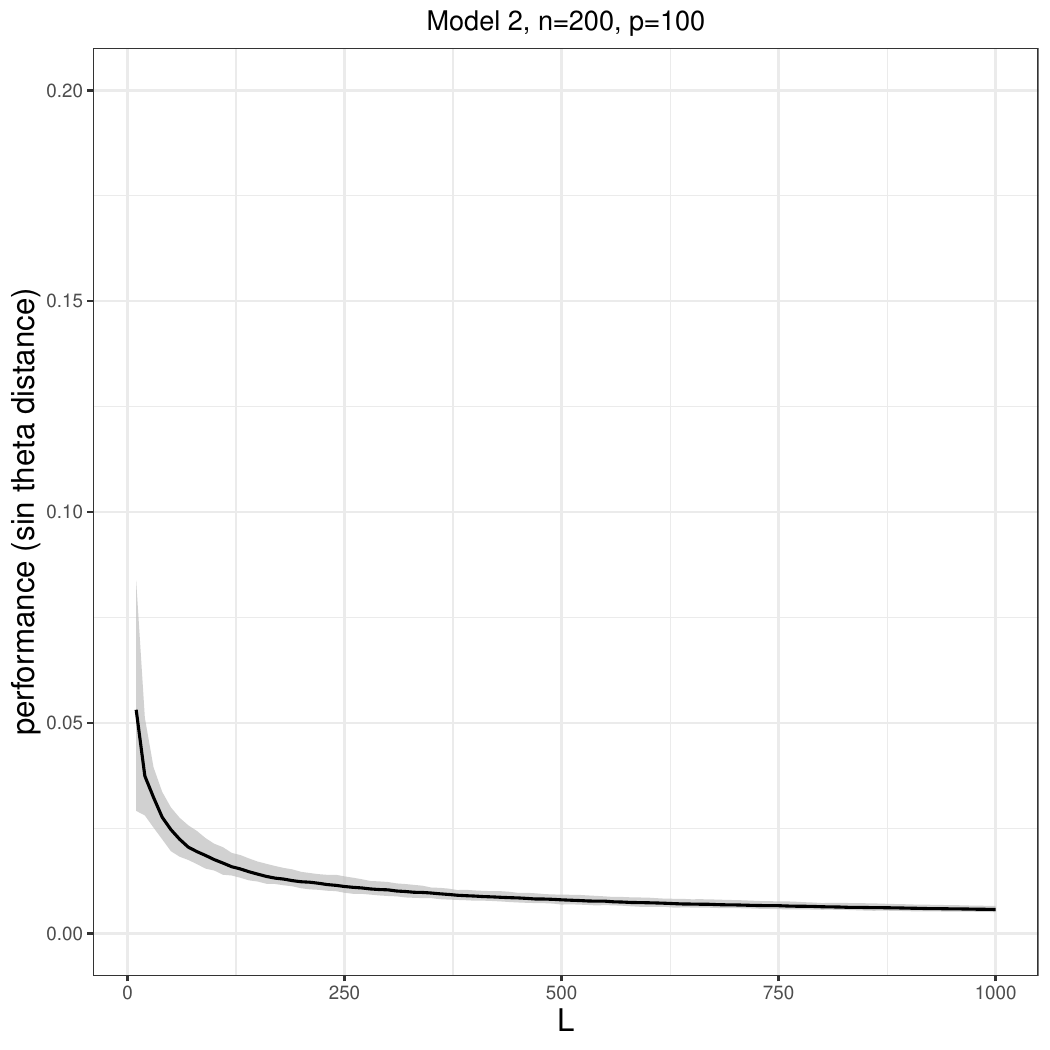}
   \includegraphics[width=0.3\textwidth]{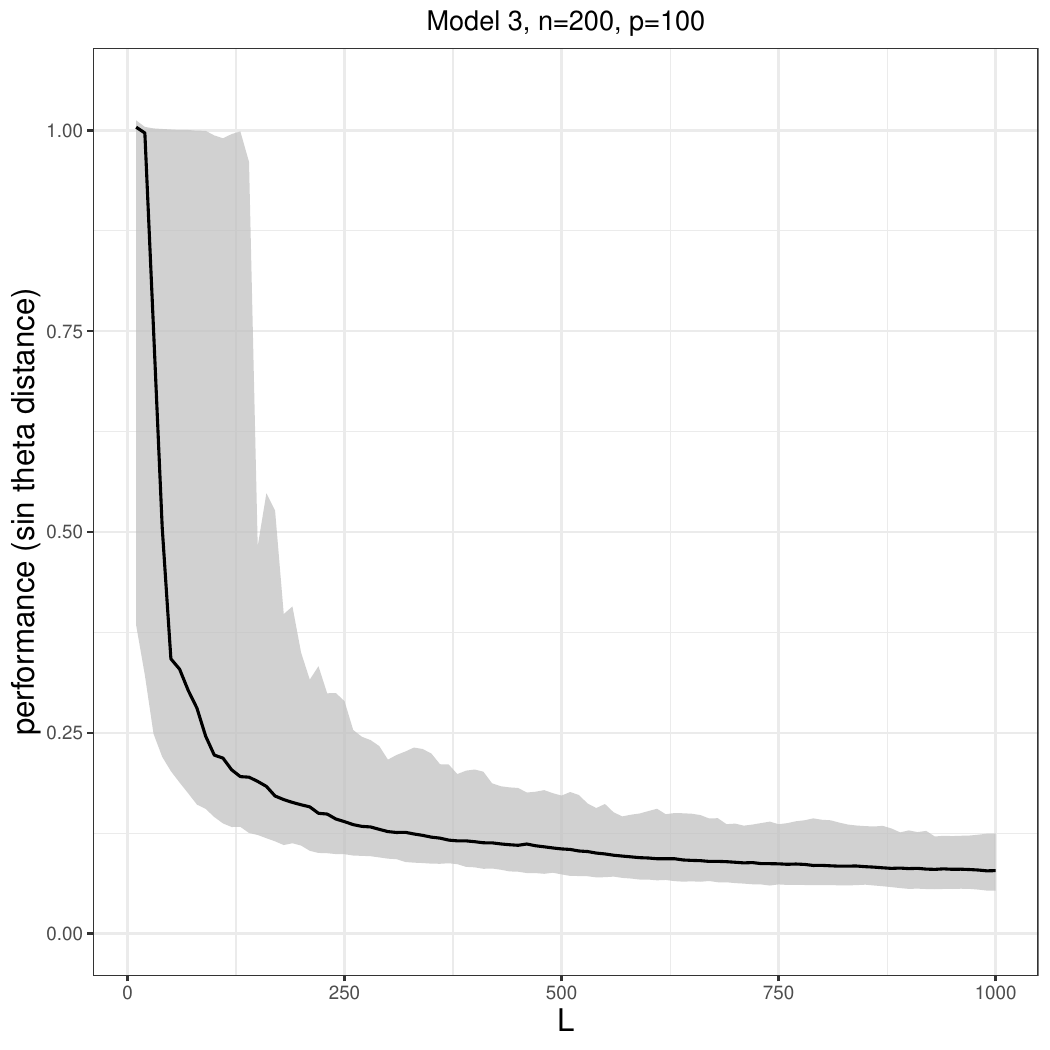}
    \caption{{\small The median (solid line) of the $\sin$-theta distance (see~\eqref{eq:dist}) between $U_{1:d_0}$ from the output of Algorithm~\ref{alg:RPEDR estimator} and the true direction $A_0$ over 100 repeats of the simulation as $L$ varies. We present the results for Model~1a (left), Model 2 (middle) and Model 3 (right), for $p=20$ (top row) and $p = 100$ (bottom row). The grey shaded region shows area between the 5\%-95\% quantiles for each value of $L$. \label{fig:increasingL}}}
\end{figure}

The empirical results in Figure~\ref{fig:increasingL} confirm that in these examples, we can expect the performance of our algorithm to improve as $L$ increases as indicated by Theorem~\ref{thm:choiceofL}.  In the results for Model 1, we observe a larger variance across different repeats of the experiment -- this is due in part to the fact that, in this example, our algorithm captures only part of the true signal in the leading order singular vector in $U$.  As discussed later in Section~\ref{sec:numerical}, our method performs well here in the sense that nearly all of the signal is found in the first two singular vectors (but this is not captured by the sin theta distance measure presented here). In fact, an extension of our method given in Section~\ref{sec:doubleRPEDR} is able to find the majority of the signal in a one dimensional projection using a double application of our procedure.

Turning now to the choice of $M$ and $d$, which together determine the total number of projection directions considered within each of the $L$ groups in Algorithm~\ref{alg:RPEDR estimator}. In slight contrast with the choice of $L$, there is a trade-off when choosing $d$ and $M$.  For instance, if $M$ is taken to be small (e.g.~less than 20 when $p = 20$) then it's unlikely that we have a good projection within a block of size $M$. On the other hand, if $M$ is taken to be very large (e.g.~greater than 1000 when $p =20$) we may start to overfit, although this effect is often negligible due to the data resampling strategy taken in Algorithm~\ref{alg:RPEDR estimator}. In our experiments we see that the performance of our algorithm levels off as opposed to deteriorating as $M$ increases.  Regarding the projection dimension $d$, if $d$ is taken to be too small, in particular less than (the unknown) $d_0$, then we will always miss one or more projection directions, as our algorithm will consistently choose only a more dominant dimension reduction direction. However, we aim to keep $d$ to be relatively small both for computational considerations and the fact that the base regression methods may suffer from the curse of dimensionality.   

To explore these aspects in practice, we repeat the experiments presented in Figure~\ref{fig:increasingL}, but now with $L$ fixed at the default recommendation of $200$, and vary $d \in \{1,2,5,10,20\}$ and $M \in [1000]$.  The results are presented in Figure~\ref{fig:choiceofMd}.  When $p=20$, we see that taking $d = 1$ is not effective across all three models considered, and there is an advantage in taking $d > d_0$.  To elucidate this point, consider an example where $d_0 = 1$. Setting $d=1$ requires one of the $M$ projections in each group to be relatively highly correlated with the true population direction, which is  rare unless $M$ is very large. With a larger $d$, it is more likely that we find a projection with some signal, which then leads to good performance in the final aggregation step of our algorithm.  However, if $d$ is too large, the base method may start to overfit, as seen in, for example, Model 1 when $p=20$ and $d = 10$.   

\begin{figure}[!ht]
    \centering
    \includegraphics[width=0.32\textwidth]{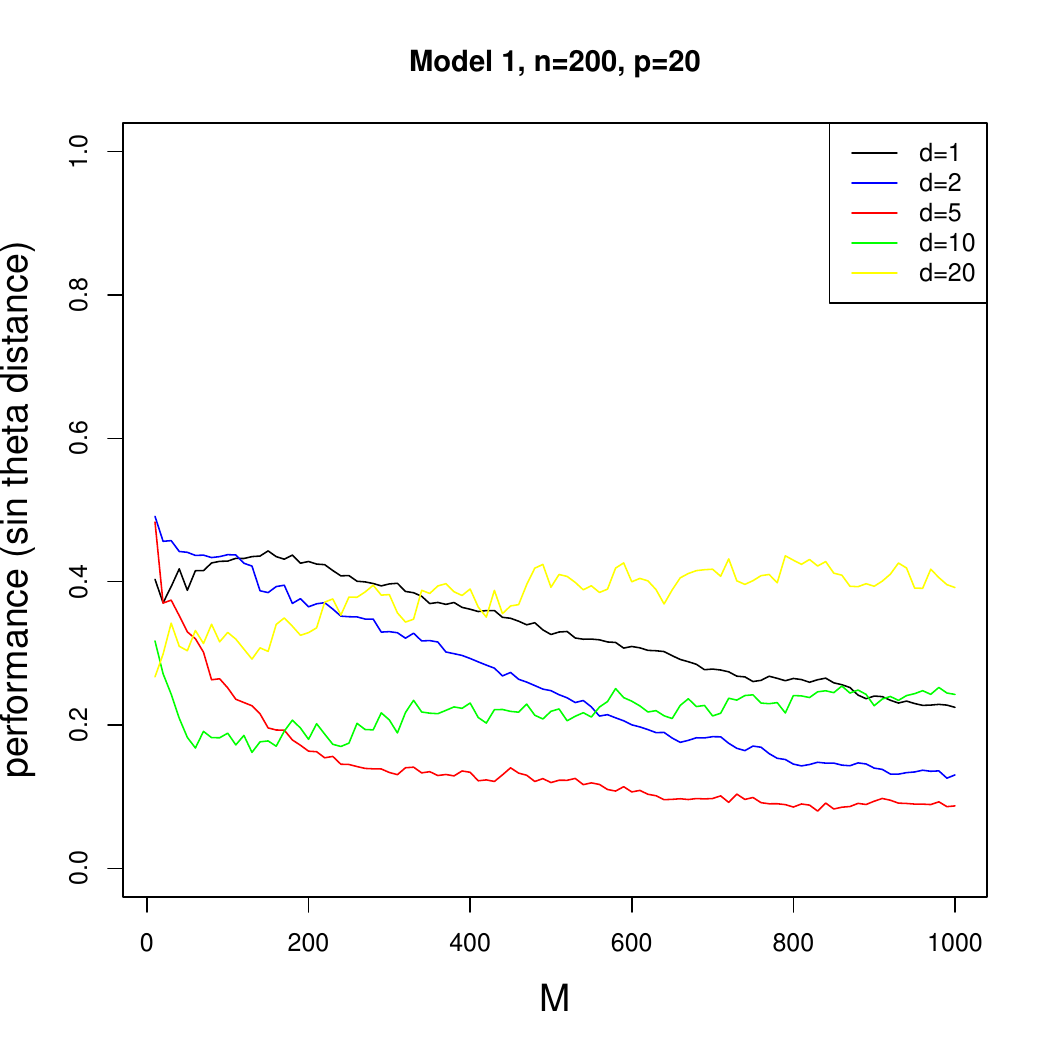}
        \includegraphics[width=0.32\textwidth]{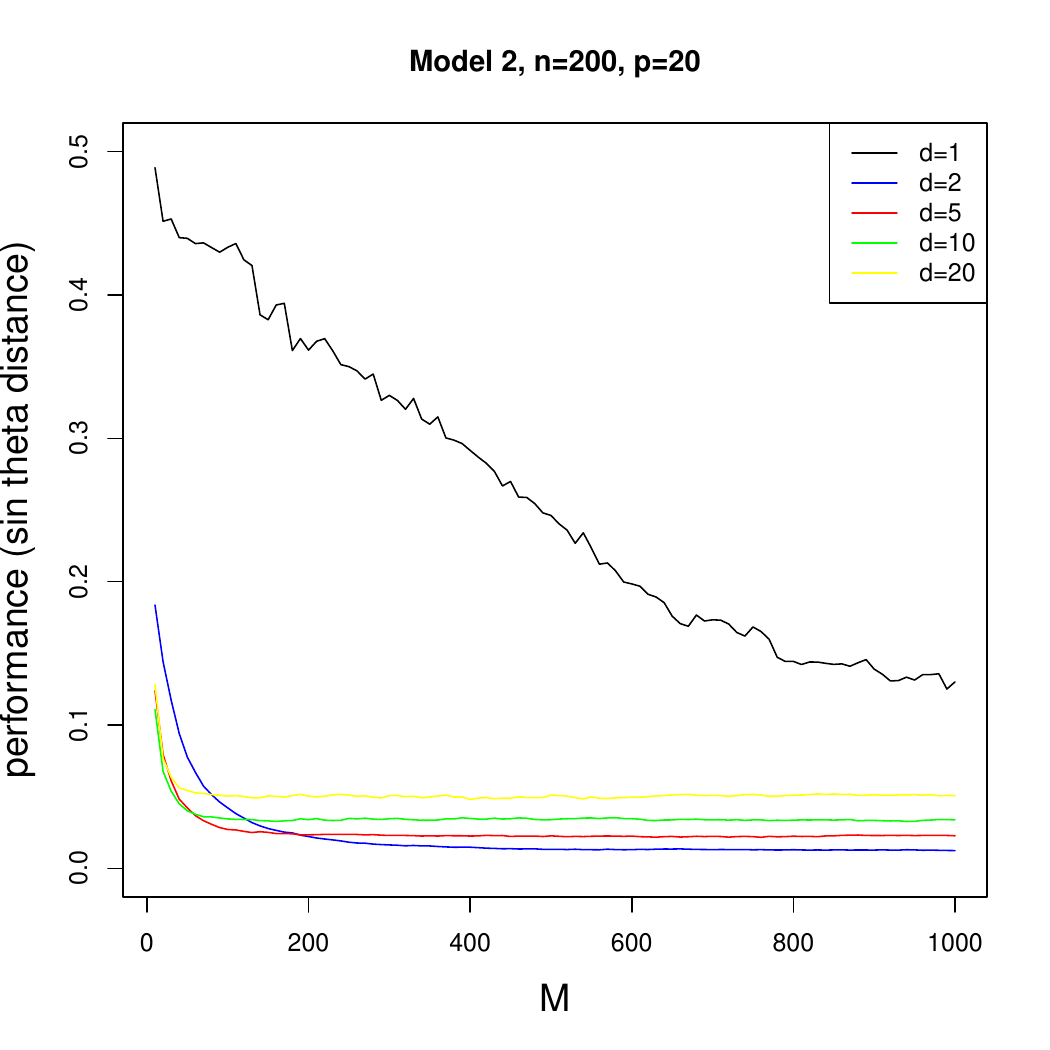}
   \includegraphics[width=0.32\textwidth]{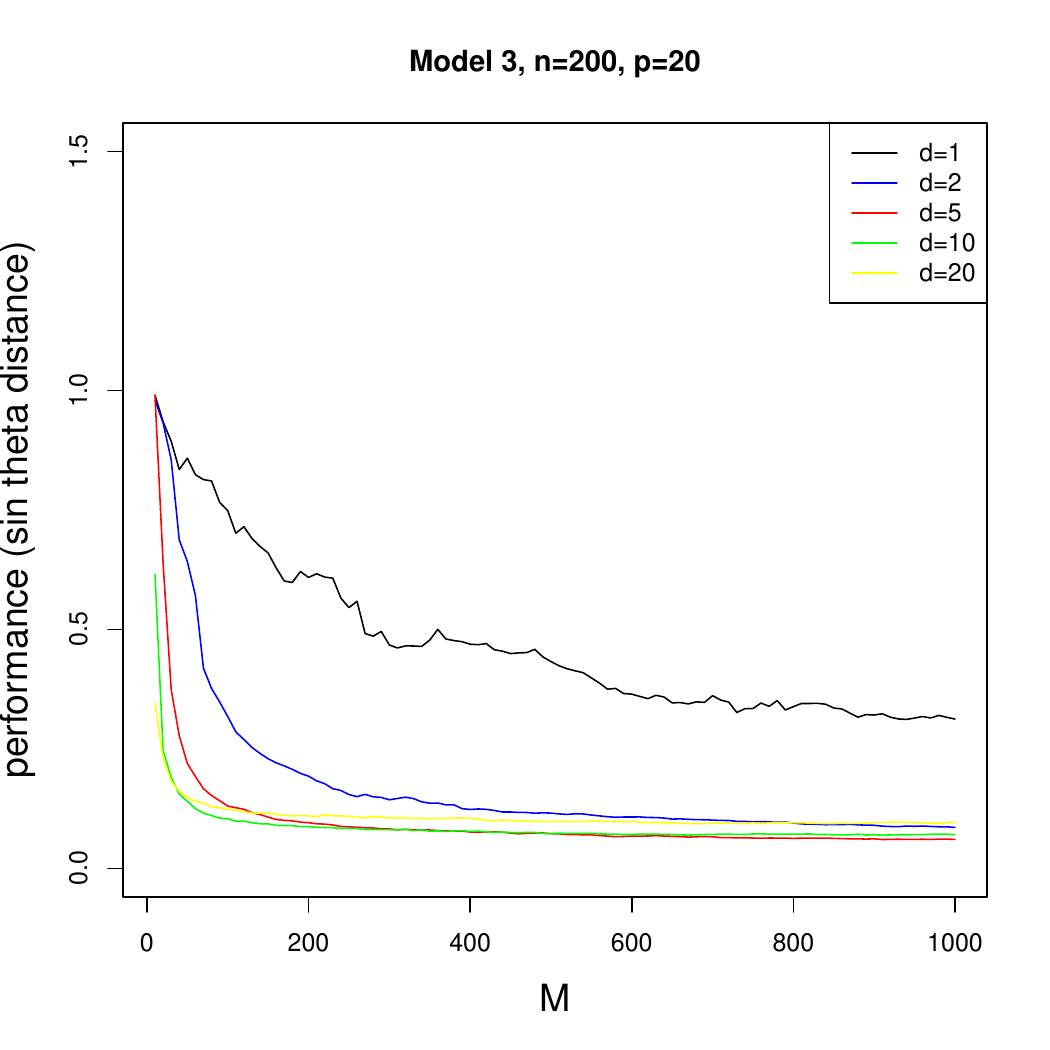}

    \includegraphics[width=0.32\textwidth]{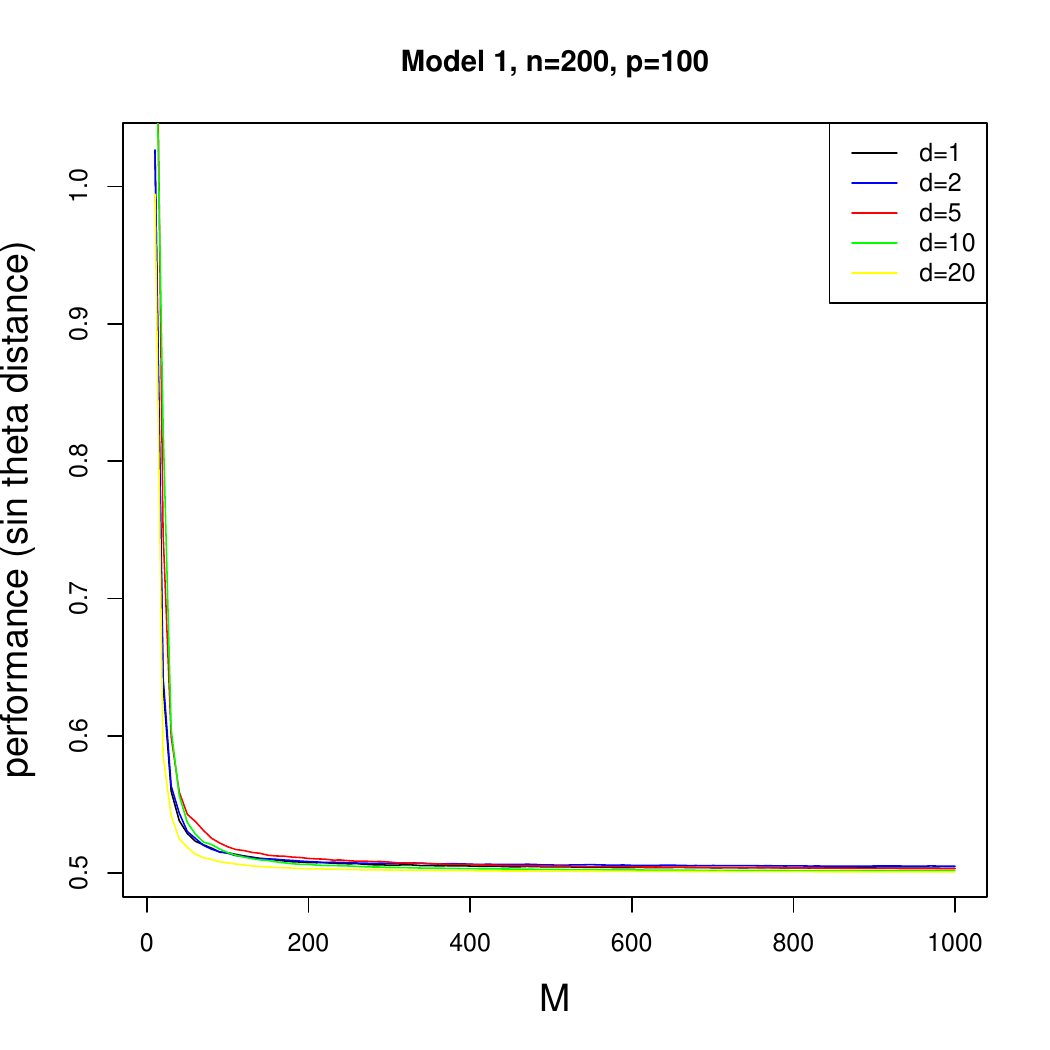}
    \includegraphics[width=0.32\textwidth]{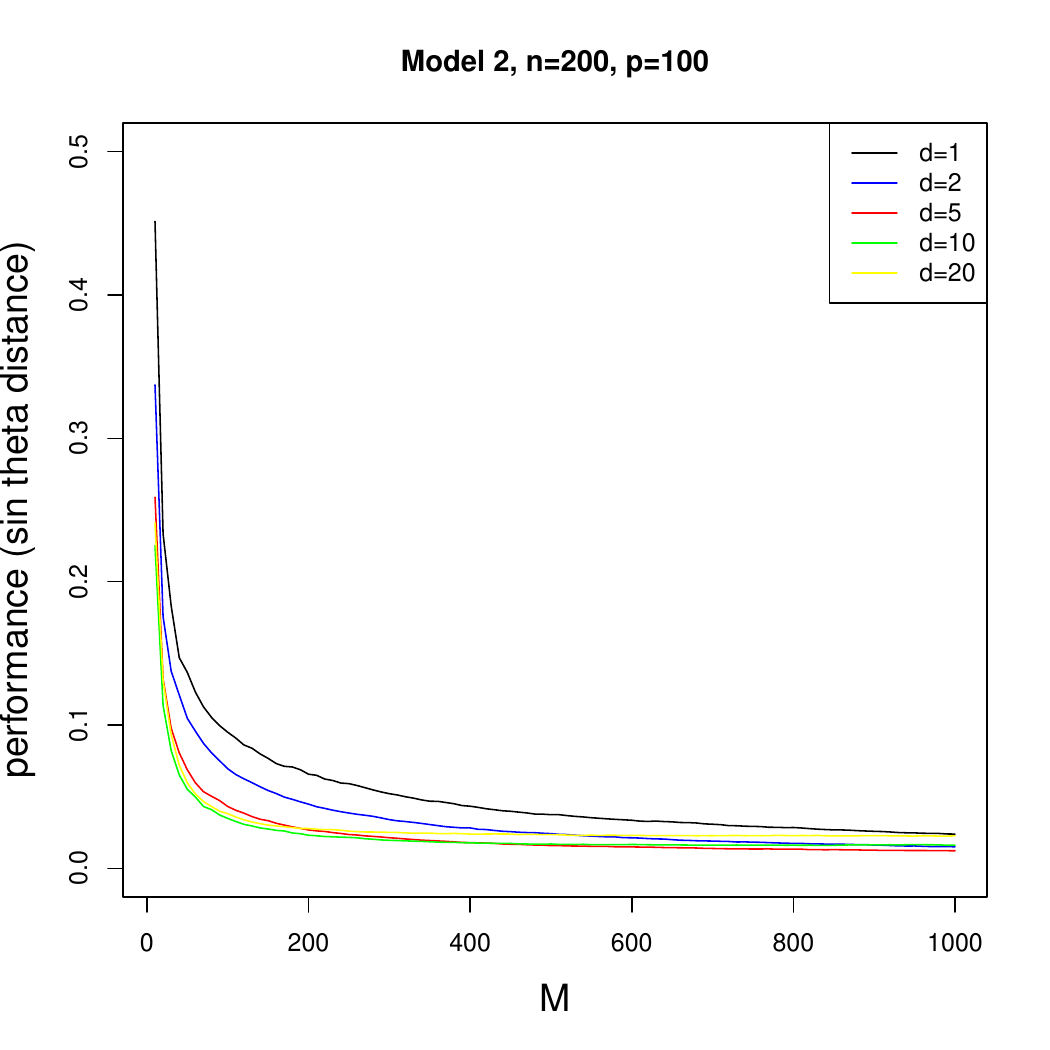}
   \includegraphics[width=0.32\textwidth]{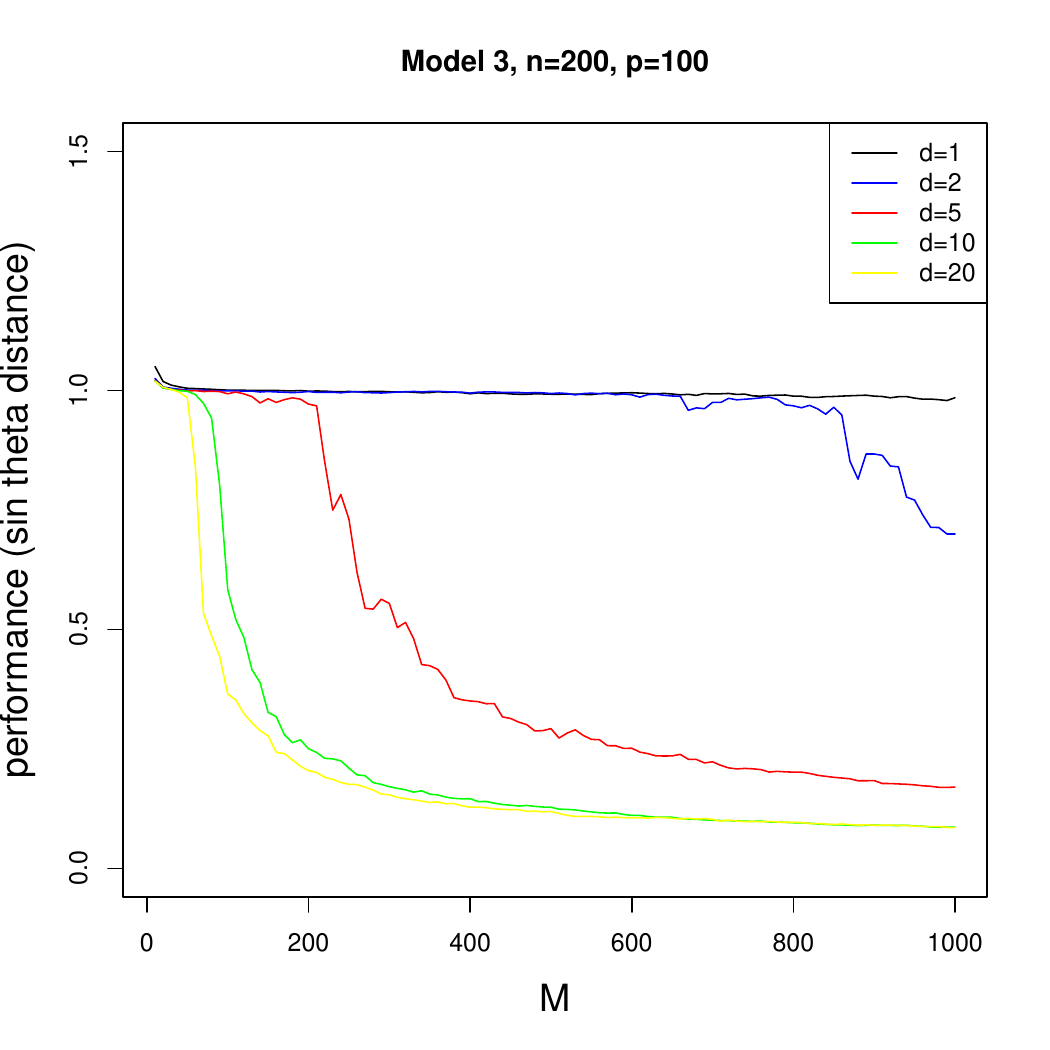}
    \caption{{\small The median of the $\sin$-theta distance (see~\eqref{eq:dist}) between $U_{1:d_0}$ from the output of Algorithm~\ref{alg:RPEDR estimator} and the true direction $A_0$ over 100 repeats of the simulation as $M$ varies from 10 to 2000, with $d=1$ (black), $d=2$ (blue), $d=5$ (red), $d=10$ (green) and $d = 20$ (yellow).  We present the results for Model~1a (left), Model 2 (middle) and Model 3 (right), for $p=20$ (top row) and $p = 100$ (bottom row). }\label{fig:choiceofMd}}
\end{figure}

Regarding $M$, we see that when $p = 20$ and $d = 5$, the performance of our proposal improves rapidly as $M$ increases, but then levels off for $M$ between 100 and 200. When $p = 100$ in Model~1, $M$ needs to be very large in order to capture all signal in the first singular vector of $U$. In fact, nearly all of the signal is contained in the first two singular vectors of $U$, even for moderately small values of $M$.  For Model~3, with $p=100$, the algorithm finds one of the $d_0 = 2$ directions for small values of $M$ (for every $d$), but larger $d$ and $M$ are needed to capture the second direction. This explains the two plateaux seen in the curves in bottom right plot.  Overall, we recommend taking $d = \lceil \sqrt{p} \rceil$ and $M = 10p$ as a sensible default choices.

\subsection{Choice of \texorpdfstring{$\hat{d}_0$}{d0 hat}}
\label{sec:choiceofd0}
In some scenarios, the user may have a predetermined lower dimension, $\hat{d}_0$ say, in mind. In which cases, as mentioned above, they should simply output the first $\hat{d}_0$ columns of $U$ from the output of Algorithm~\ref{alg:RPEDR estimator} and set $\hat{A}_0 = (U_1, \ldots, U_{\hat{d}_0})$.  Indeed, the singular vectors of $U$, in some sense, represent the most frequently chosen dimensions (in decreasing order of importance) across the groups of projections in the algorithm. 

In situations where the desired dimension is unknown, we propose to use the information contained in the singular values $D$ from the output of Algorithm~\ref{alg:RPEDR estimator}.  The main idea is to compare these singular values with median of those obtained from applying the corresponding procedure to random projections with the same distribution (i.e.~without selection within each group of size $M$) -- see Algorithm~\ref{alg:DimensionEstimator} below. 

\begin{algorithm}[!ht]
\caption{EDR Dimension Estimator}
\label{alg:DimensionEstimator}
\textbf{Input}: Projection dimension $d \in [p]$, projection distribution $Q$ on $\mathcal{Q}_{p\times d}$, number of groups $L$, and the vector $\mathrm{diag}(D) = (D_1, \ldots, D_p)$ of singular values from the output of Algorithm~\ref{alg:RPEDR estimator}, as well as a random projection resample size $R\in \mathbb{N}$. \\

Let $\mathbf{P}_{1,1},...,\mathbf{P}_{L,R} \stackrel{\mathrm{i.i.d}}{\sim} Q$.

Set $\hat{\Pi}^{(r)} = \frac{1}{L} \sum_{\ell=1}^{L} \mathbf{P}_{\ell, r} \mathbf{P}_{\ell, r}^T$, for $r \in [R]$.

Calculate the singular value decomposition of $\hat{\Pi}^{(r)} = U^{(r)} D^{(r)} (U^{(r)})^T$ and let $\mathrm{diag}(D^{(r)}) =  (D_1^{(r)}, \ldots,  D_p^{(r)})$. 

Let $T_j = \frac{1}{R} \sum_{r=1}^R \mathbbm{1}_{\{D^{(r)}_j \leq D_j\}}$

\textbf{Output}: The dimension $\hat{d}_0 := \max\{j \in [p] : T_{\ell}  > 1/2,   \text{ for all } \ell \leq j\}$.\footnote{Here the maximum over the emptyset is taken to be zero, with the interpretation being that this is evidence that there is no correlation between $X$ and $Y$.}  
\end{algorithm}

 The  motivation behind our approach is that, for a particular direction to be selected, there should be evidence that it contains more signal than expected at random. For the first singular value, this asks that there is a dimension which is more likely to be selected by our algorithm than what happens at random.  For later columns of $U$, since our projections are rescaled to have trace $d$, we have 
\[
\sum_{j = 1}^p D_j = \mathrm{trace}(\hat{\Pi}) = d.
\] 
Thus there is a natural penalty to selecting more dimensions in cases where there are relatively large singular values earlier on in $D$.

\begin{figure}[!ht]
    \centering
    \includegraphics[width=0.44\textwidth]{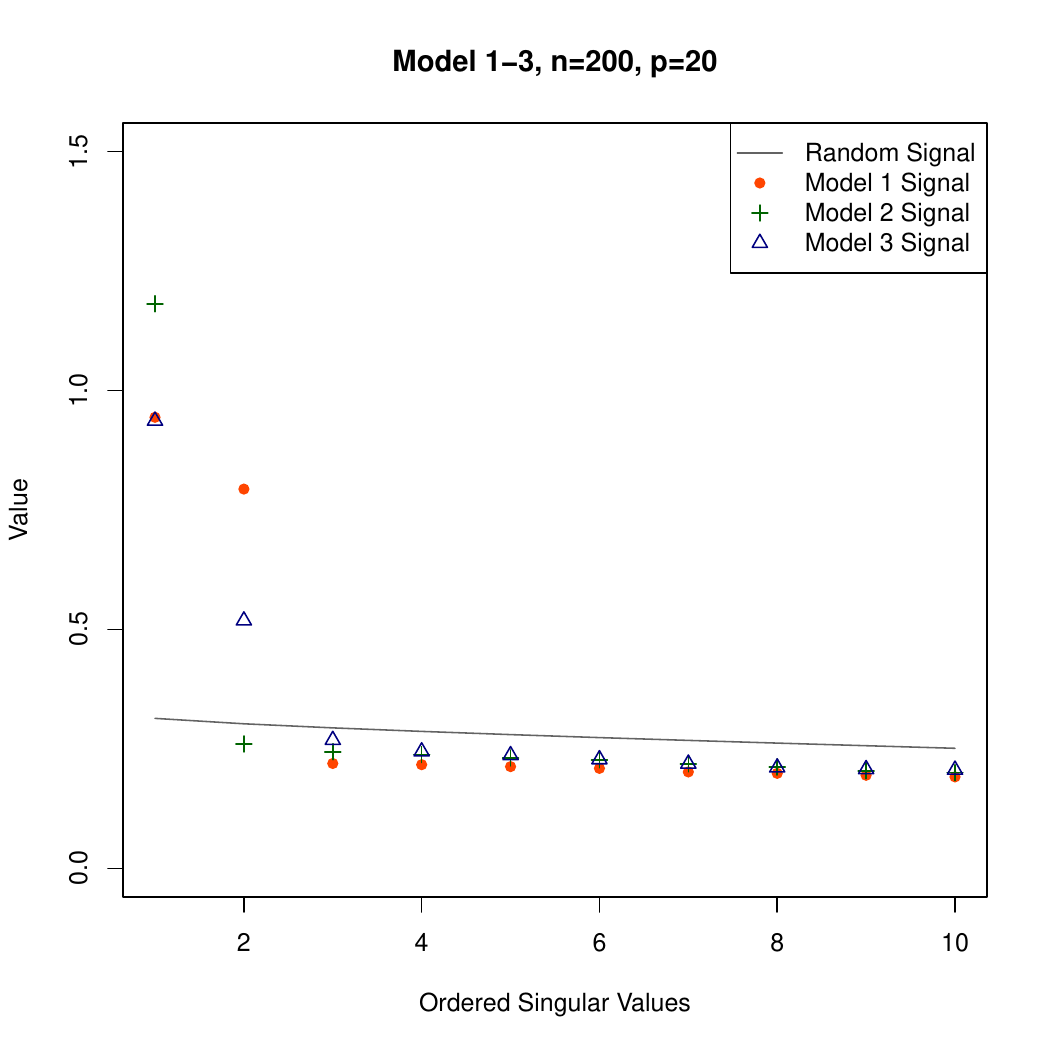}
    \includegraphics[width=0.44\textwidth]{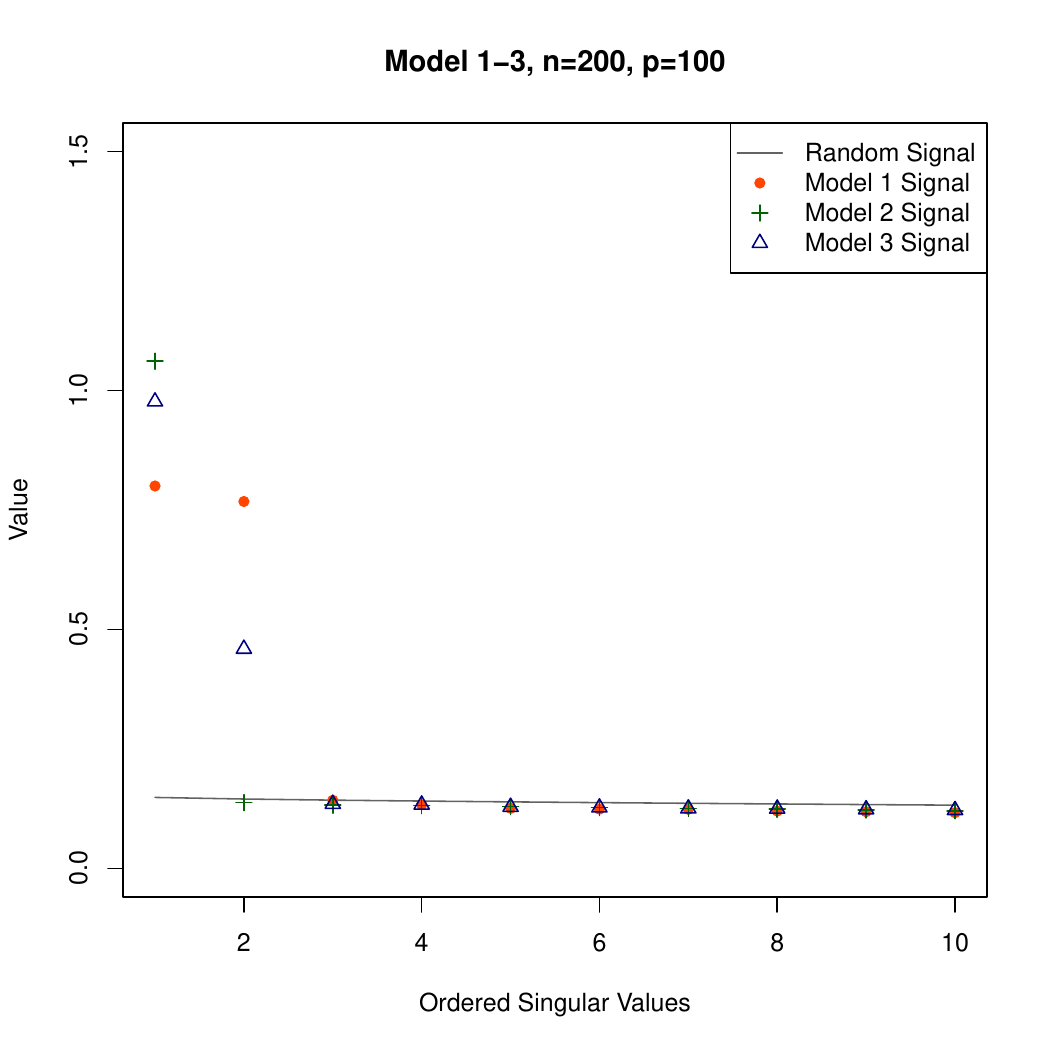}

    \caption{\small Comparison between the vector $D$ of singular values from the output of Algorithm~\ref{alg:RPEDR estimator} for Models 1a, 2 and 3, and the median of corresponding $D^{(r)}$ in Algorithm~\ref{alg:DimensionEstimator} for $p=20$ (left) and $p =100$ (right).  We present only the first 10 singular values in each case. \label{fig:Alg2Output}}
\end{figure}

Figure~\ref{fig:Alg2Output} demonstrates our procedure for selecting $\hat{d}_0$ in Models 1a, 2 and 3.  For Model~1a, both when $p=20$ and $p=100$, the output suggests retaining the first two singular vectors.  This is in part because, as also seen in Figure~\ref{fig:Alg1demo}, the single signal $\frac{1}{\sqrt{2}} (X_1+X_2)$ is often split into two. The reason in part due to the fact, even though the projection $(1/\sqrt{2}, -1/\sqrt{2}, 0, \ldots, 0)^T$ is orthogonal to the true signal direction, the response $Y$ is not (marginally) independent of $X_1 - X_2$ and therefore our algorithm suggests selecting (approximately) this direction.  We demonstrate in Section~\ref{sec:doubleRPEDR} below how applying our algorithm twice in this cases can accurately estimate only the true one-dimensional signal.   For the other two models, there is a clear indication that we should take $\hat{d}_0 = d_0$, namely $\hat{d}_0 = 1$ in Model~2 and $\hat{d}_0 = 2$ in Model~3 and in fact we recover most of the signal here; for example, for Model~2 when $p=20$ we have $|U_1^T e_3| > 0.999$ and for Model~3, we have $\min\{|U_1^Te_6|,|U_2^Te_7|\} \geq 0.995$.  We delay further evaluation of the this approach to selecting $\hat{d}_0$ to our full simulation study in Section~\ref{sec:numerical}.

\subsection{Computational considerations}
\label{sec:comp} 
The majority of the computational cost of our main algorithm arises from the fact that the base regression method is applied $L \cdot M$ times. Moreover, the cost of applying the base method generally depends on the projection dimension $d$ and the sample size $n_1$.  If the ambient dimension $p$ is relatively modest (i.e.~in the hundreds), the cost of projecting the data and performing the final singular value decomposition is relatively low.  Notably, the $L \cdot M$ applications of the base regression method can be parallelised, meaning that the overall computational burden can be significantly reduced if high-performance computing resources are available. 

Assessing the full trade off in computational resources and statistical performance in selecting the inputs to Algorithm~\ref{alg:RPEDR estimator} discussed in this section is beyond the scope of the work here. Nevertheless in Section~\ref{sec:runtime} we show that the runtime of our algorithm, while slower than the competitors considered, is manageable with our recommended settings even when $p$ and $n$ are large. 

\section{Double random projection ensemble dimension reduction}
\label{sec:doubleRPEDR}
In this section, we explore how applying our main algorithm a second time may help to further reduce the dimension in cases where the used is not satisfied with the suggested dimension from an initial application of Algorithms~\ref{alg:RPEDR estimator} and~\ref{alg:DimensionEstimator}. For example, we may have knowledge of the true dimension $d_0$, but Algorithm~\ref{alg:DimensionEstimator} suggests  $\hat{d}_0 > d_0$. In other cases, we may not know the true $d_0$ precisely, but may still have a specific target dimension in mind. 

We will demonstrate that this double application strategy is particularly effective when the number of  covariates contributing to $\mathcal{S}(A_0)$ outnumber the true dimension $d_0$. In particular, in such situations, a single application of our algorithm may favour the coordinate axes of the relevant covariates, but fail to combine these effectively into the smallest possible number of dimension reduction directions. One perspective of this approach, is that the first application of our algorithm acts as a screening step to select potentially relevant directions, while the second application then combines these directions to yield a lower dimensional projection that still retains a large amount of the signal.  The full procedure is presented in Algorithm~\ref{alg:doubleDR}. 

\begin{algorithm}[!ht]
\caption{Double RPE dimension reduction}
\label{alg:doubleDR}

\textbf{Input}: Data $((x_1,y_1),...,(x_n,y_n)) \in (\mathbb{R}^p \times \mathbb{R})^n$ and desired dimension $\check{d}_0 \in [p]$. \\

Let $U$ and $D$ be the output of Algorithm~\ref{alg:RPEDR estimator} applied to $(x_1,y_1),...,(x_n,y_n)$ with $L = 200$, $M=10p$, $d = \lceil \sqrt{p} \rceil$, $n_1 = \lceil 2n/3 \rceil$,  $Q = \frac{1}{2} Q_{\mathrm{C}}^{\otimes d} + \frac{1}{2} Q_{\mathrm{N}}^{\otimes d}$, and $\hat{g} = \hat{g}_{\mathrm{MARS}}$.
 \\

Let $\hat{d}_0$ be the output of Algorithm~\ref{alg:DimensionEstimator} applied to $D$,  with $L = 200$,  $d = \lceil \sqrt{p} \rceil$, $Q = \frac{1}{2} Q_{\mathrm{C}}^{\otimes d} + \frac{1}{2} Q_{\mathrm{N}}^{\otimes d}$. 

\eIf{$\hat{d}_0 > \check{d}_0$}{ 
Set $\hat{A}_0 := (U_1, \ldots, U_{\hat{d}_0})$ and $z_i := \hat{A}_0^T x_i$ for $i \in [n]$. 

Let $\check{U}$ and $\check{D}$ be the output of Algorithm~\ref{alg:RPEDR estimator} applied to $(z_1,y_1),...,(z_n,y_n)$ with $L = 200$, $M=10 \hat{d}_0$, $d = \lceil \sqrt{\hat{d}_0} \rceil$, $n_1 = \lceil 2n/3 \rceil$,  $Q = \frac{1}{2} Q_{\mathrm{C}}^{\otimes d} + \frac{1}{2} Q_{\mathrm{N}}^{\otimes d}$, and  $\hat{g} = \hat{g}_{\mathrm{MARS}}$.
 \\

Let $\check{A}_0 = \hat{A}_0 (\check{U}_1, \ldots, \check{U}_{\check{d}_0}) \in \mathbb{R}^{p \times \check{d}_0}$
}
{
Let $\check{A}_0 = (U_1, \ldots, U_{\check{d}_0})\in \mathbb{R}^{p \times \check{d}_0}$
}

\textbf{Output}: The projection $\check{A}_0$. 

\end{algorithm}

We demonstrate this procedure using Model 1a and a new model as follows: 
\begin{itemize} 
\item Model 4: Let $X = (X_1, \ldots, X_p)^T \sim N_p(0,I)$ and 
\[
 Y = \exp\Bigl( \frac{X_1 - X_2 + X_3}{3}\Bigr) + \epsilon
\] 
with $\epsilon \sim N(0,1/5)$ independent of $X$.  Here $A_0 = \frac{1} {\sqrt{3}}(1,-1,1,\mathbf{0}_{p-3})^T $ is $1$-dimensional. 
\end{itemize}

\begin{figure}[!ht]
    \setlength\tabcolsep{6pt}
    \adjustboxset{width=\linewidth, valign=c}
    \centering
    \begin{tabularx}{1.0\linewidth}{@{}
      l
      X @{\hspace{6pt}}
      X
    @{}} 
    \rotatebox[origin=c]{90}{\textbf{Model 1a}}
    & \includegraphics[scale=0.8]{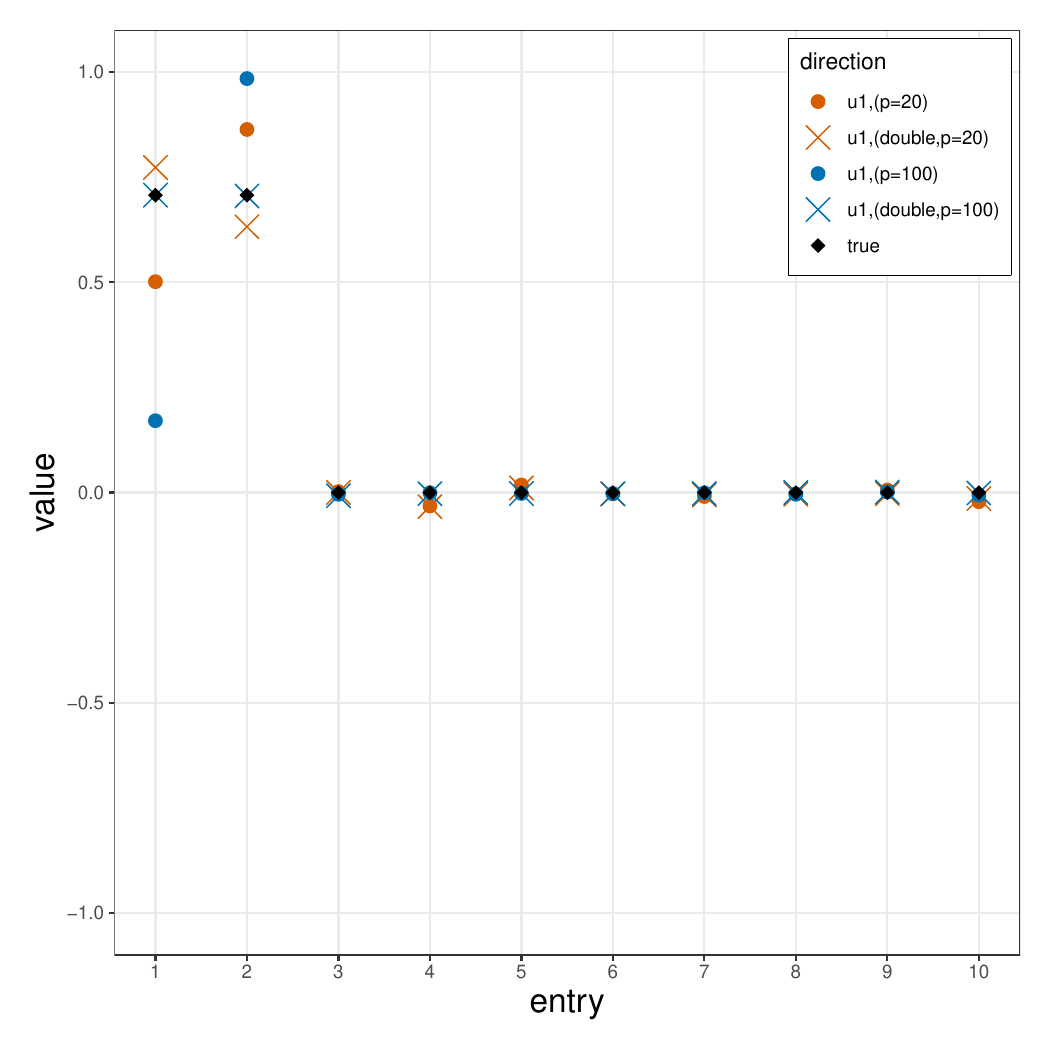}
    & \includegraphics[scale=0.8]{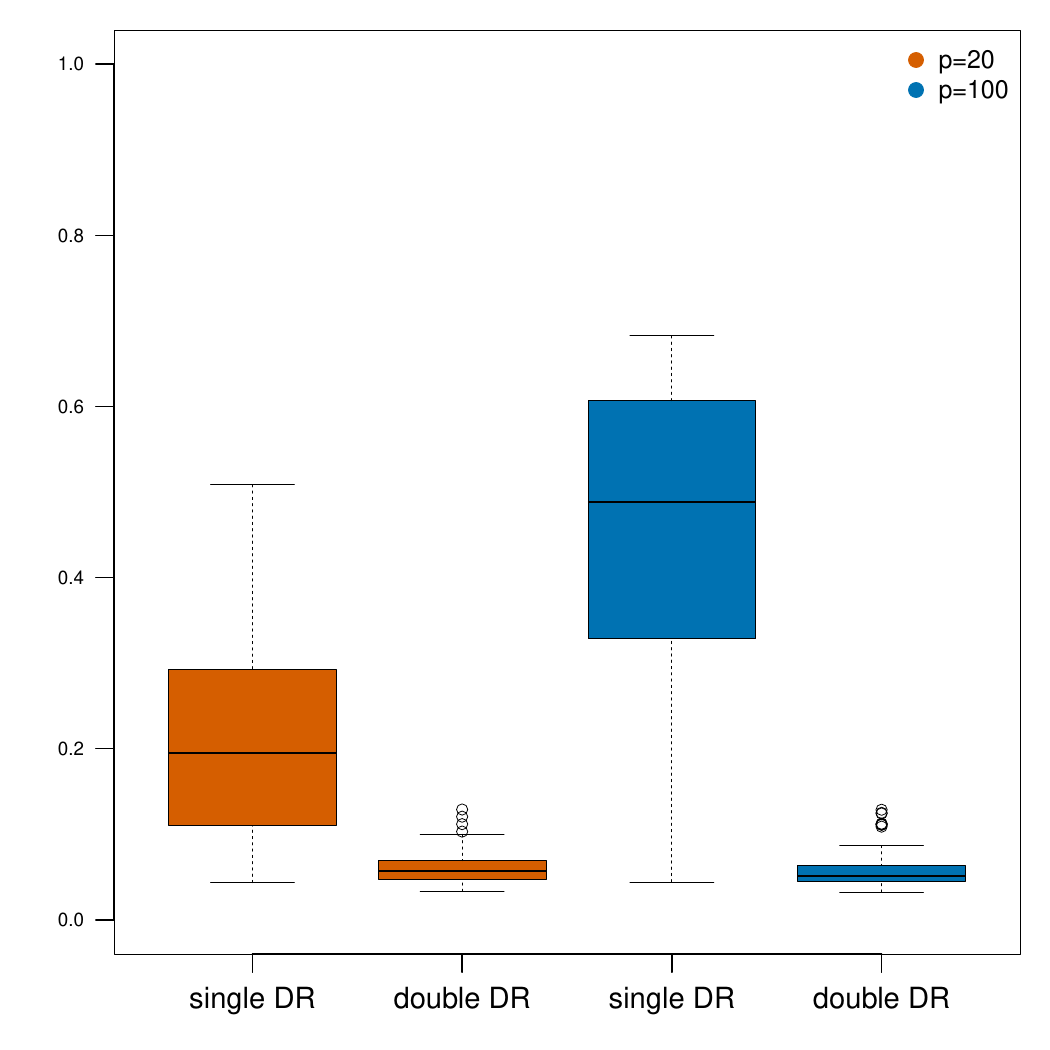} \\
    \rotatebox[origin=c]{90}{\textbf{Model 4}}
    & \includegraphics[scale=0.8]{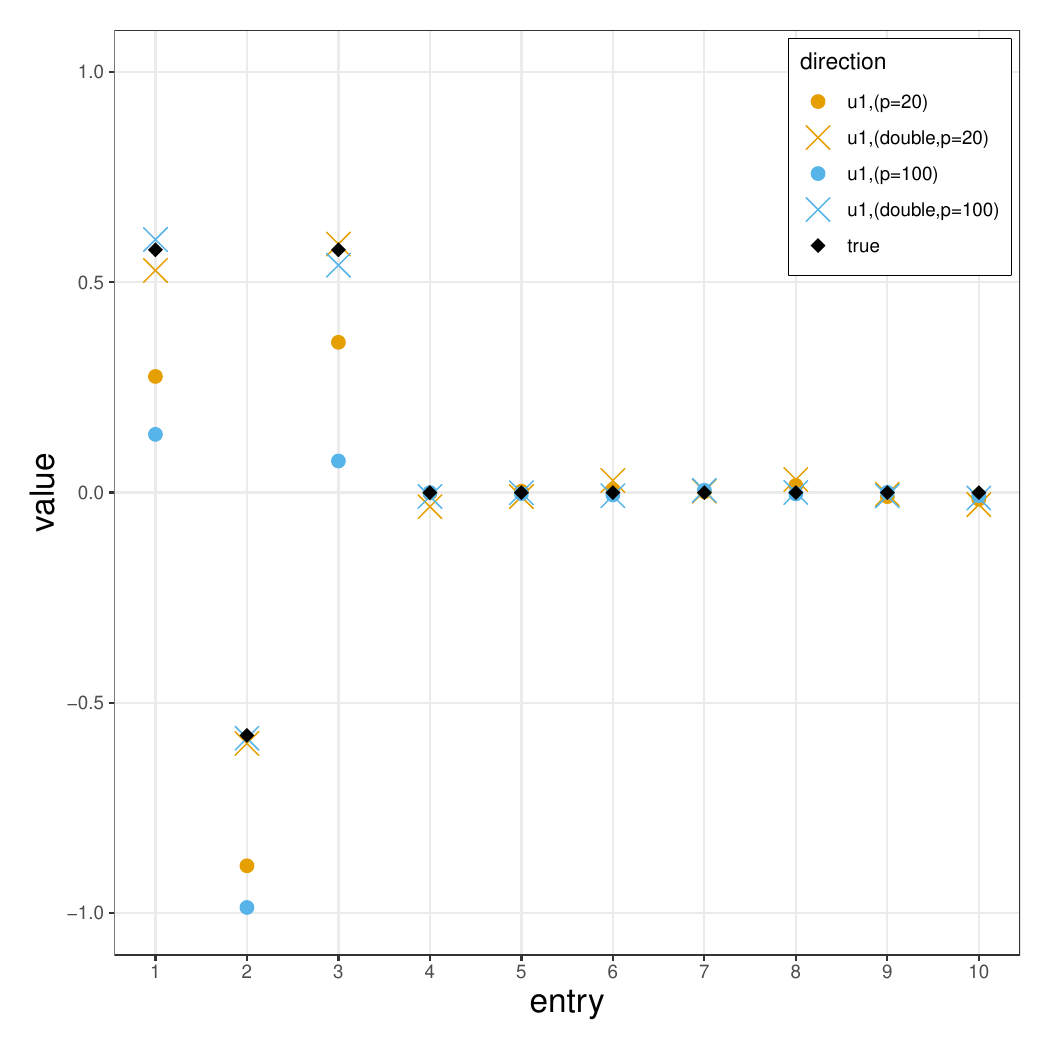}
    & \includegraphics[scale=0.8]{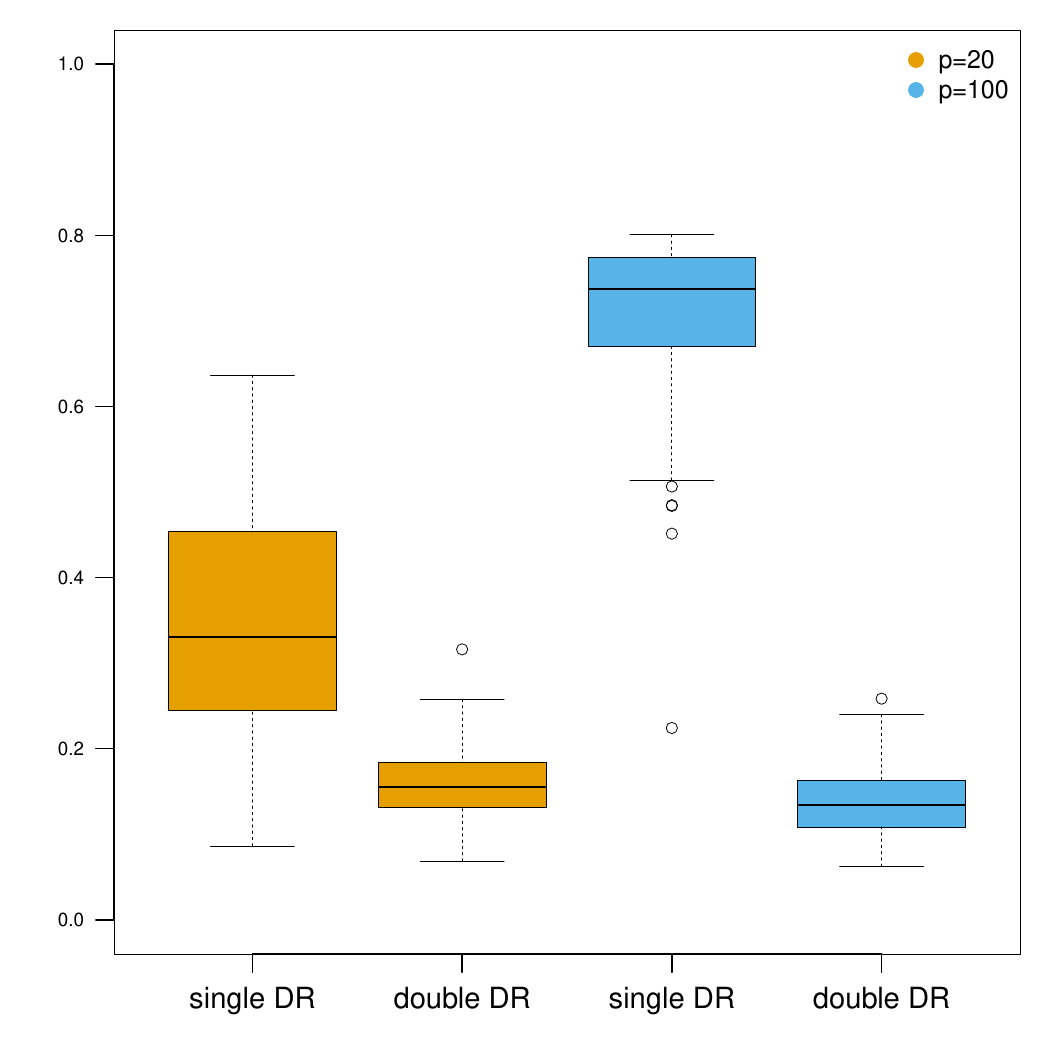}
    \end{tabularx}
    \caption{\small Demonstration of our Double RPE dimension reduction approach in Algorithm~\ref{alg:doubleDR}.   
We compare the results of a single application of our Algorithm~\ref{alg:RPEDR estimator} taking $\hat{d}_0 = 1$, with double application in Algorithm~\ref{alg:doubleDR} with $\check{d}_0 = 1$. We present the first 10 entries of the estimated $A_0$ for a single run of the experiment (left) alongside the sin-theta distance (right) over 100 repeats of the experiment. We have $p \in \{20,100\}$ and Model~1a (top) and Model~4 (bottom).}
\label{fig:Demo of DoubleDR}
\end{figure}

In both Models 1a and 4, the true projection $A_0$ is one-dimensional and therefore we seek to find the best one dimensional projection based on our algorithms.  Figure~\ref{fig:Demo of DoubleDR} shows the comparison between the leading singular vector from the output of Algorithm~\ref{alg:RPEDR estimator} and the output of Algorithm~\ref{alg:doubleDR} with default recommendation inputs and with $\check{d}_0 = 1$. More precisely, we apply Algorithm~\ref{alg:RPEDR estimator} with the recommended values from the previous section, that is $L = 200$, $M=10p$, $d = \lceil \sqrt{p} \rceil$, $n_1 = \lceil 2n/3 \rceil$,  $Q = \frac{1}{2} Q_{\mathrm{C}}^{\otimes d} + \frac{1}{2} Q_{\mathrm{N}}^{\otimes d}$ for the projection distribution, and $\hat{g} = \hat{g}_{\mathrm{MARS}}$.  In both models, a second application of our algorithm leads to significantly improved performance. In Model 1a, the first application of Algorithm~\ref{alg:RPEDR estimator} typically identifies that the first two variables are important, but splits this signal across the first two singular vectors. The second application then effectively combines these separate signals into a  better estimator of the one-dimensional $A_0$.  A similar scenario occurs in Model~4, though the improvement is slightly less pronounced.  

In cases where the initial application of Algorithms~\ref{alg:RPEDR estimator} and~\ref{alg:DimensionEstimator} correctly identify the true dimension $d_0$, then there is no benefit to be gained from applying Algorithm~\ref{alg:doubleDR}. However, as long as $\check{d}_0 \geq d_0$, there is typically no disadvantage either -- see, for example the results for Model~3 presented in Figure~\ref{fig:sin theta distance boxplots}.


\section{Numerical simulations}
\label{sec:numerical}

In this section, we present the results of a large simulation study comparing the performance of our proposal with several existing methods for estimating the central mean subspace $\mathcal{S}(A_0)$.  We evaluate the performance of three slightly different versions of our proposed method. The first variant applies Algorithm~\ref{alg:RPEDR estimator} on its own, and the second variant is the double dimension reduction approach presented in Algorithm~\ref{alg:doubleDR}. Both variants are suitable when the true projection dimension $d_0$ is known. The third variant, which combines Algorithm ~\ref{alg:RPEDR estimator} and~\ref{alg:DimensionEstimator}, is designed for situations where $d_0$ is unknown. For comparison, we use the default implementations of several existing methods, including SIR \citep{SIR}, pHd \citep{li1992principal}, MAVE \citep{MAVE},  DR \citep{li2007directional}, gKDR \citep{fukumizu2014gradient} and drMARS \citep{drMARS}.  
The models used in our experiments include Models 1a, 2, 3 and 4, introduced earlier in the paper, as well as five additional models as follows: let $X := (X_1, \ldots, X_p)^T \sim N_p(0,I)$ and $\epsilon \sim N(0,1/4)$, with $X$ and $\epsilon$ independent. The additional models are specified as:
\begin{itemize}
    \item Model 5: $Y = \frac{X_1+X_2 + 5e^{-2(X_1+X_2+X_3)^2}}{2} + \epsilon$, 
    \item Model 6: $Y=5 X_1 X_2 X_3 + \epsilon$
    \item Model 7: $Y=4(X_1-X_2+X_3)\sin\bigl(\frac{\pi}{2}(X_1+X_2)\bigr) + \epsilon$
    \item Model 8: $Y=X_1(X_1+X_2+1) + \epsilon$
    \item Model 9: $Y=10 \cos{(6X_1)} + e^{X_2+1} + \epsilon$
\end{itemize}
Models 5, 6 and 7 were used by \citep{drMARS} in their simulation study, and Model 8 was used in \citep{SIR,MAVE,drMARS}. 

In our full simulation study, the results of which are presented in Section~\ref{sec:fullsims} in the Appendix, we vary $n \in \{50, 200, 500\}$ and $p \in \{20, 50, 100\}$ for each model. In the main text,  we focus on the performance when $n= 200$ and $p = 50$, and note that the results are broadly similar across the different values of $n$ and $p$ considered.

\subsection{Dimension \texorpdfstring{$d_0$}{d0} known}
\label{sec:simsd0known}
We begin with the slightly simpler scenario where the dimension $d_0$ of the central mean subspace $\mathcal{S}(A_0)$ is known.  In this case, our first proposal, denoted as \textbf{RPE} in Figure~\ref{fig:sin theta distance boxplots} below, sets $\hat{A}_0 = (U_1,\ldots, U_{d_0})$, where $U = (U_1,\ldots, U_p)$ is the output of Algorithm~\ref{alg:RPEDR estimator} with the default inputs recommended in Section~\ref{sec:practicalconsiderations}. In particular, we set $d = \lceil p^{1/2} \rceil$, $L = 200$, $M = 10p$, $n_1 = \lceil 2n/3 \rceil$, $Q = \frac{1}{2} Q_{\mathrm{C}}^{\otimes d} + \frac{1}{2} Q_{\mathrm{N}}^{\otimes d}$, and $\hat{g} = \hat{g}_{\mathrm{MARS}}$.  We also present the performance of Algorithm~\ref{alg:doubleDR} with $\check{d}_0 = d_0$, which is denoted \textbf{RPE2} in Figure~\ref{fig:sin theta distance boxplots}.  

For the competing methods, we use the \texttt{R} packages \texttt{dr} \citep{weisberg2002dr} for SIR and pHd, \texttt{MAVE} \citep{yingcun2021mave} for MAVE, and the relevant code available via gitHub for DR (\url{https://github.com/JSongLab/SDR_HC}), gKDR and drMARS (\url{https://github.com/liuyu-star/drMARS}). In each case, we use the corresponding default recommendations to select appropriate tuning parameters.

\begin{figure}[!ht]
    \centering
    \includegraphics[width=0.32\textwidth]{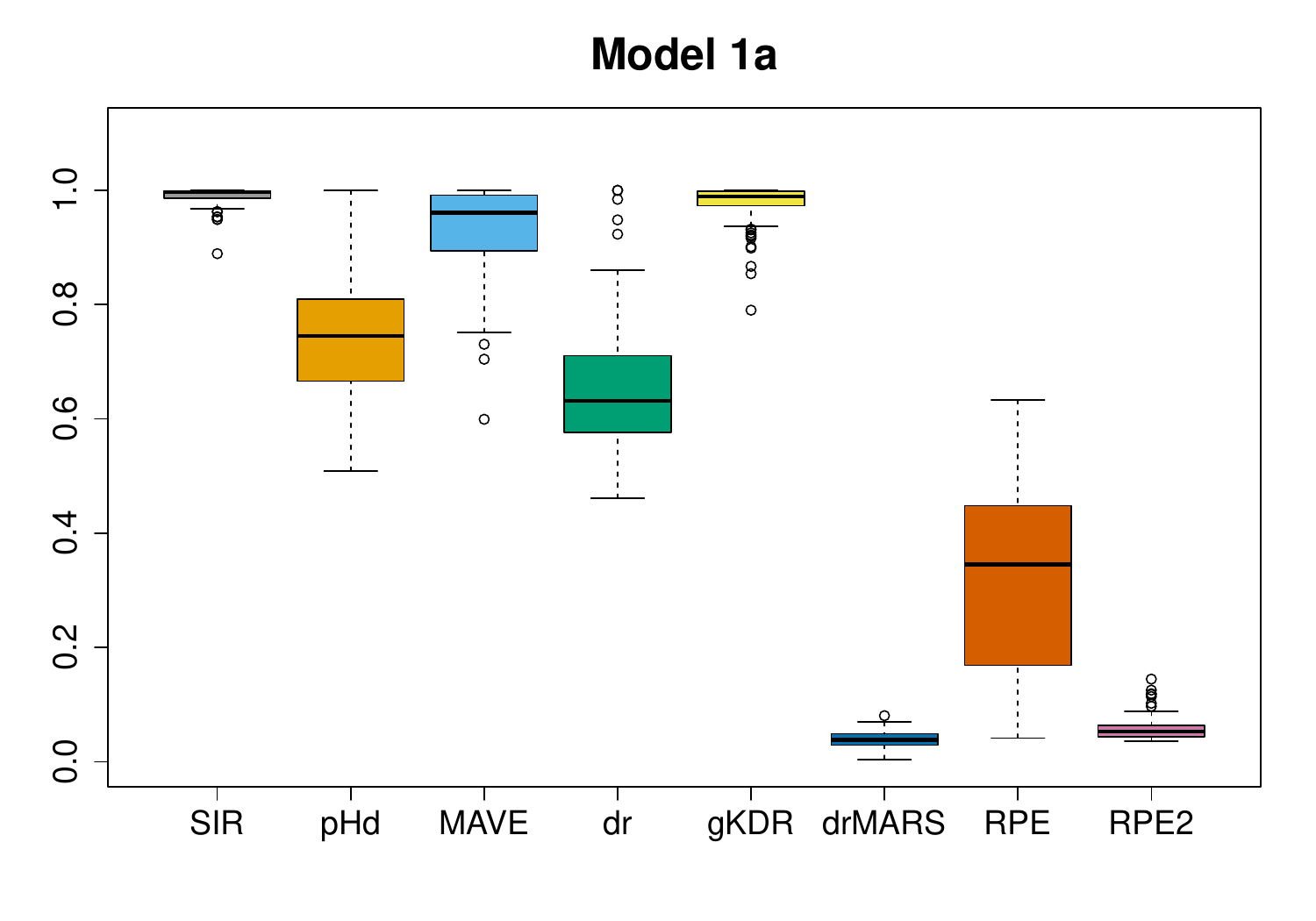}
    \includegraphics[width=0.32\textwidth]{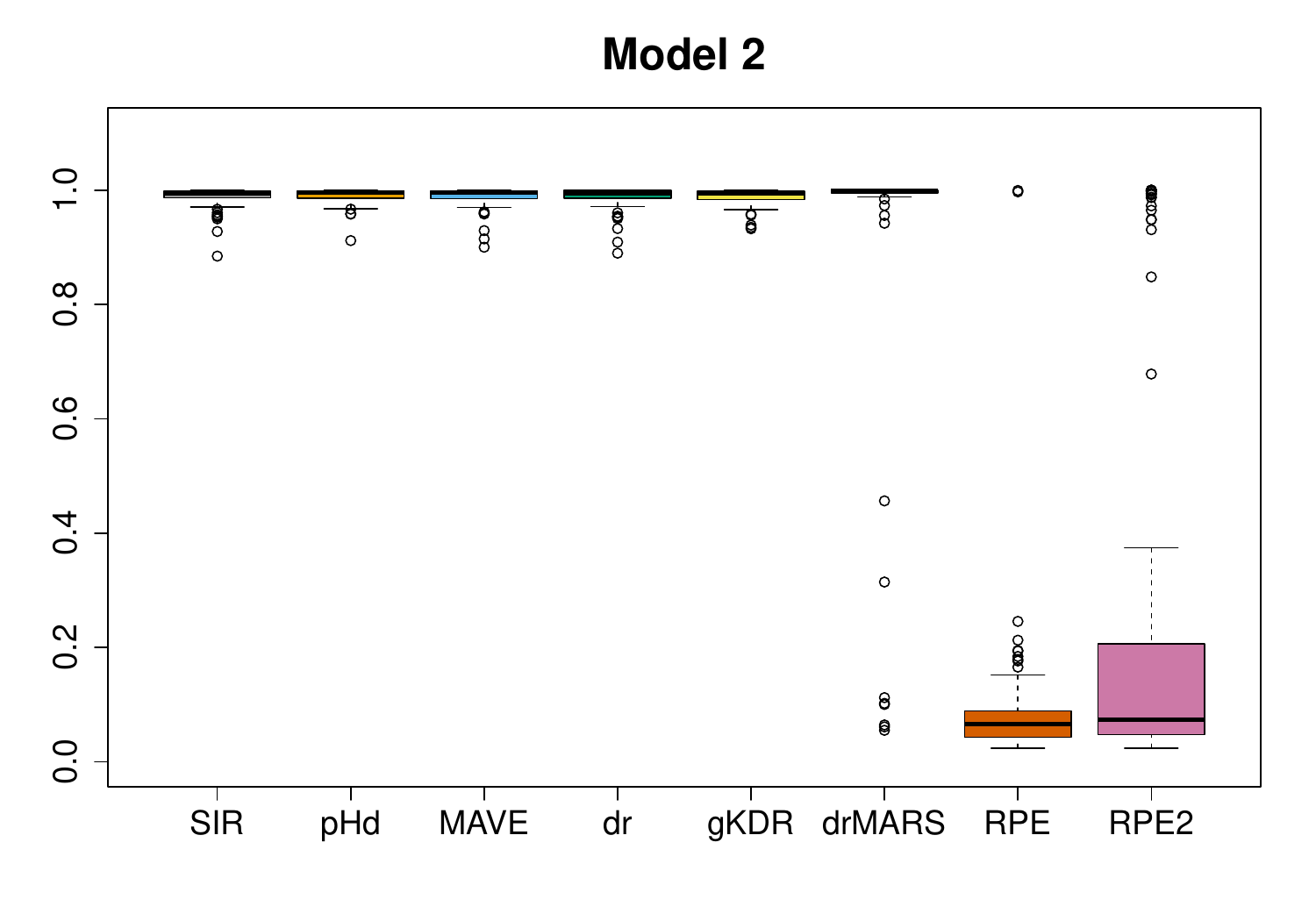}
    \includegraphics[width=0.32\textwidth]{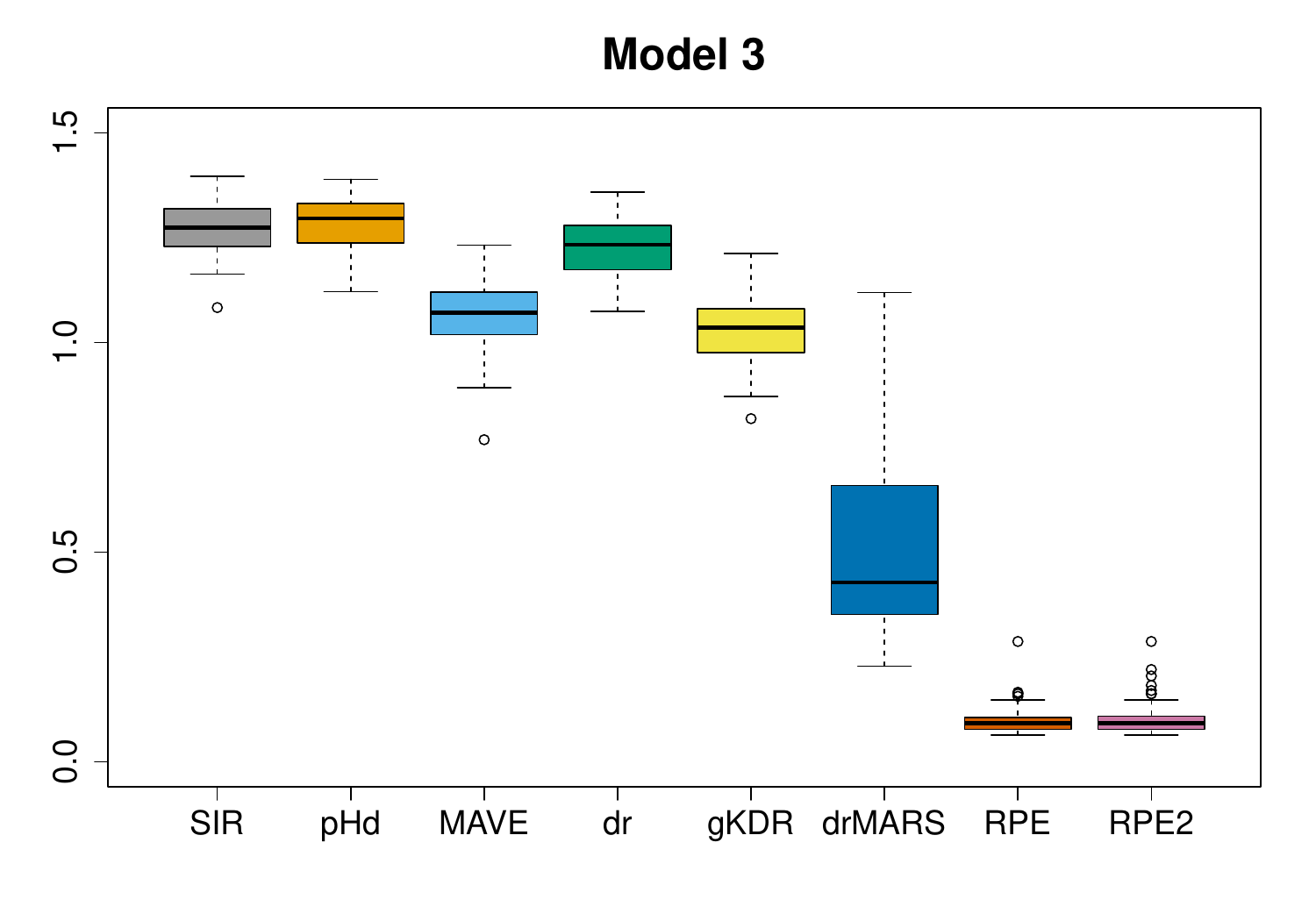}

    \includegraphics[width=0.32\textwidth]{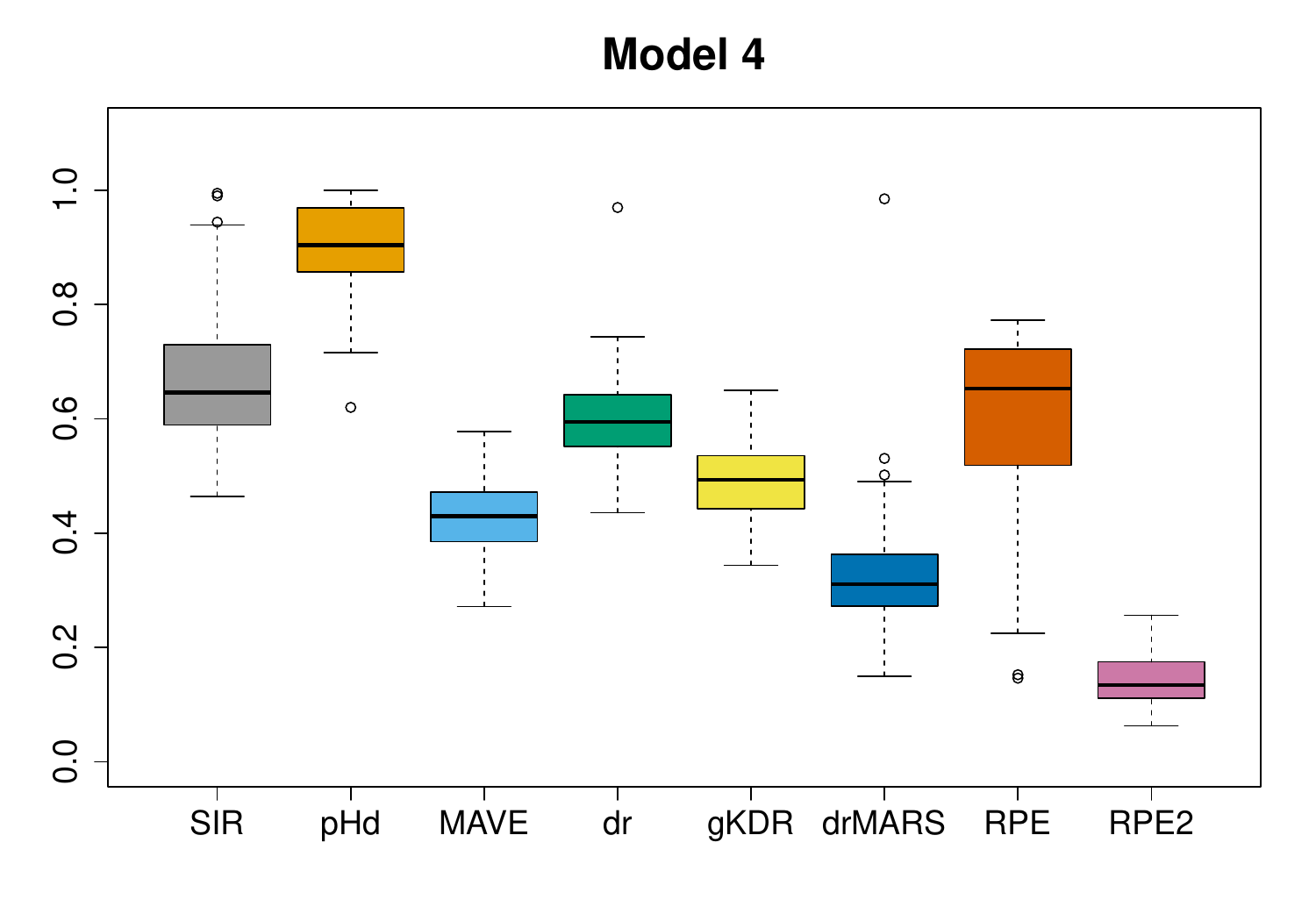}
    \includegraphics[width=0.32\textwidth]{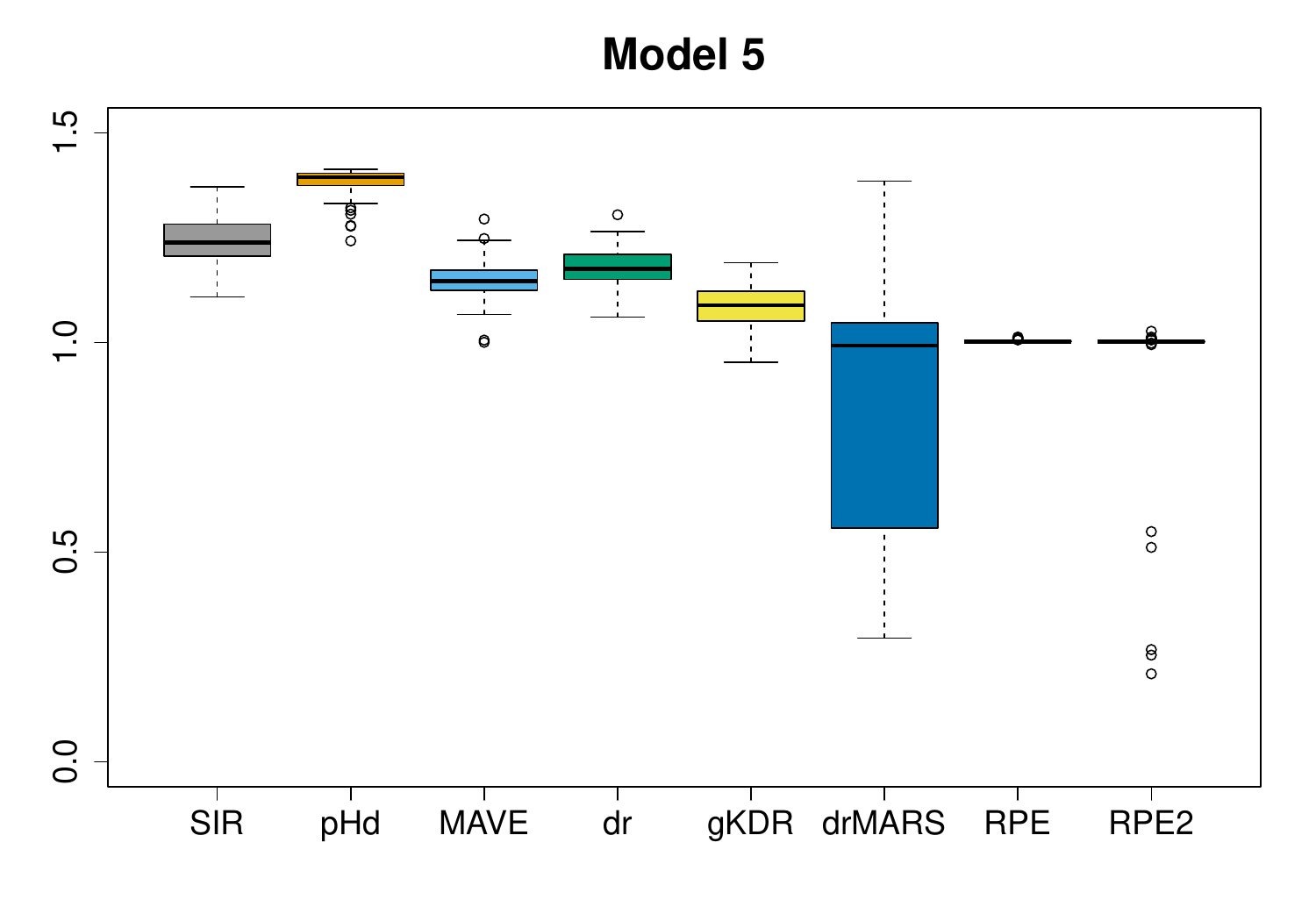}
    \includegraphics[width=0.32\textwidth]{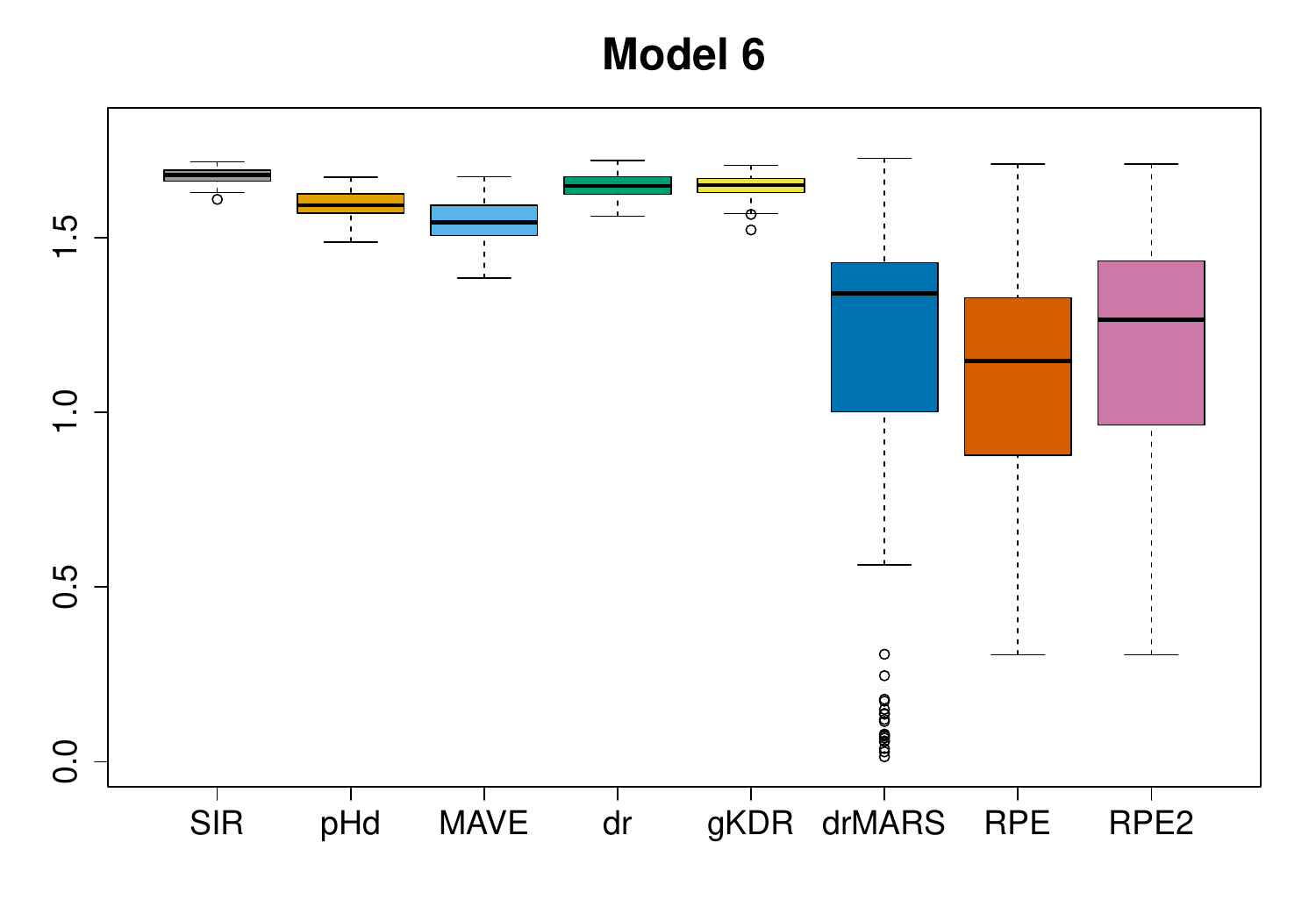}

    \includegraphics[width=0.32\textwidth]{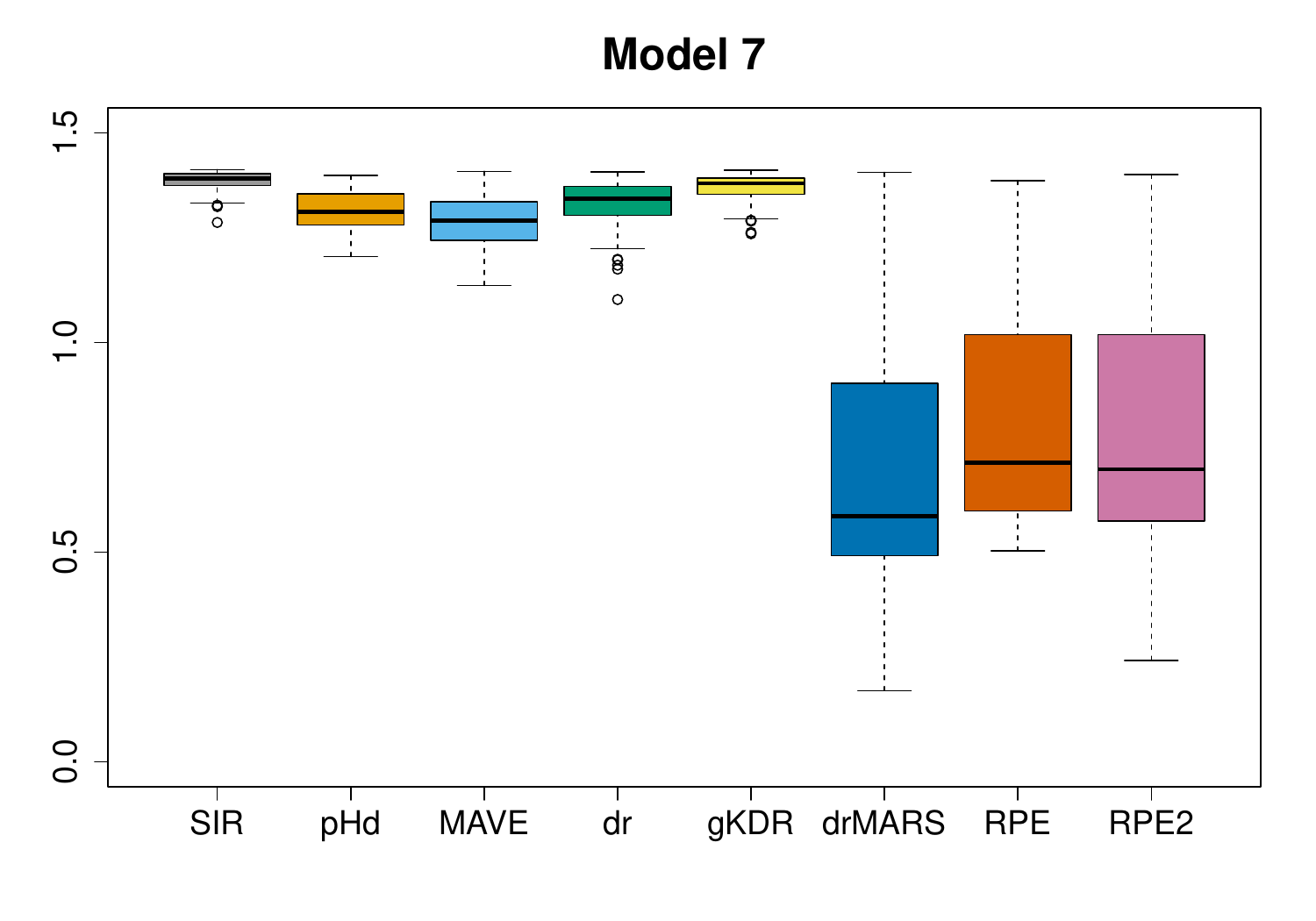}
    \includegraphics[width=0.32\textwidth]{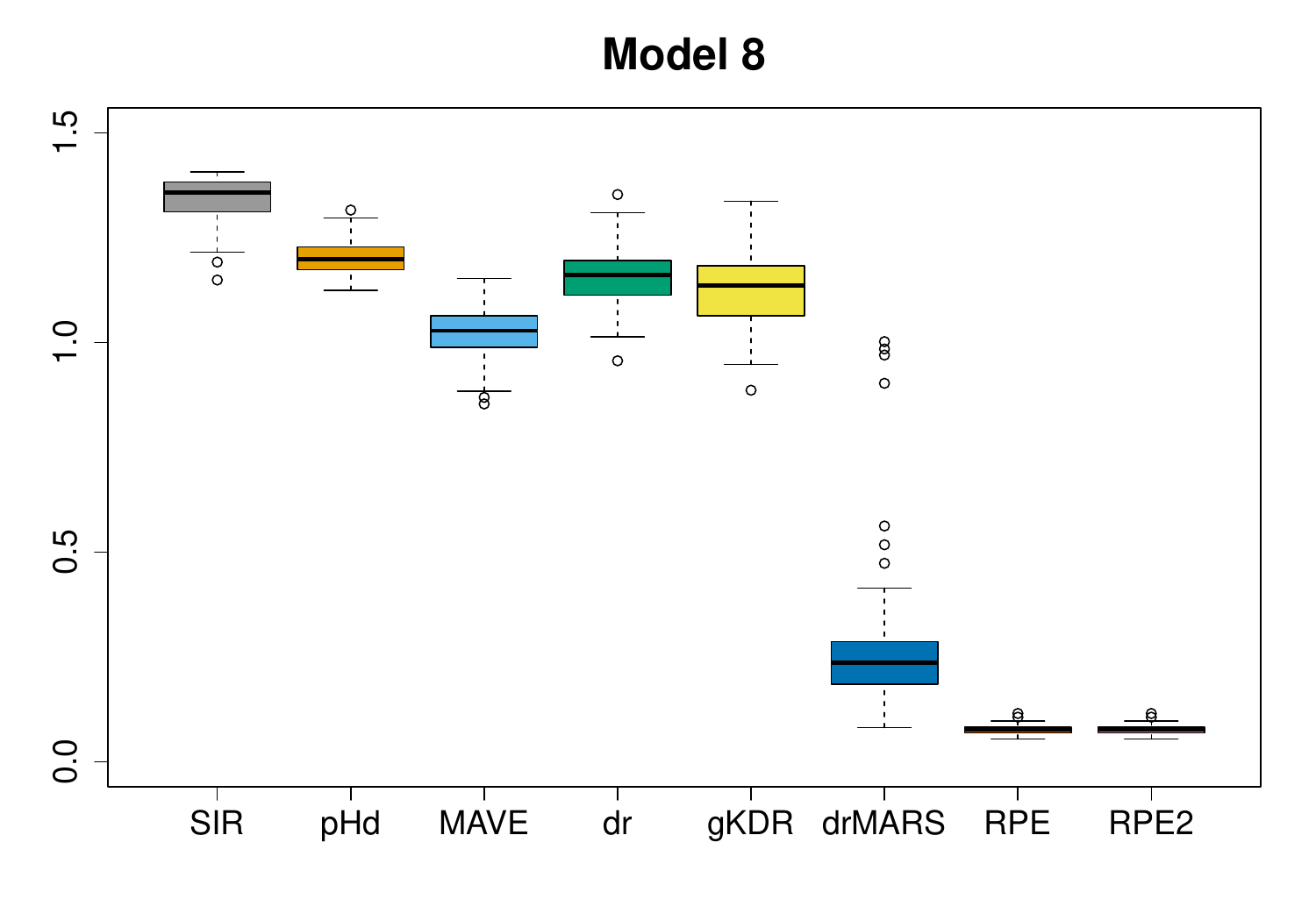}
    \includegraphics[width=0.32\textwidth]{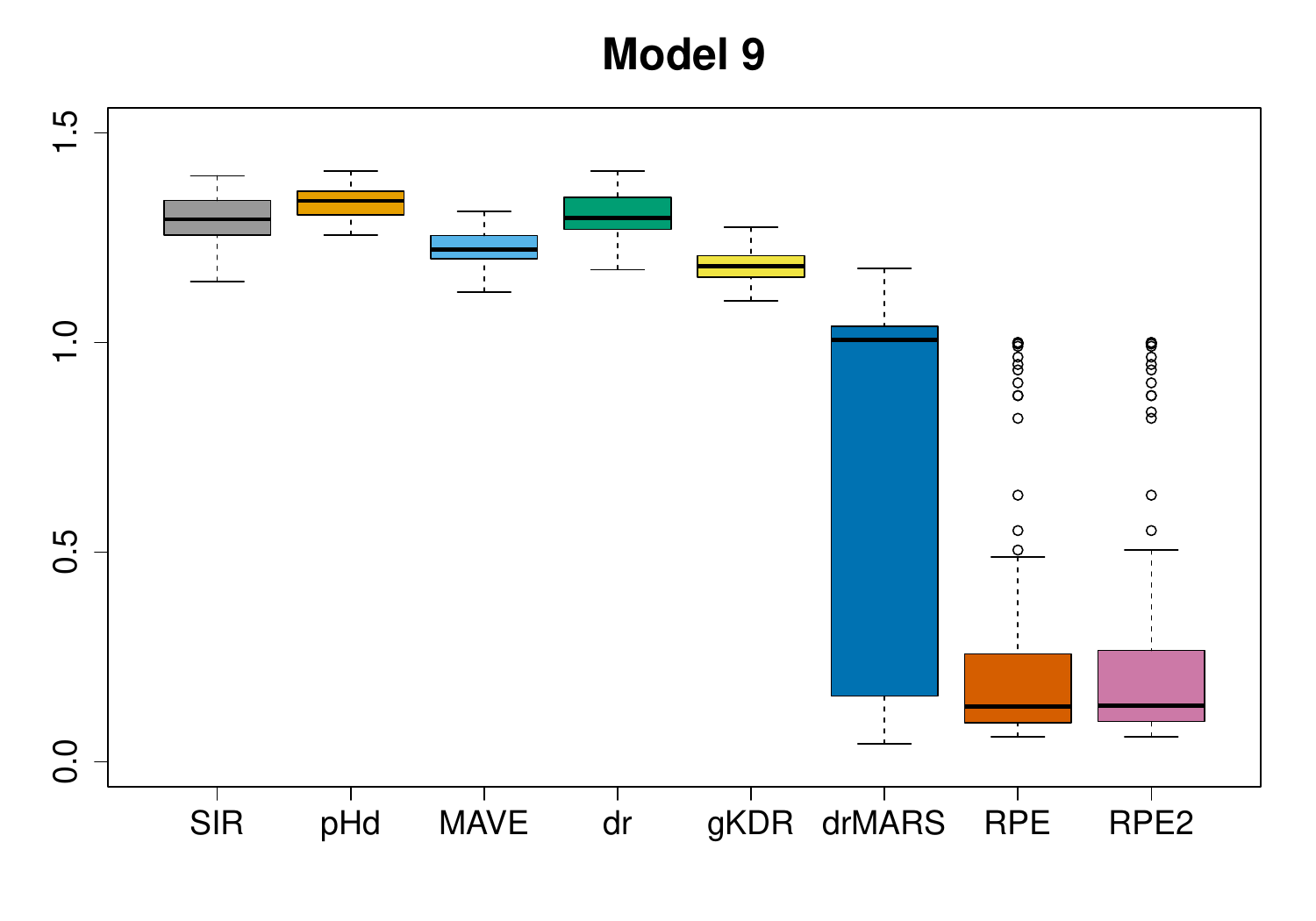}
    
    \caption{{\small Boxplots of sin theta distance between the subspaces spanned by the estimated projection $\hat{A}_0$ and true projection $A_0$ for our random projection based methods and the six competing methods when $d_0$ is known.  We present the results for models 1a-9 over 100 repeats of the experiment, with $n=200$ and $p=50$.}\label{fig:sin theta distance boxplots}}
\end{figure}

Figure~\ref{fig:sin theta distance boxplots} presents boxplots of the sin-theta distance $d_F(\mathcal{S}(\hat{A}_0), \mathcal{S}(A_0))$ across 100 repeats of the experiment for the nine models. We observe that both our RPE and RPE2 approaches are highly effective.  In Models 1a and 4, RPE2 demonstrates a clear improvement over RPE, as we've also highlighted in Section~\ref{sec:doubleRPEDR}. For the other models, a single application of our Algorithm~\ref{alg:RPEDR estimator} performs very well.  The drMARS approach is also competitive in Models 1a, 5, 6 and 7.  The other competitors suffer prohibitively due to the curse of dimensionality and are typically ineffective in the $50$-dimensional examples shown here\footnote{The performance of these approaches is better in lower dimensional settings, though our random projection based methods still offer an improvement when $p=20$ -- see the full results in Section~\ref{sec:fullsims}. }.  There are a few cases where the performance our methods (and in fact all approaches considered here) does not achieve a sin-theta distance close to zero. In Models 5 and 7, our algorithm tends to find the space spanned by $e_1$ and $e_2$, but misses the weak signal in the $e_3$ direction. Model 6 is particularly challenging, both because in order to obtain a non-trivial prediction of the response $Y$, we need some knowledge of all three covariates $X_1, X_2$ and $X_3$, (indeed, for example, $\mathbb{E}(Y|(X_1,X_2)) = 0$) and, even given an oracle projection into the subspace spanned by $e_1, e_2$ and $e_3$, the base regression problem remains difficult. We observe better performance in these models when $p=20$ and $n = 500$ -- see Section~\ref{sec:fullsims}. 

\subsection{Dimension \texorpdfstring{$d_0$}{d0} unknown}
\label{sec:numericald0unknown}
We now turn to the more challenging, and arguably more realistic scenario where $d_0$ is unknown.  In this subsection, our primary goal is to evaluate the performance of our Algorithm~\ref{alg:RPEDR estimator} when the dimension of the final projection $\hat{d}_0$ is chosen using Algorithm~\ref{alg:DimensionEstimator}.  In this case, we focus solely on the performance of a single application of our random projection ensemble algorithm. The double dimension reduction technique in Algorithm~\ref{alg:doubleDR} is not considered here, as it requires a desired projection dimension to be prespecified.   For comparison, we also include the results from the competing methods, where the dimension is chosen using the corresponding recommended default approaches.  Specifically, these are the \emph{marginal dimension test} from the \texttt{R} package \texttt{dr} for SIR and pHd; and \emph{generalised cross-validation}, which is available as part of the \texttt{mave} package (for MAVE), and the corresponding gitHub links given in the previous subsection (for gKDR  and drMARS\footnote{for these two methods, we set the maximum projection dimension to be $\lceil \sqrt{p}\rceil$}). The DR method is only applicable with a prespecified $d_0$ and hence it is excluded from this subsection.  

In our experiments below, we will reuse the Models 1a through 9 described earlier. For our proposal, labelled \textbf{RPE} in the boxplots below, follows the recommendations from Section~\ref{sec:practicalconsiderations} for Algorithm~\ref{alg:RPEDR estimator}, with $\hat{d}_0$ selected via Algorithm~2 using $R = 10000$. Competing methods utilize their respective default values for any tuning parameters.

\begin{figure}[!ht]
    \centering
    \includegraphics[width=0.3\textwidth]{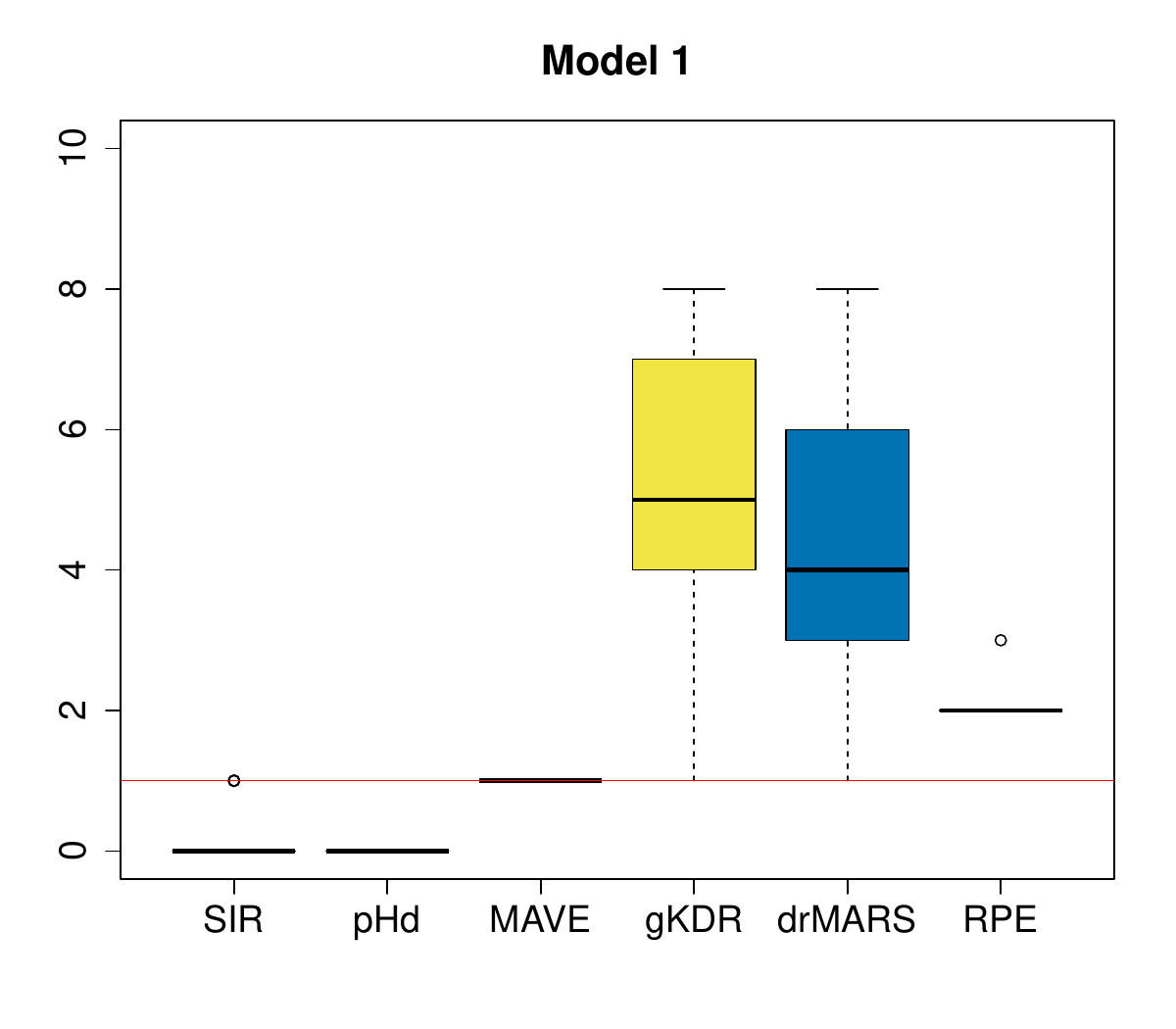}
    \includegraphics[width=0.3\textwidth]{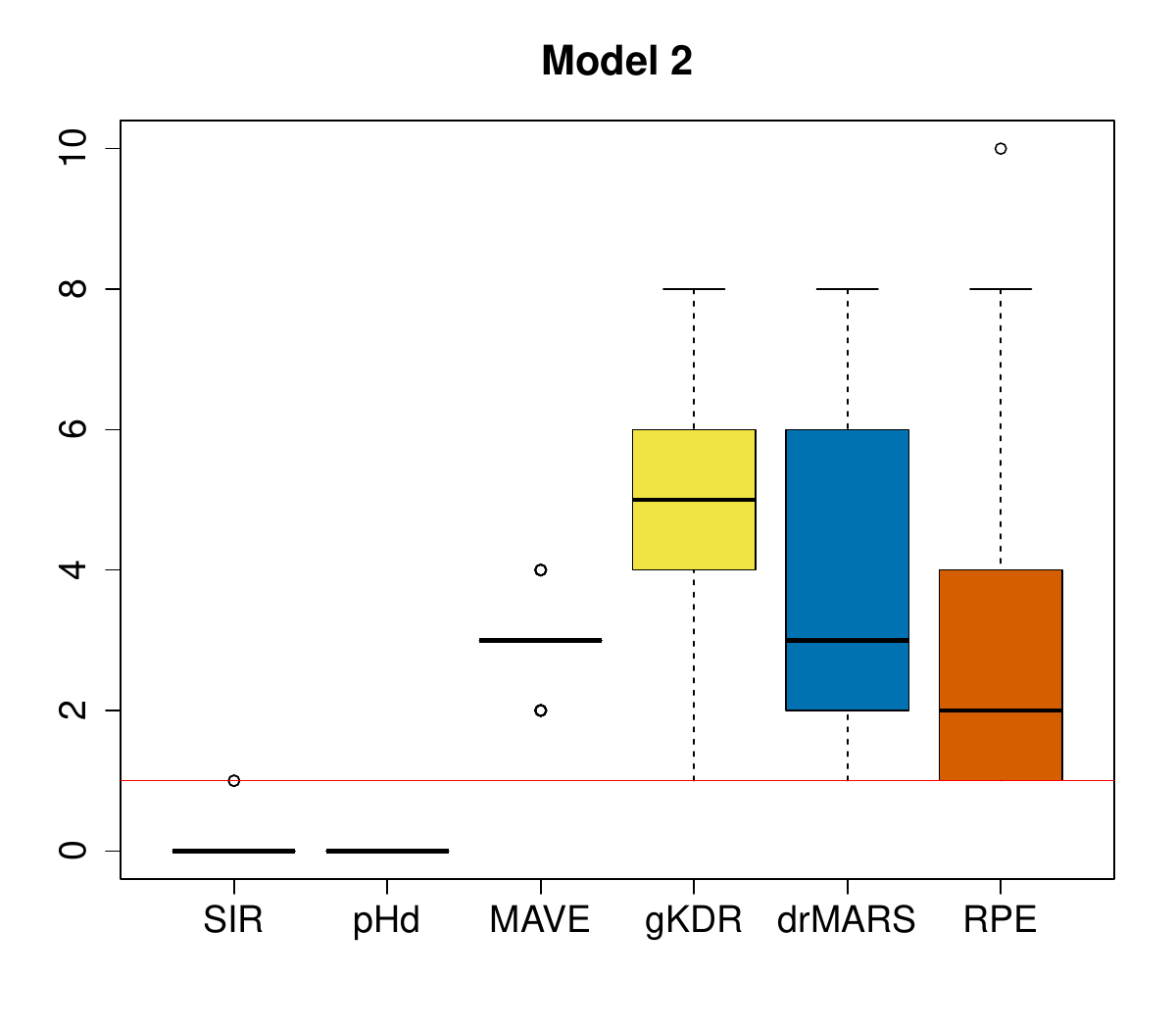}
    \includegraphics[width=0.3\textwidth]{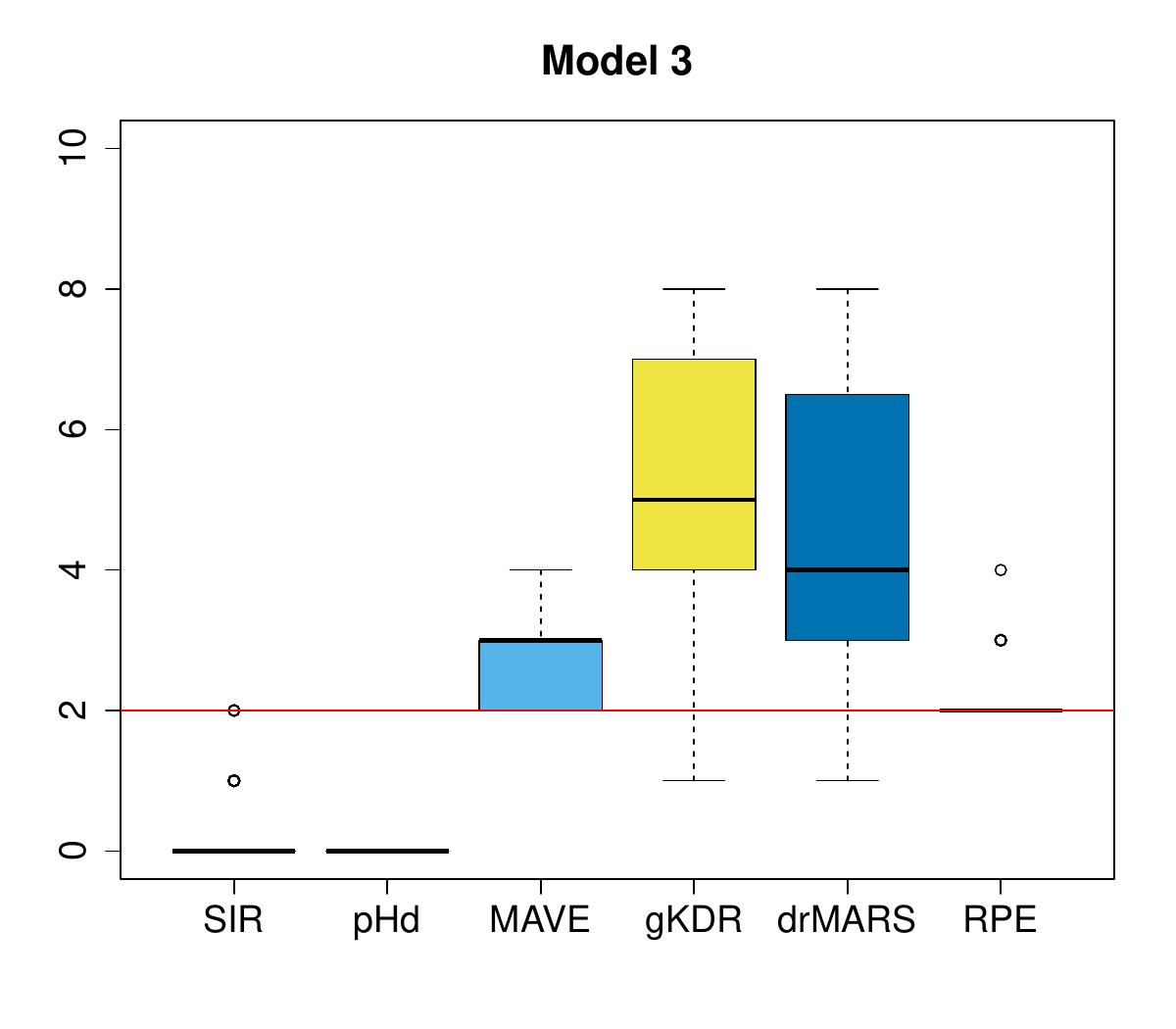}

    \includegraphics[width=0.3\textwidth]{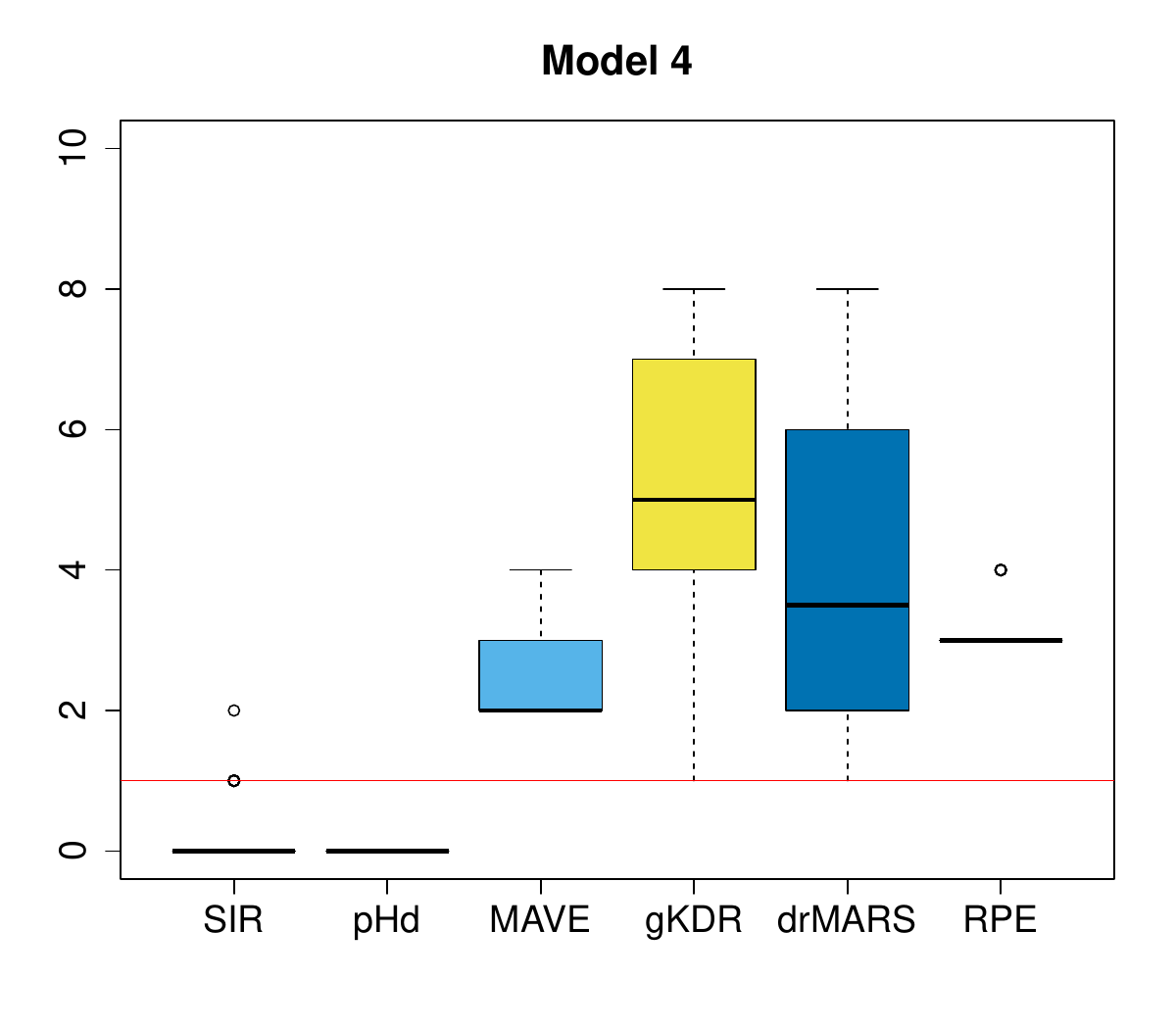}
    \includegraphics[width=0.3\textwidth]{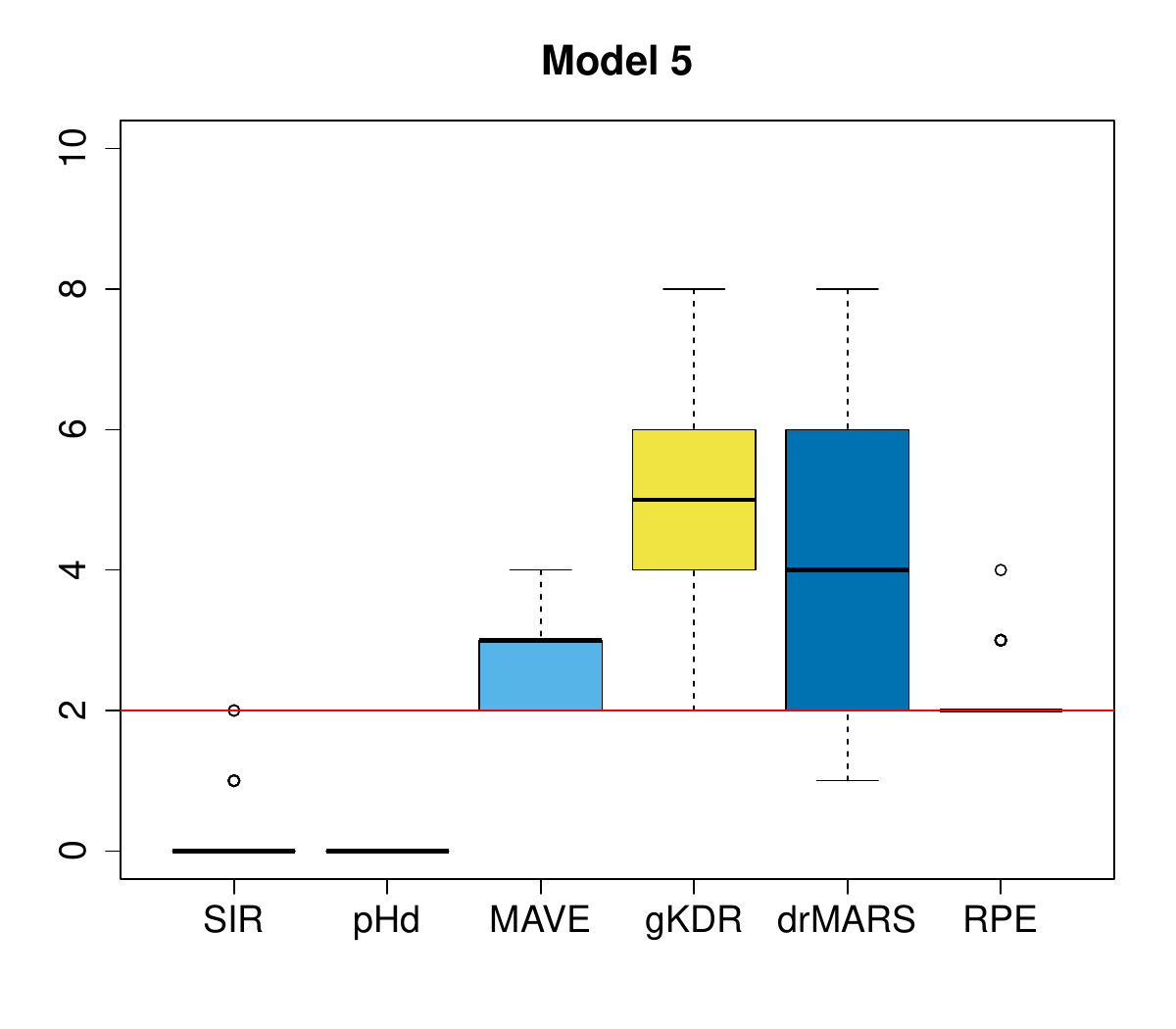}
    \includegraphics[width=0.3\textwidth]{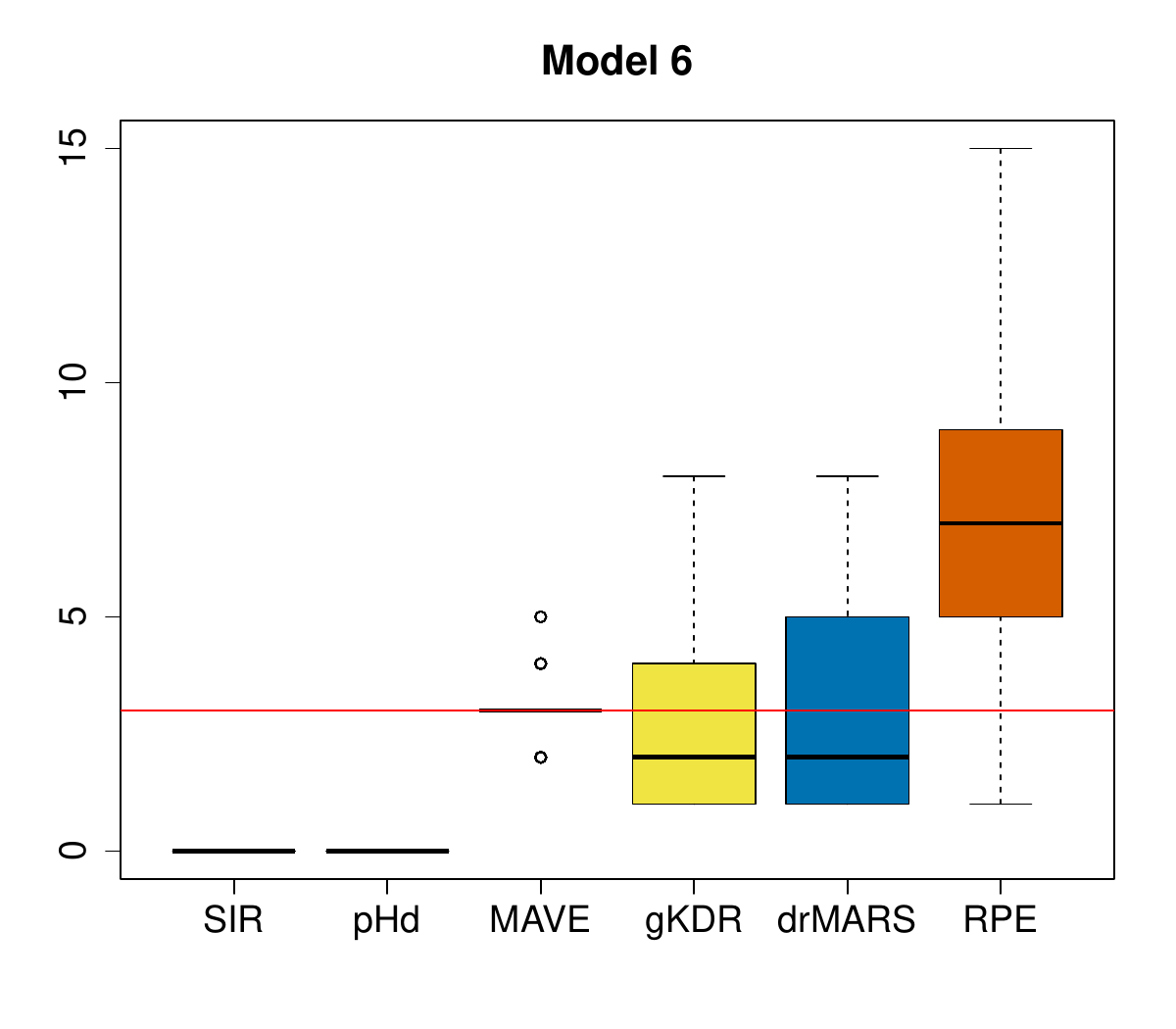}

    \includegraphics[width=0.3\textwidth]{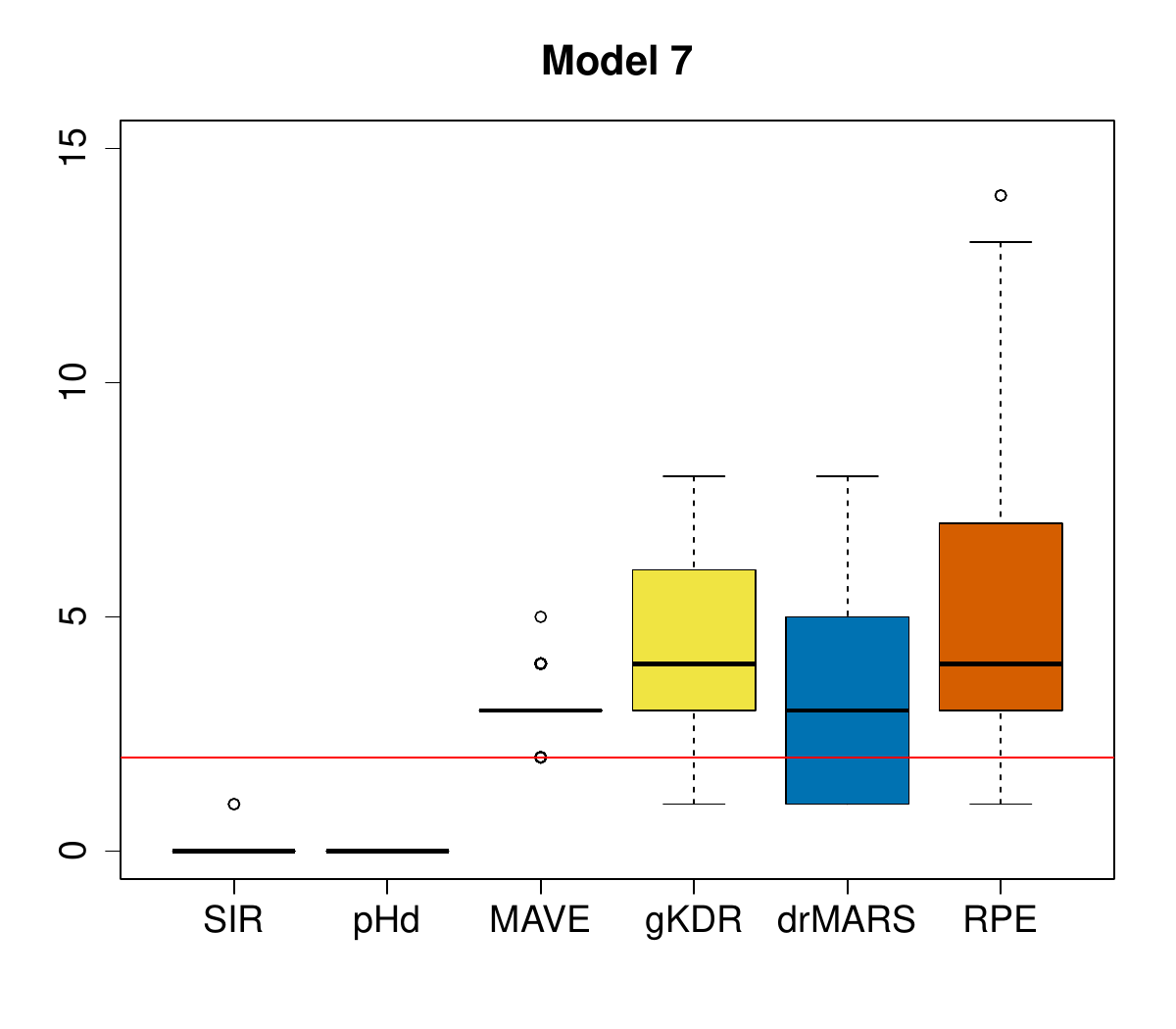}
    \includegraphics[width=0.3\textwidth]{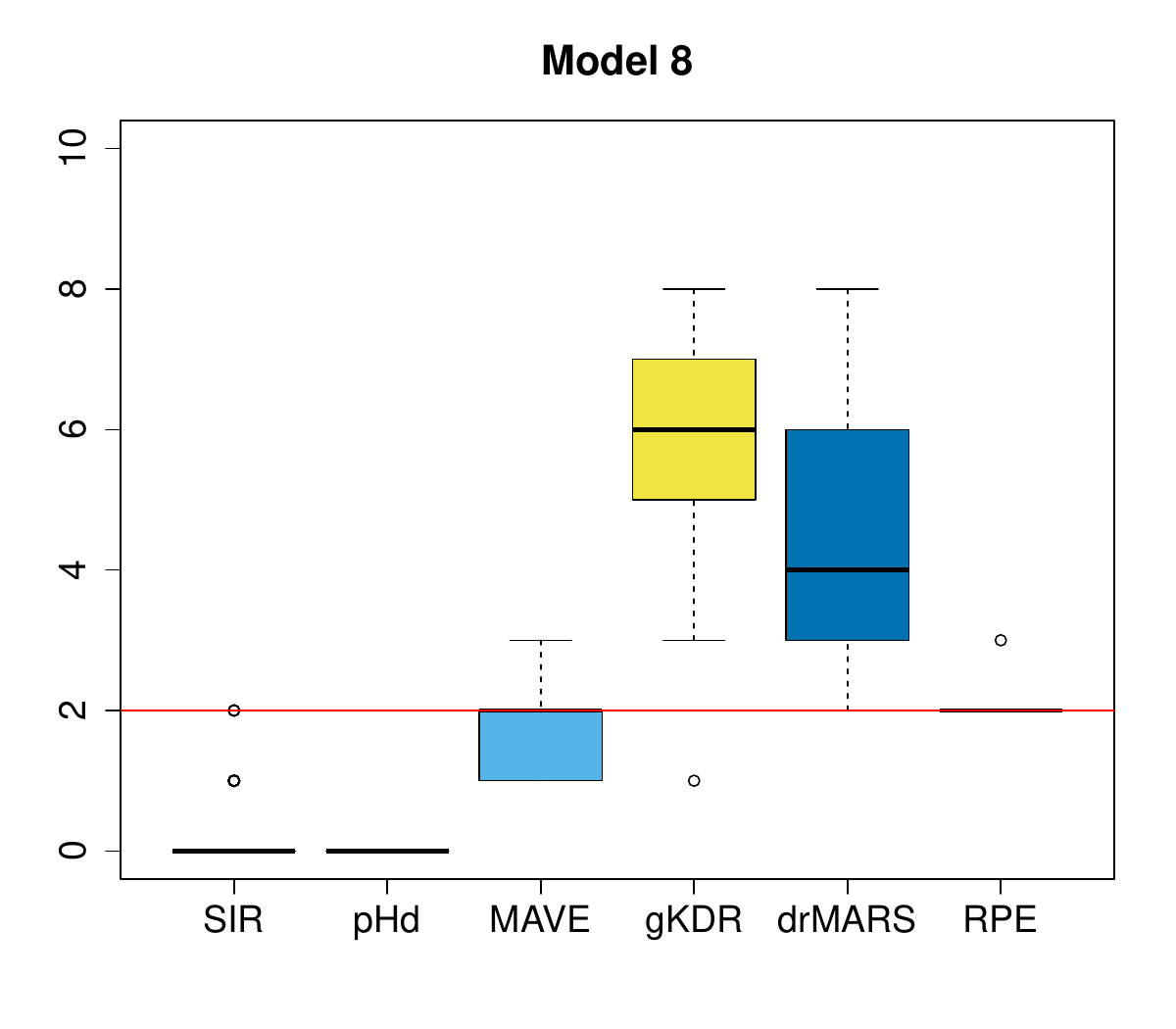}
    \includegraphics[width=0.3\textwidth]{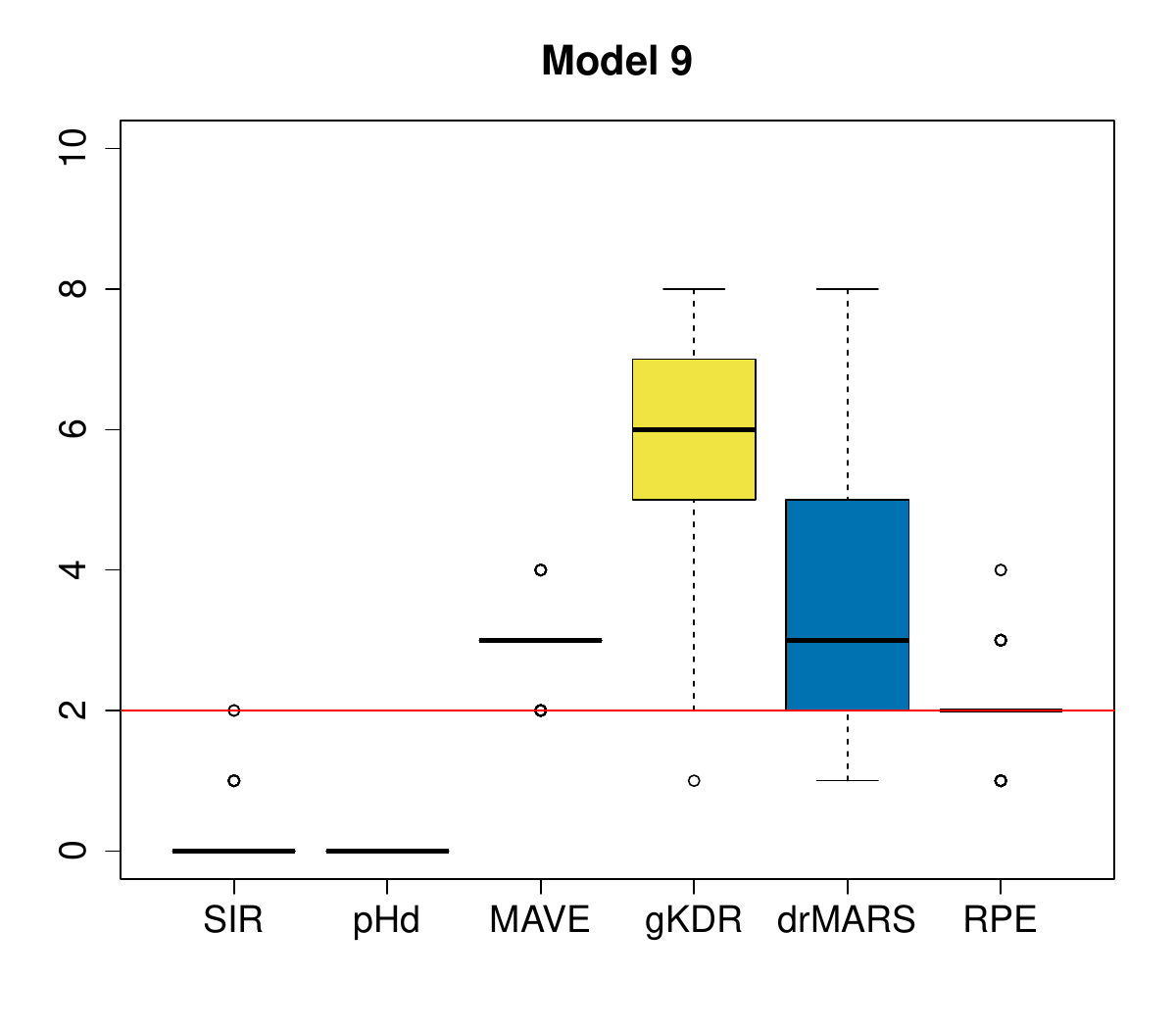}
    
    \caption{{\small Boxplots of $\hat{d}_0$ from Algorithm~\ref{alg:DimensionEstimator} alongside the competing methods over 100 repeats of the simulation for models 1a-9. We present the results for $n = 200$ and $p=50$.  The red horizontal lines show the dimension $d_0$ of the true $A_0$.  }\label{fig:d0 estimation boxplots}}
\end{figure}

The results in Figure~\ref{fig:d0 estimation boxplots} show that, with the exception of SIR and pHd, most methods including ours tend to be somewhat liberal in choosing $\hat{d}_0$, in the sense that they typically take $\hat{d}_0$ to be larger than $d_0$. Conversely, SIR and pHd typically selects $\hat{d}_0 = 0$ in these examples, indicating no signal found -- as mentioned above, both methods appear to be suffering from the curse of dimensionality.

To further assess the performance when $d_0$ is unknown, we introduce two additional evaluation metrics, which measure the \emph{false positives} and \emph{false negatives}:
\begin{equation}
\label{eq:FPmetric}
    d_{\mathrm{FP}}(\hat{A}_0, A_0) := \bigl \| (I - A_0 A_0^T) \hat{A}_0 \bigr \|_F
\end{equation}
\begin{equation}
    d_{\mathrm{FN}}(\hat{A}_0, A_0) := \bigl \| (I - \hat{A}_0 \hat{A}_0^T)A_0 \bigr \|_F
\end{equation}
In both cases, $\hat{A}_0$ is a $p \times \hat{d}_0$ dimensional projection, where $\hat{d}_0$ need not be equal to $d_0$.  The metric $d_{\mathrm{FP}}$ measures \emph{false positives}, in the sense that it quantifies the extent to which $\hat{A}_0$ contains regions of the ambient $p$ dimensional space that are orthogonal to $A_0$ In particular, if $\mathcal{S}(\hat{A}_0) \subseteq \mathcal{S}(A_0)$, then $d_{\mathrm{FP}}(\hat{A}_0, A_0) = 0$.  In contrast, $d_{\mathrm{FN}}$ measures \emph{false negatives} and quantifies the amount of the space spanned by $A_0$ that is \emph{missed} by the projection $\hat{A}_0$. Indeed, we have $d_{\mathrm{FN}}(\hat{A}_0, A_0) = 0$ when $\mathcal{S}(A_0) \subseteq \mathcal{S}(\hat{A}_0)$.  Broadly speaking, increasing $\hat{d}_0$ will typically decrease $d_{\mathrm{FN}}$ at the expense of increasing $d_{\mathrm{FP}}$. Only when $\hat{d}0 = d_0$ can both $d{\mathrm{FP}}$ and $d_{\mathrm{FN}}$ be equal to (or close to) zero.  

\begin{figure}[!ht]
    \centering
    \includegraphics[width=0.3\textwidth]{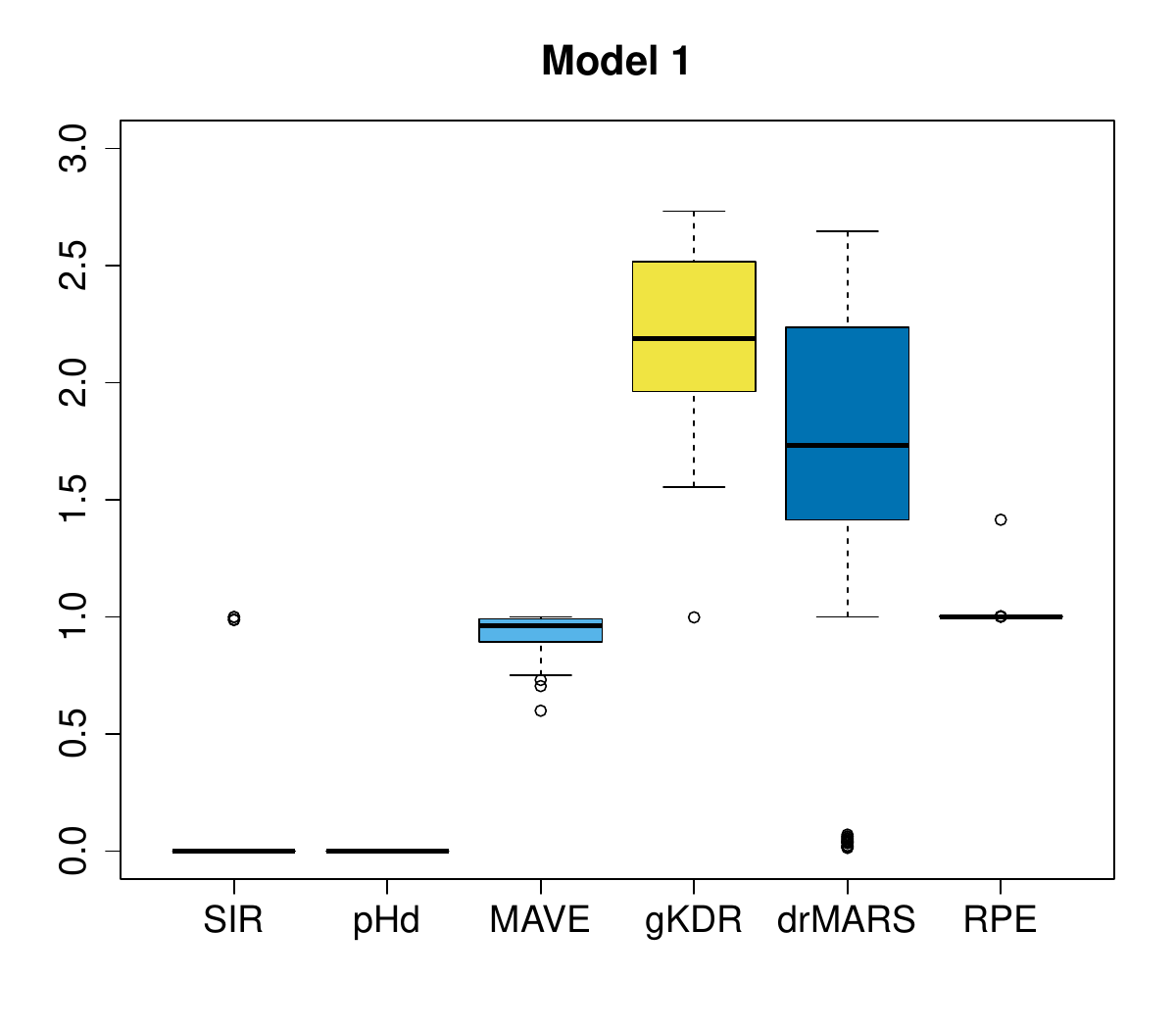}
    \includegraphics[width=0.3\textwidth]{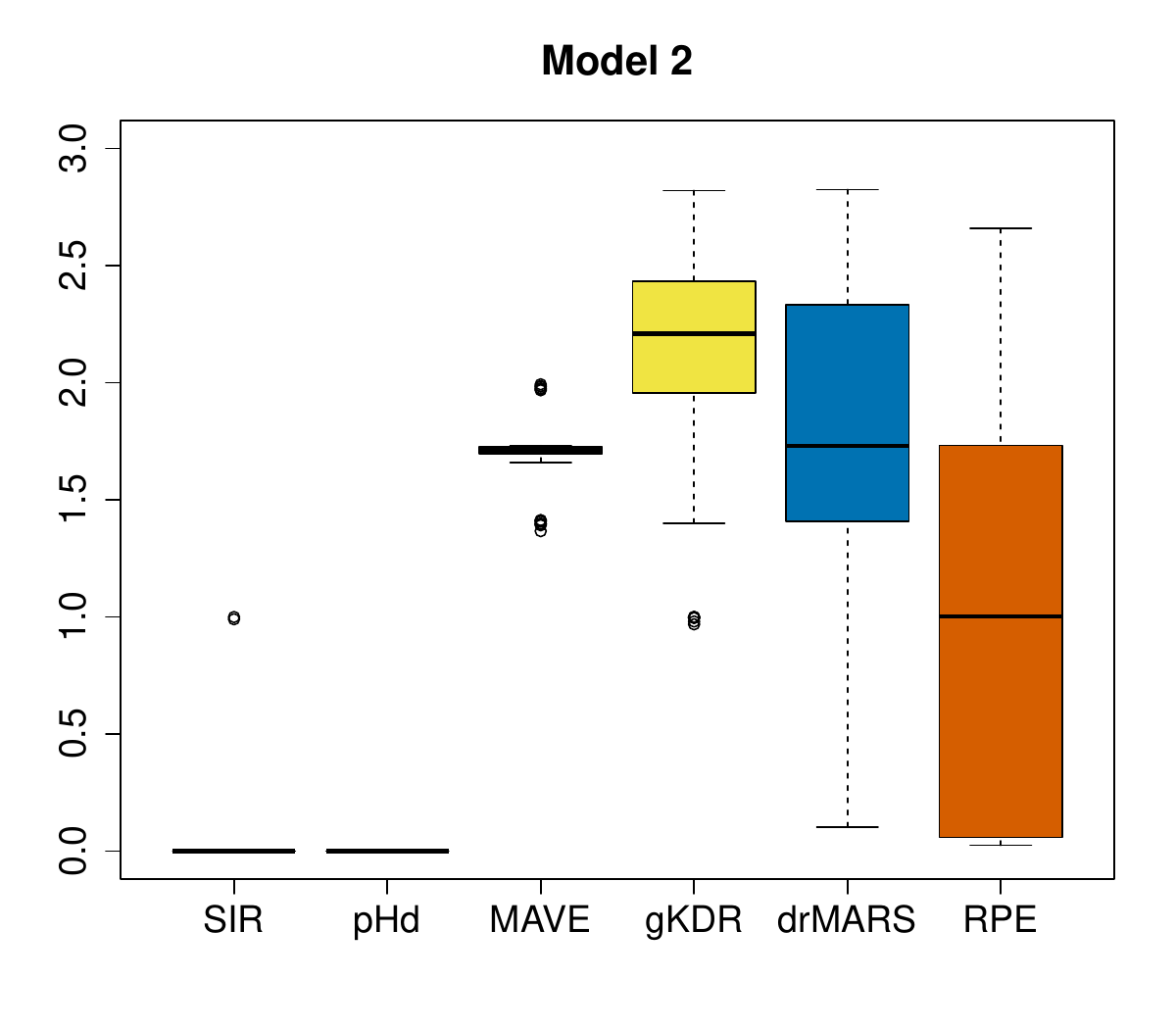}
    \includegraphics[width=0.3\textwidth]{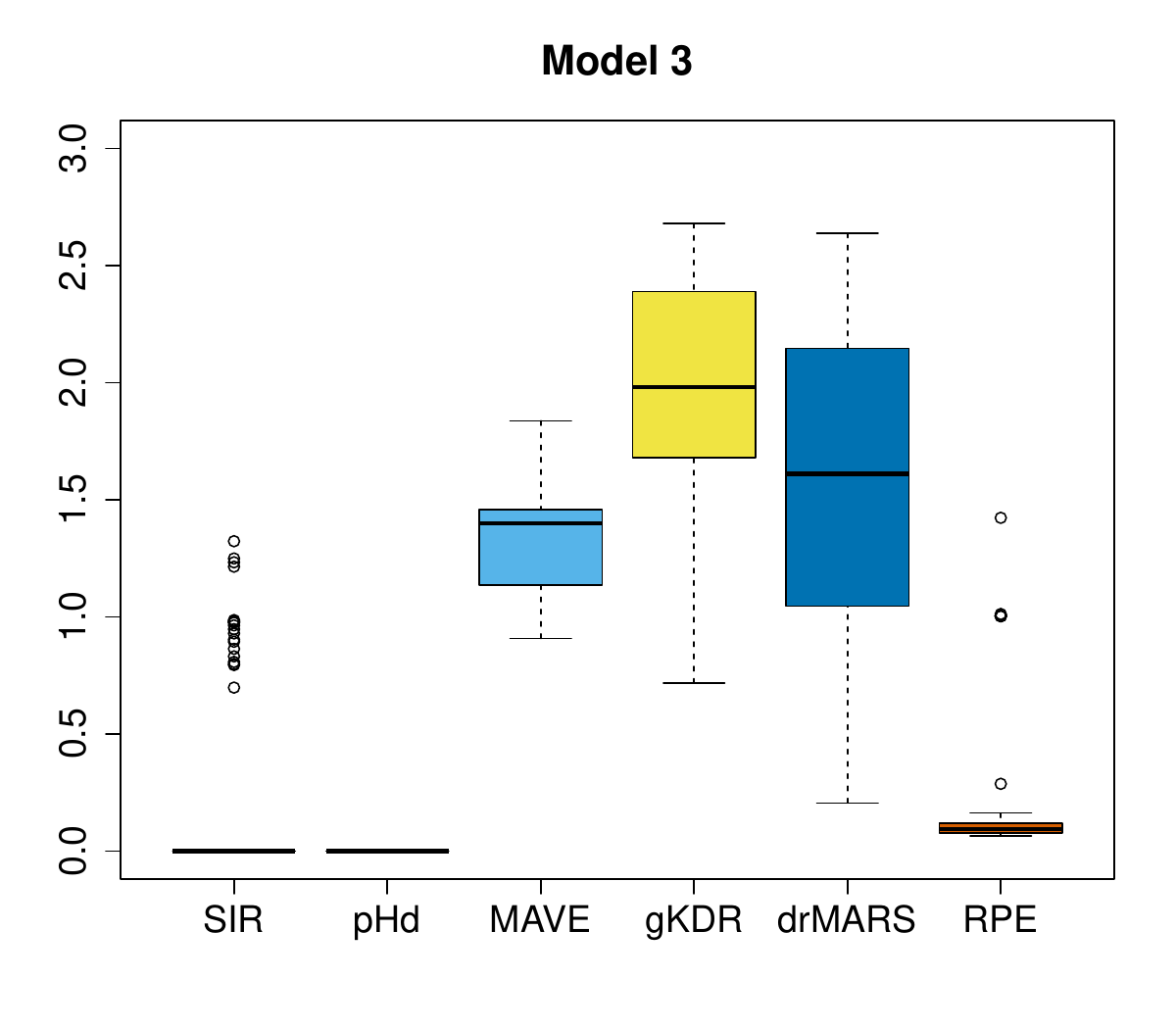}

    \includegraphics[width=0.3\textwidth]{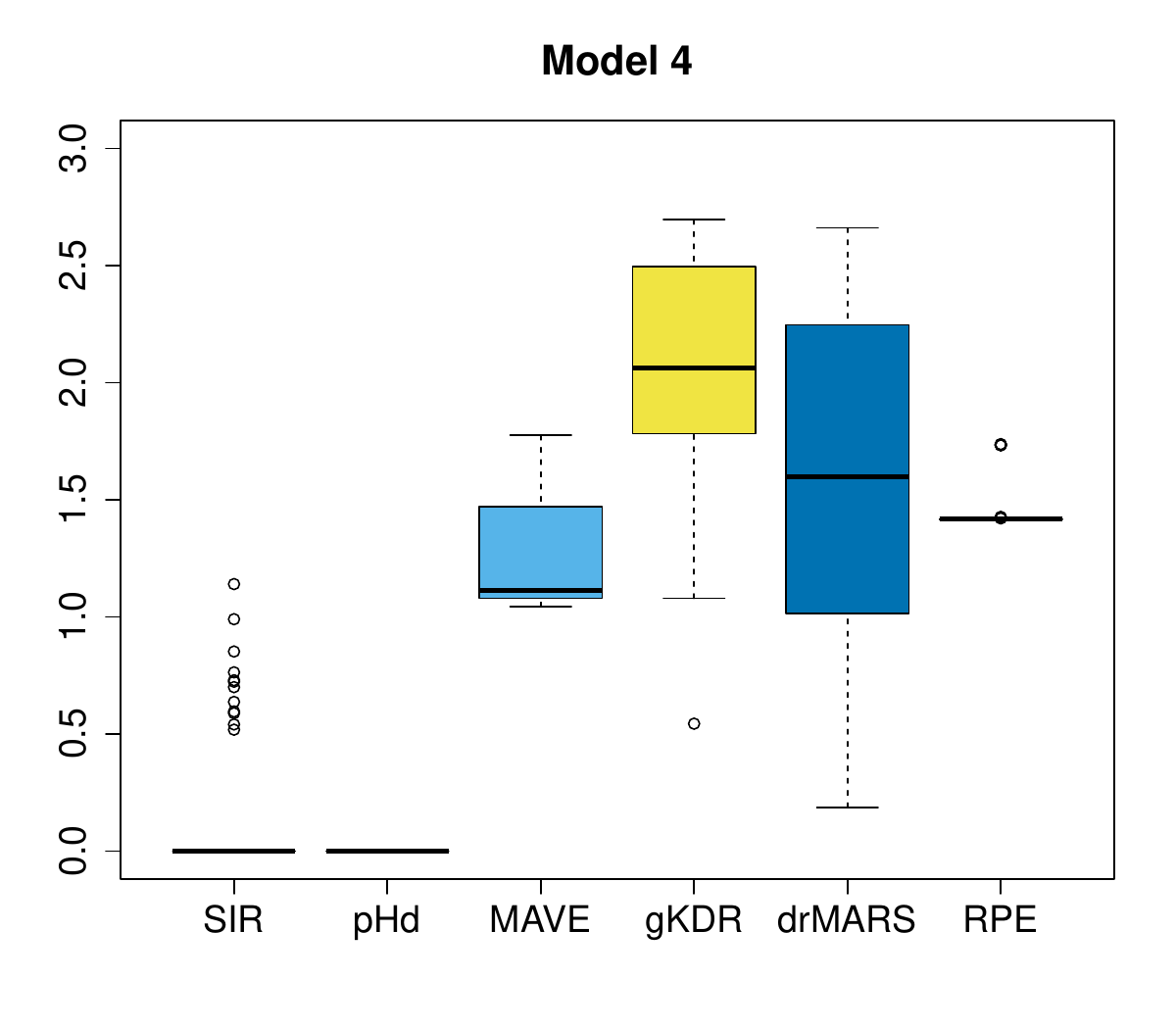}
    \includegraphics[width=0.3\textwidth]{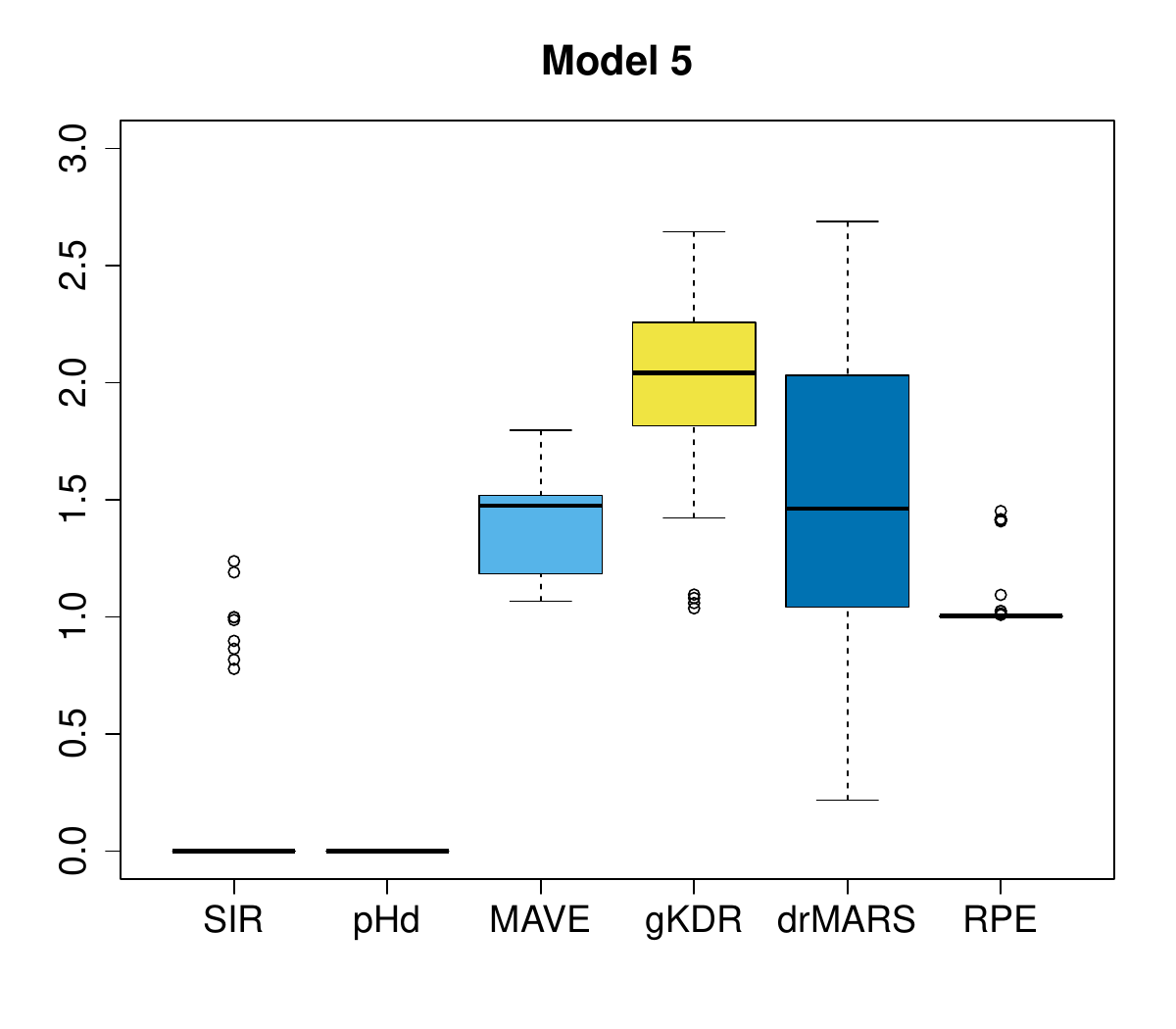}
    \includegraphics[width=0.3\textwidth]{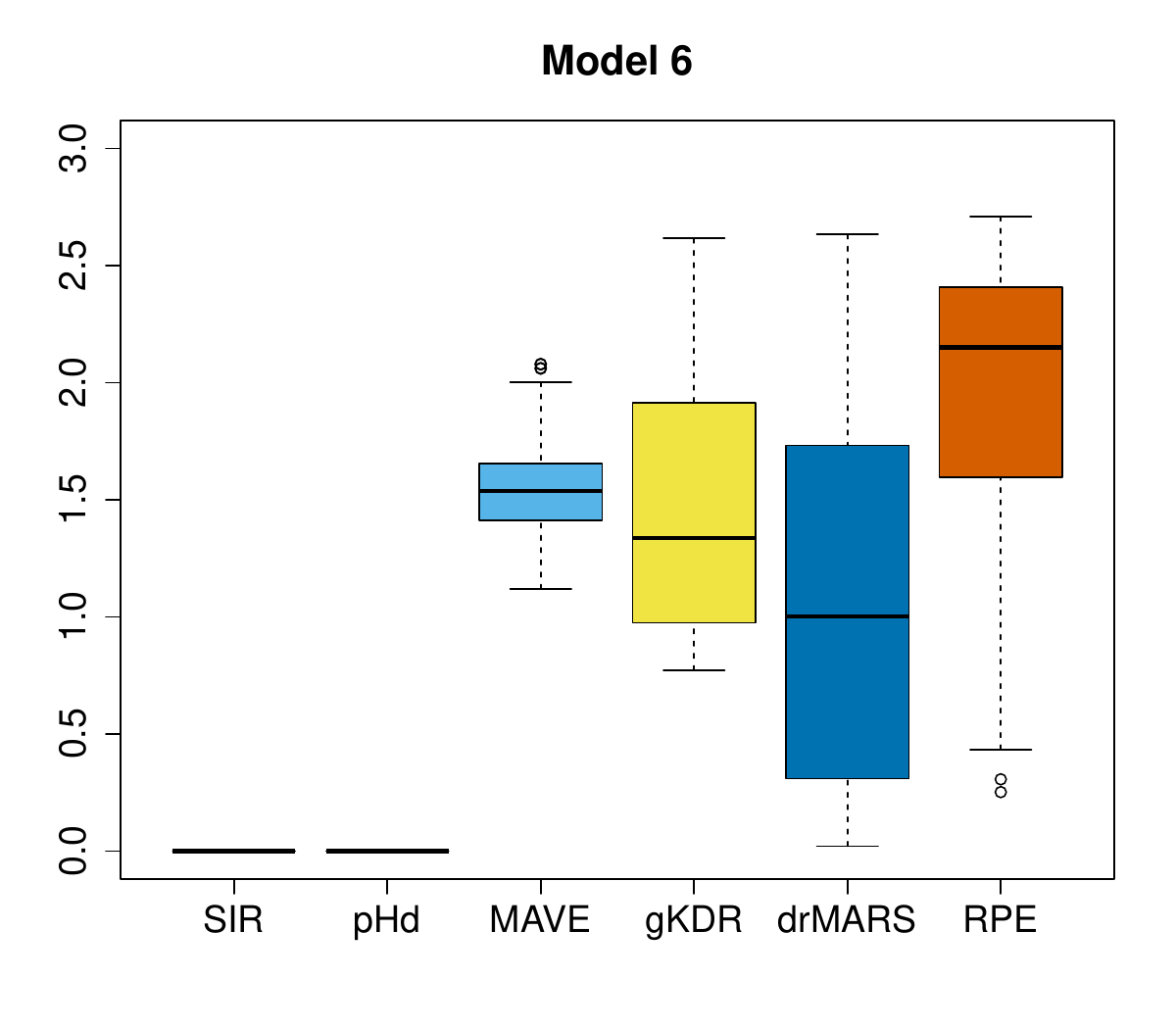}

    \includegraphics[width=0.3\textwidth]{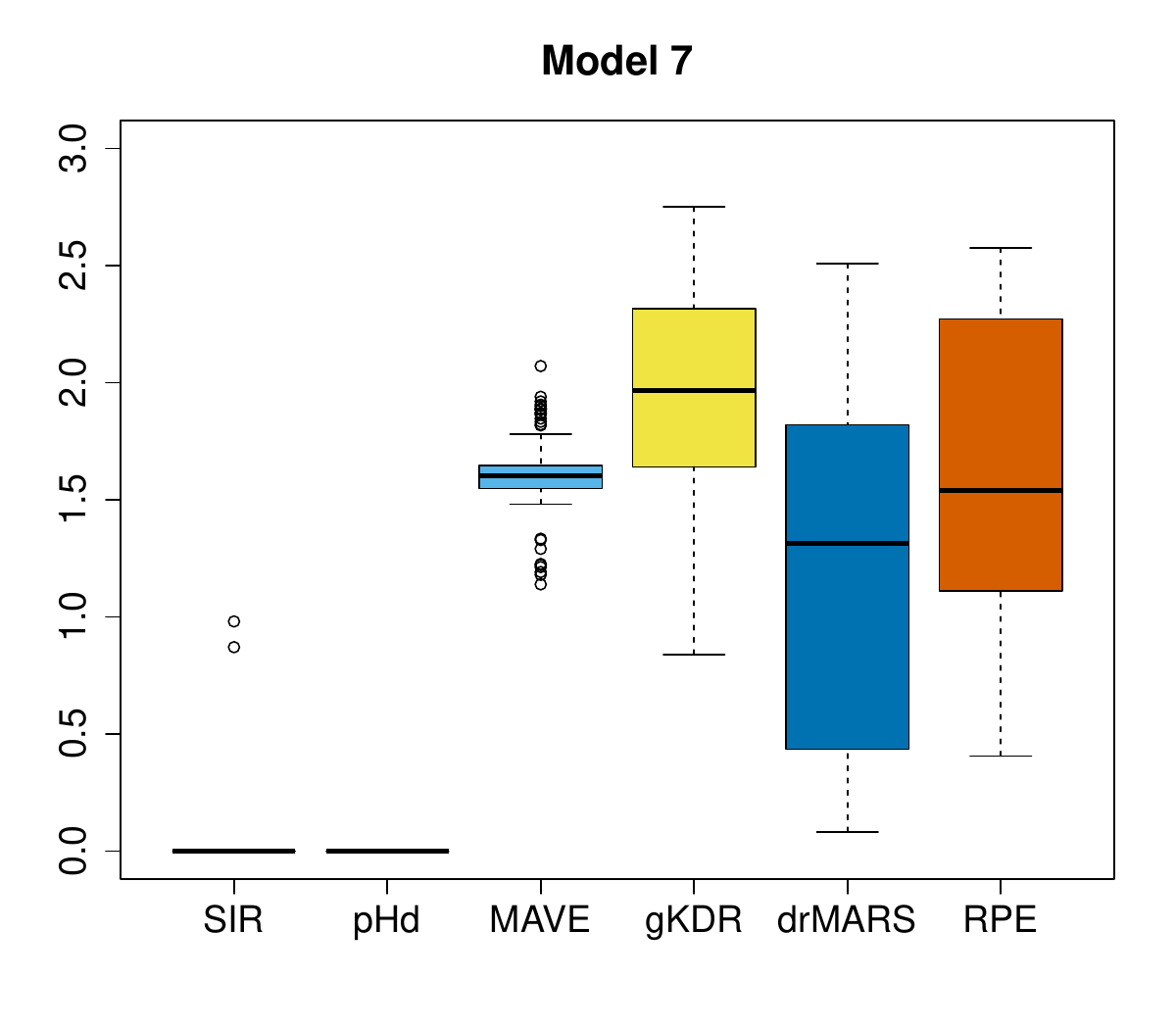}
    \includegraphics[width=0.3\textwidth]{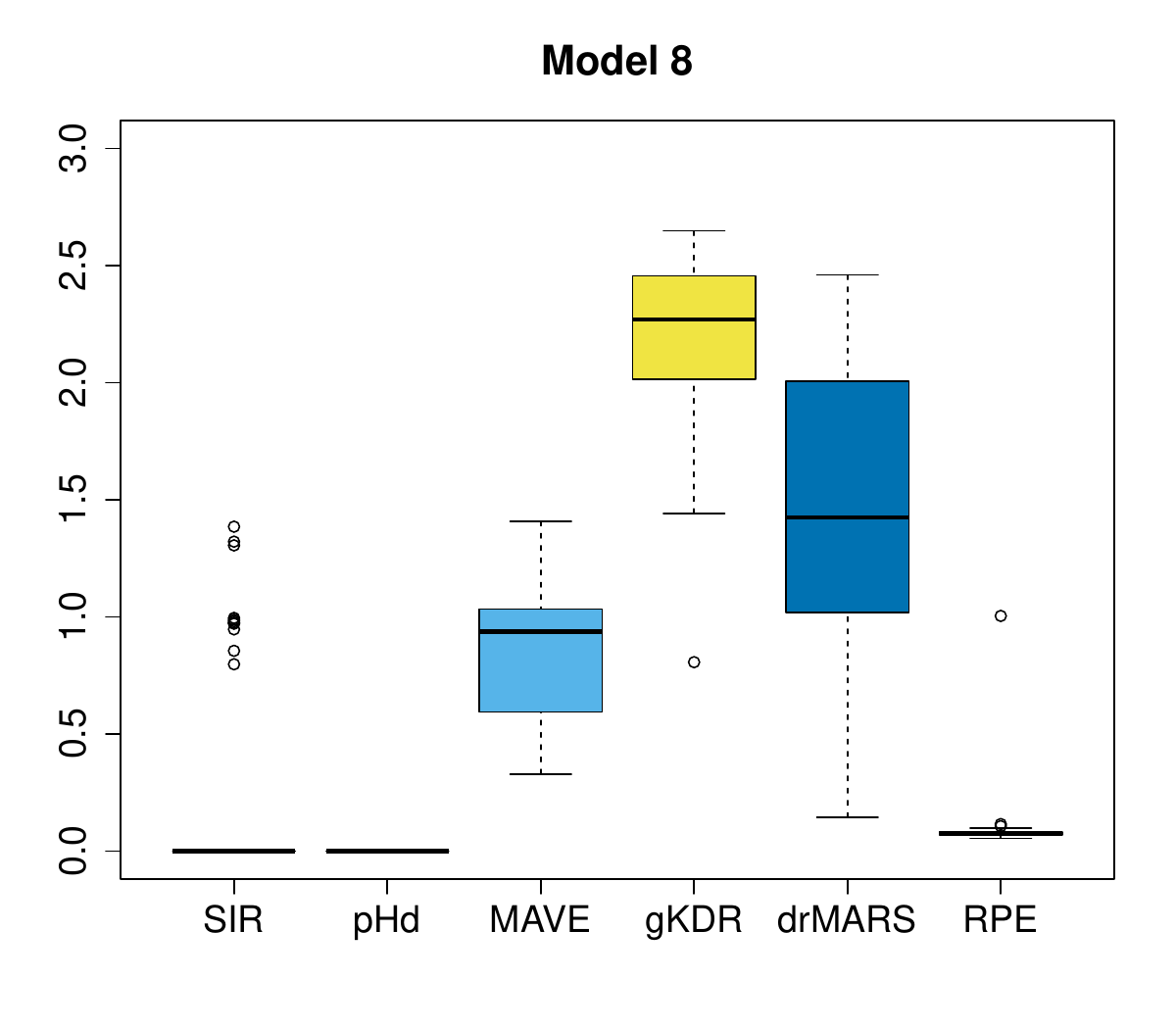}
    \includegraphics[width=0.3\textwidth]{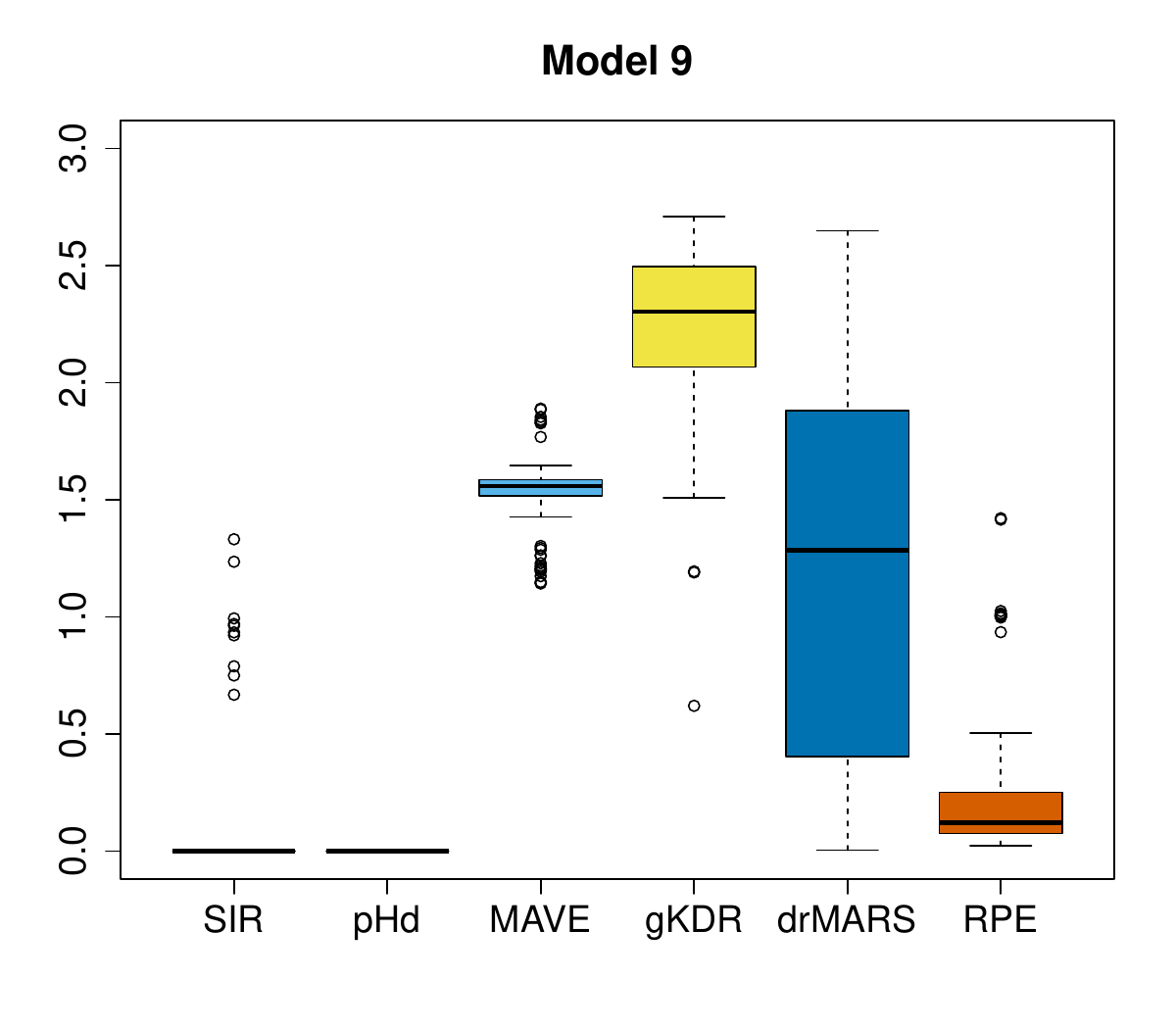}
    
    \caption{{\small Boxplots of $d_{\mathrm{FP}}(\hat{A}_0, A_0)$ for the different methods when $d_0$ is unknown for models 1a-9 over 100 simulations. Our RPE method here is based on the output from Algorithm~\ref{alg:RPEDR estimator}, with the corresponding projection dimension $\hat{d}_0$ chosen via Algorithm~\ref{alg:DimensionEstimator}. For the competing methods, $\hat{d}_0$ is chosen via the corresponding approach described at the beginning of this subsection.   The presented setting is $n=200$, $p=50$.}\label{fig:FP boxplots}}
\end{figure}

\begin{figure}[!ht]
    \centering
    \includegraphics[width=0.3\textwidth]{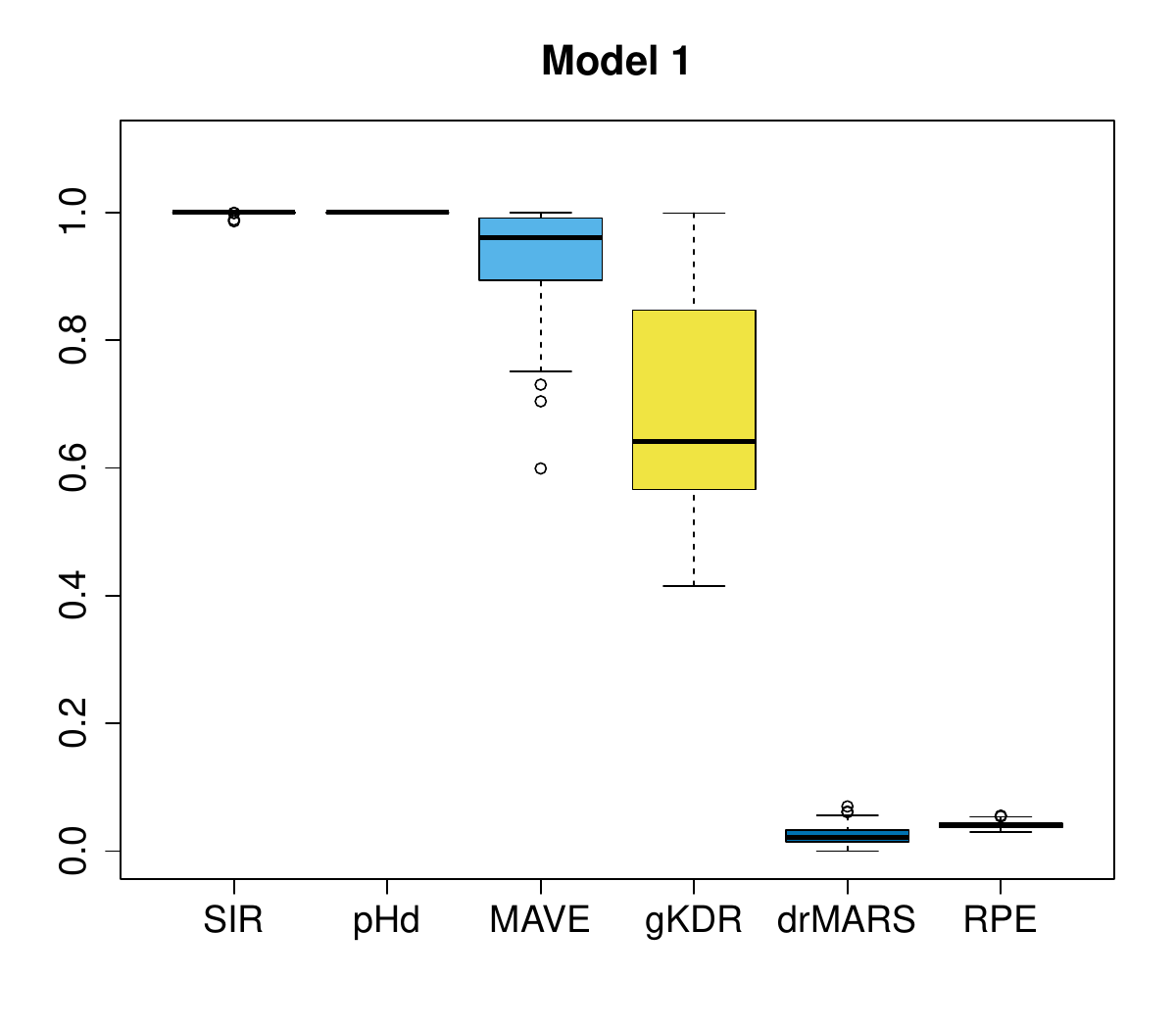}
    \includegraphics[width=0.3\textwidth]{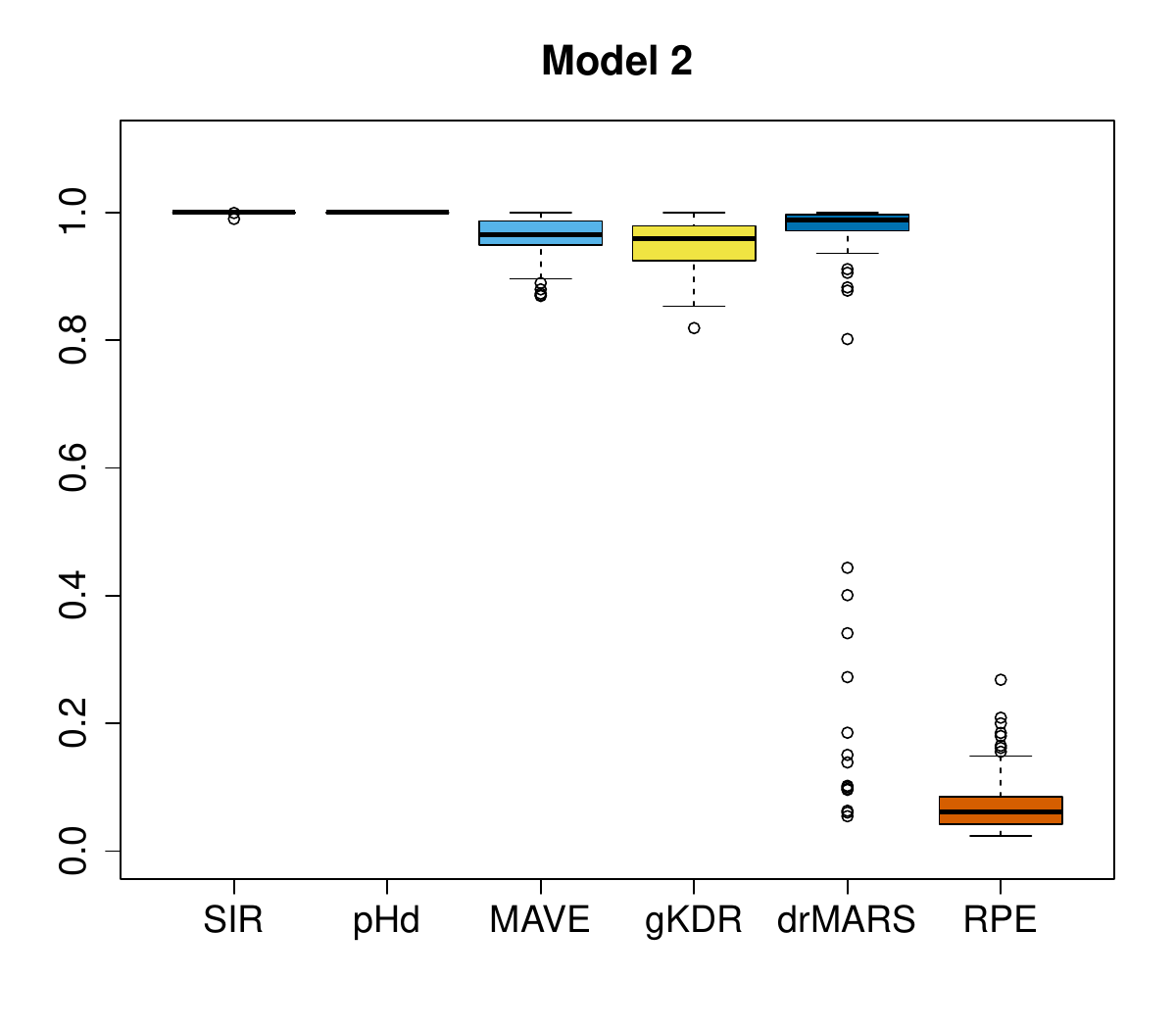}
    \includegraphics[width=0.3\textwidth]{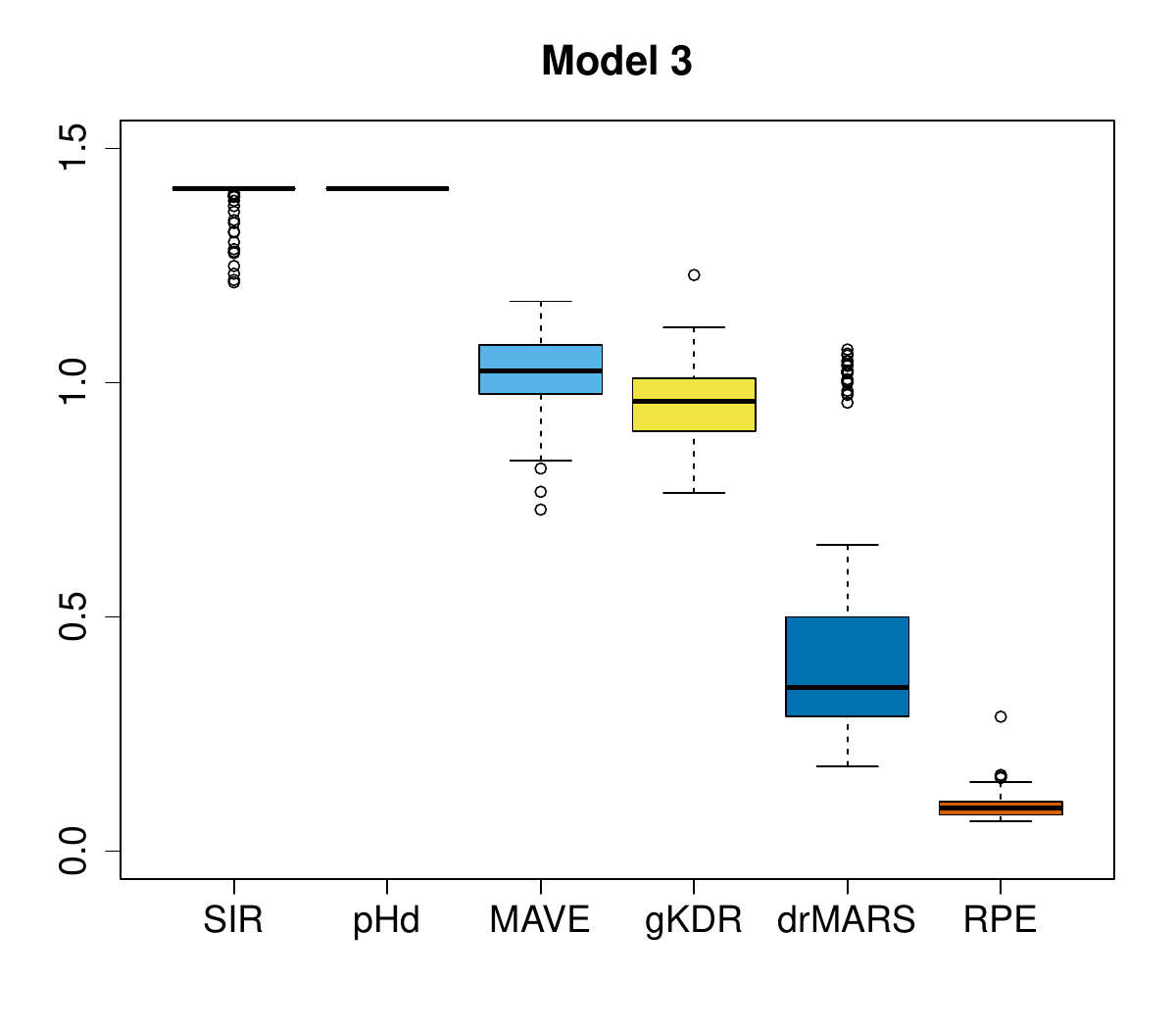}

    \includegraphics[width=0.3\textwidth]{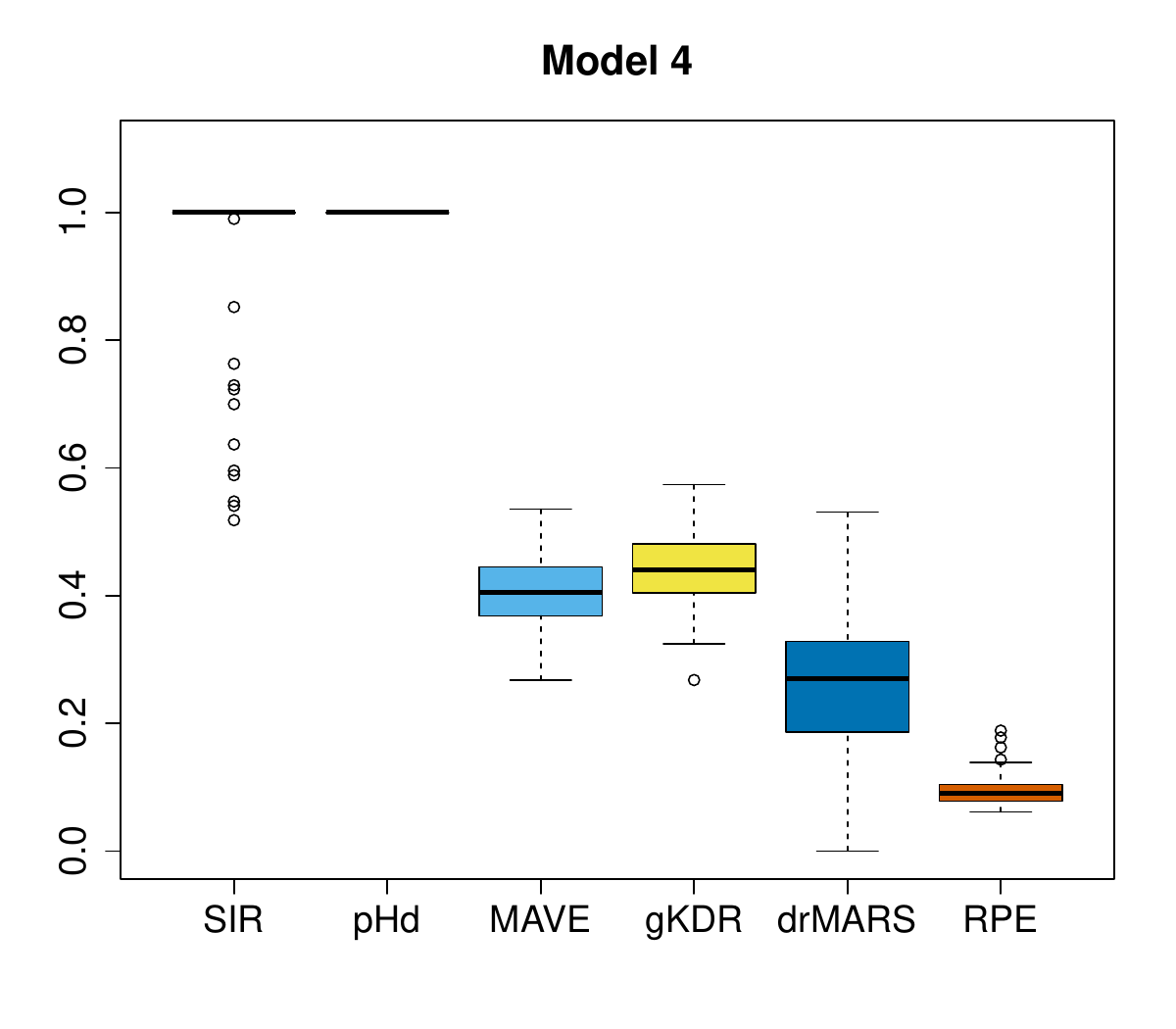}
    \includegraphics[width=0.3\textwidth]{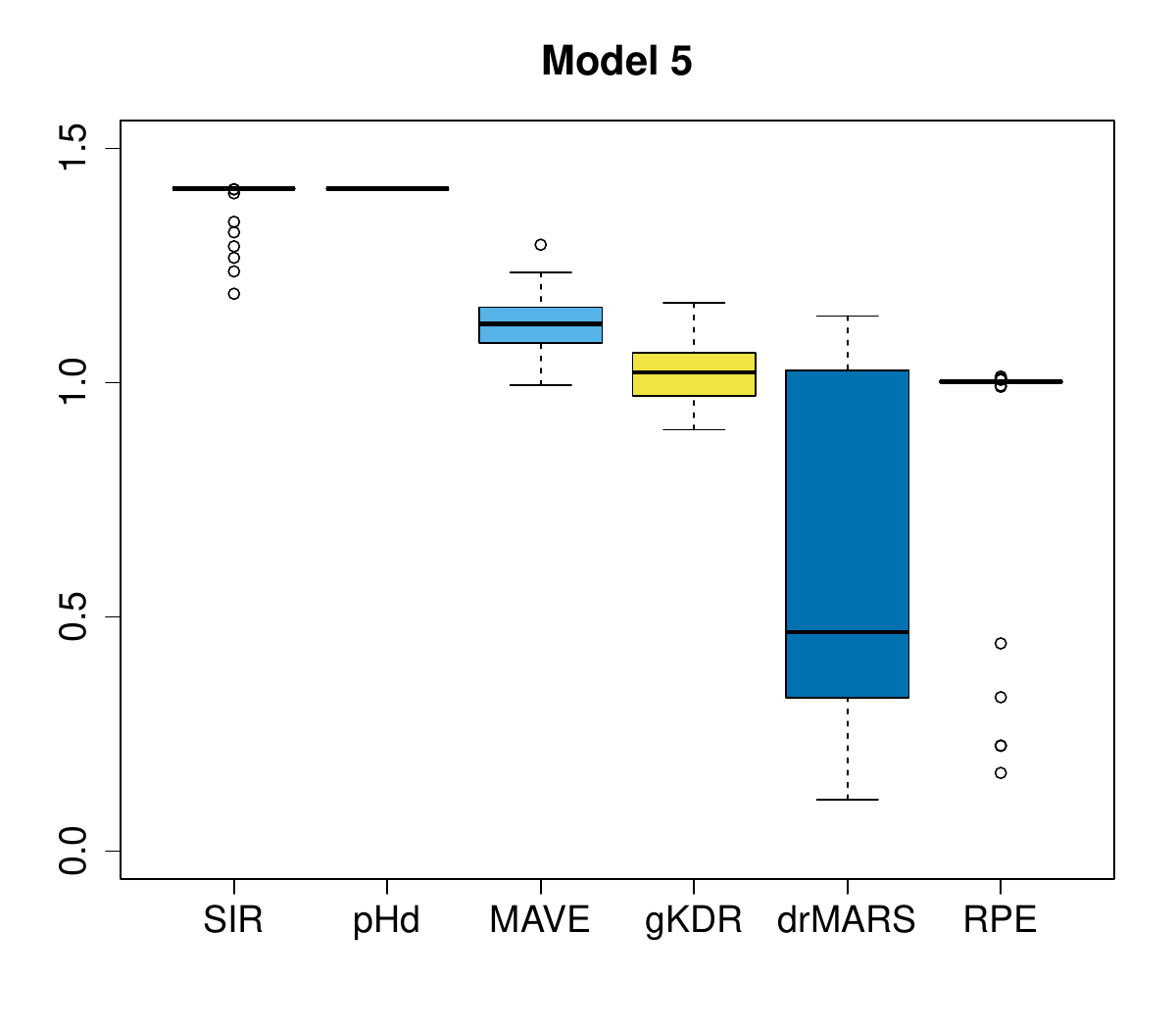}
    \includegraphics[width=0.3\textwidth]{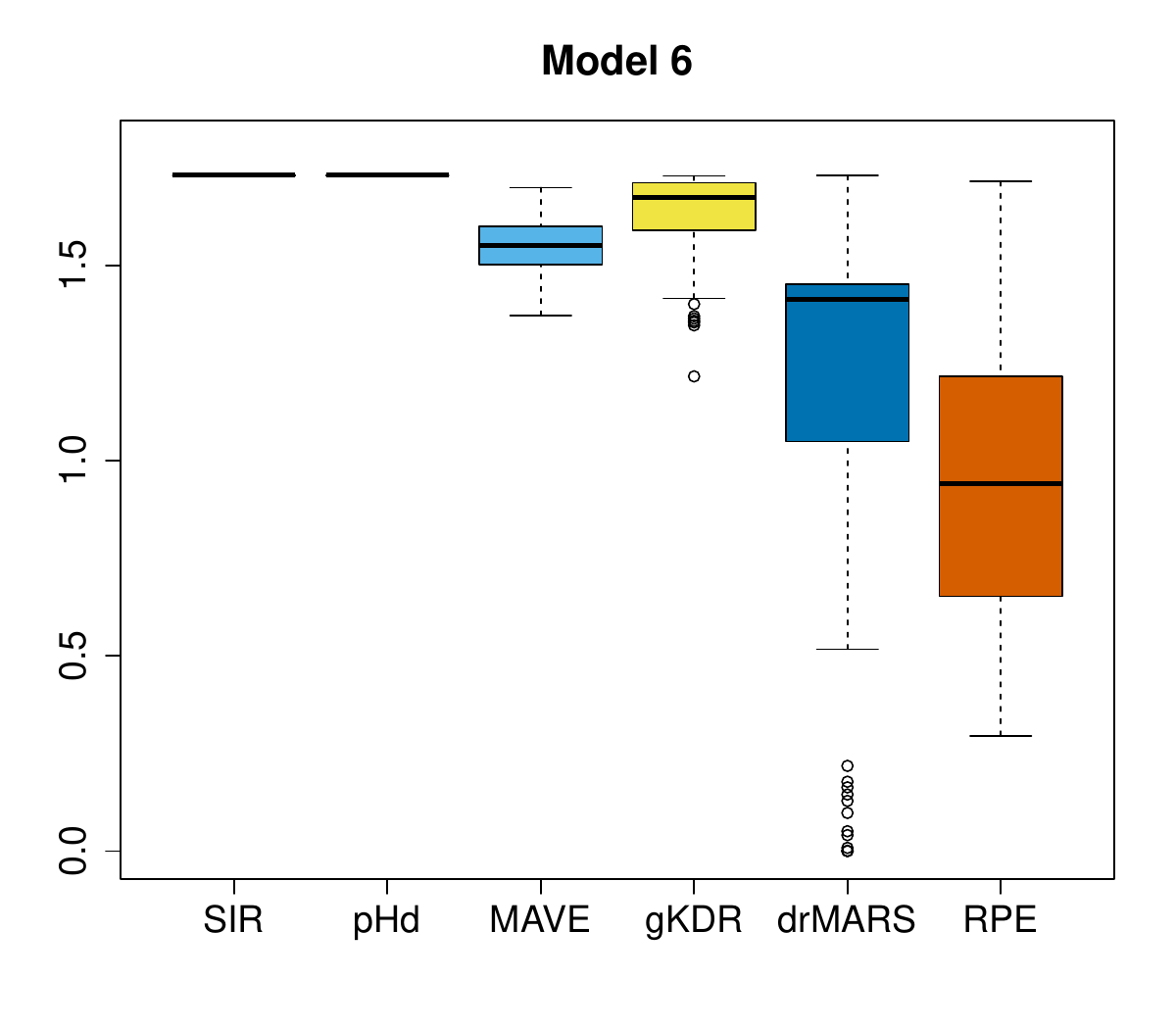}

    \includegraphics[width=0.3\textwidth]{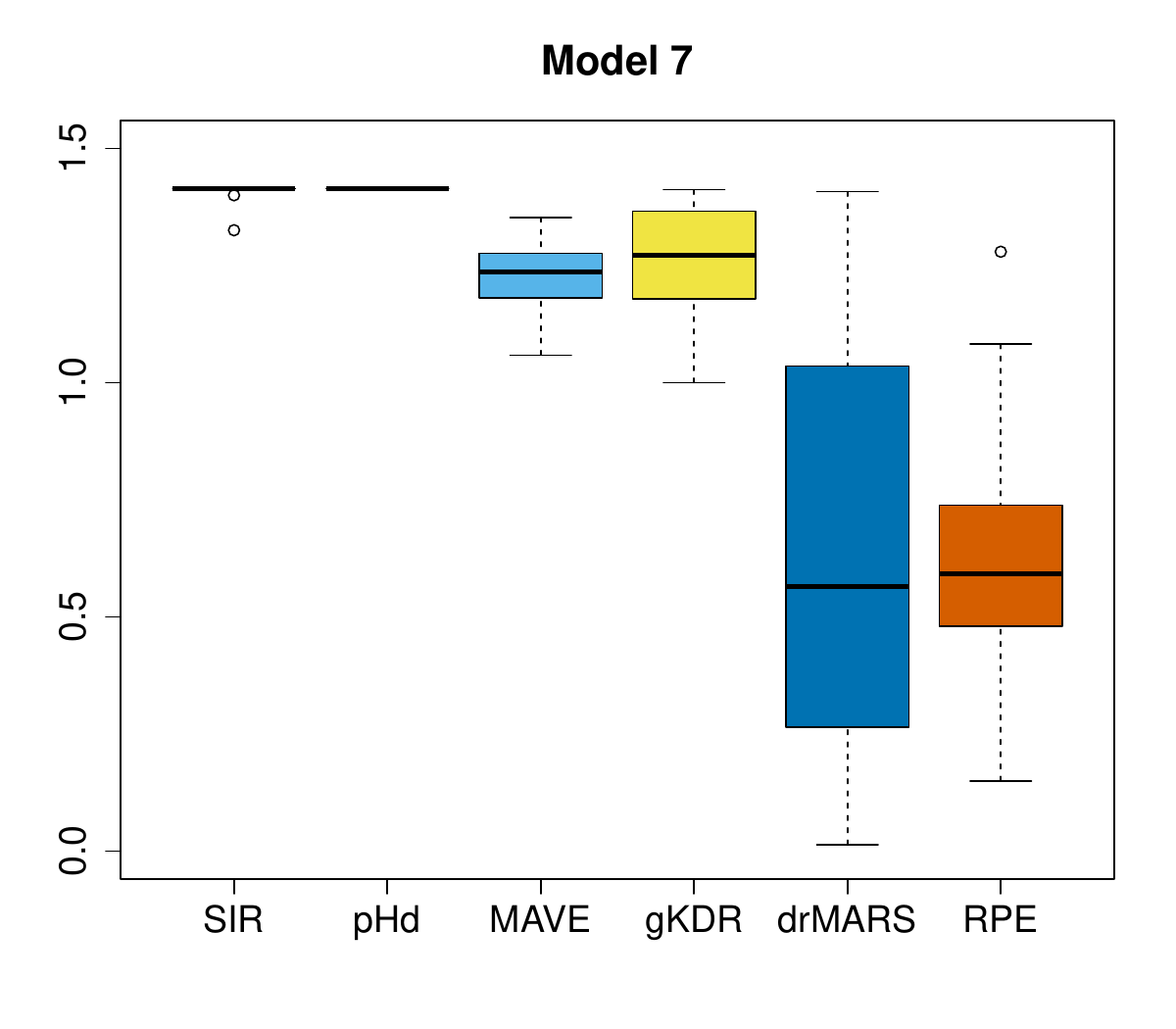}
    \includegraphics[width=0.3\textwidth]{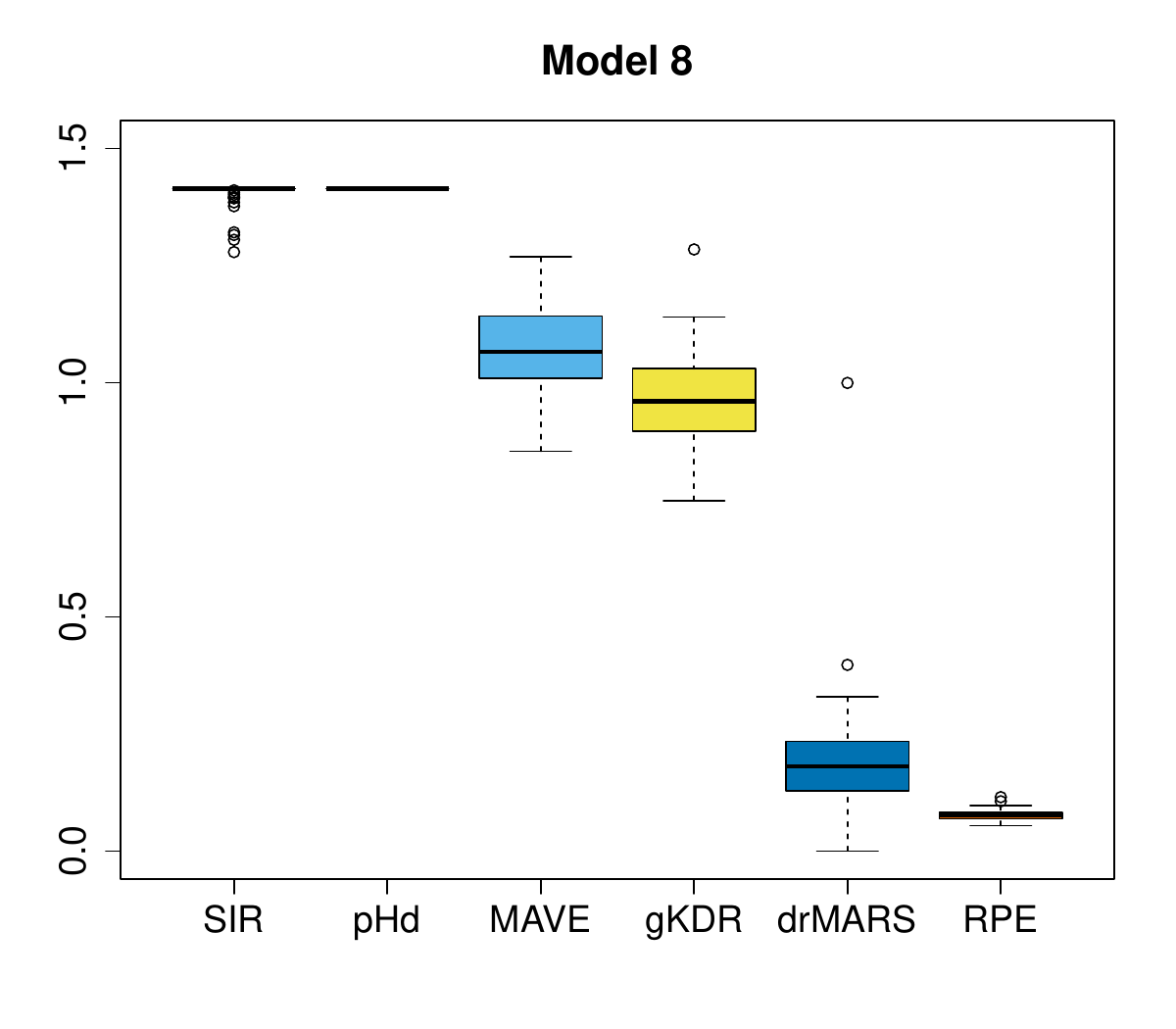}
    \includegraphics[width=0.3\textwidth]{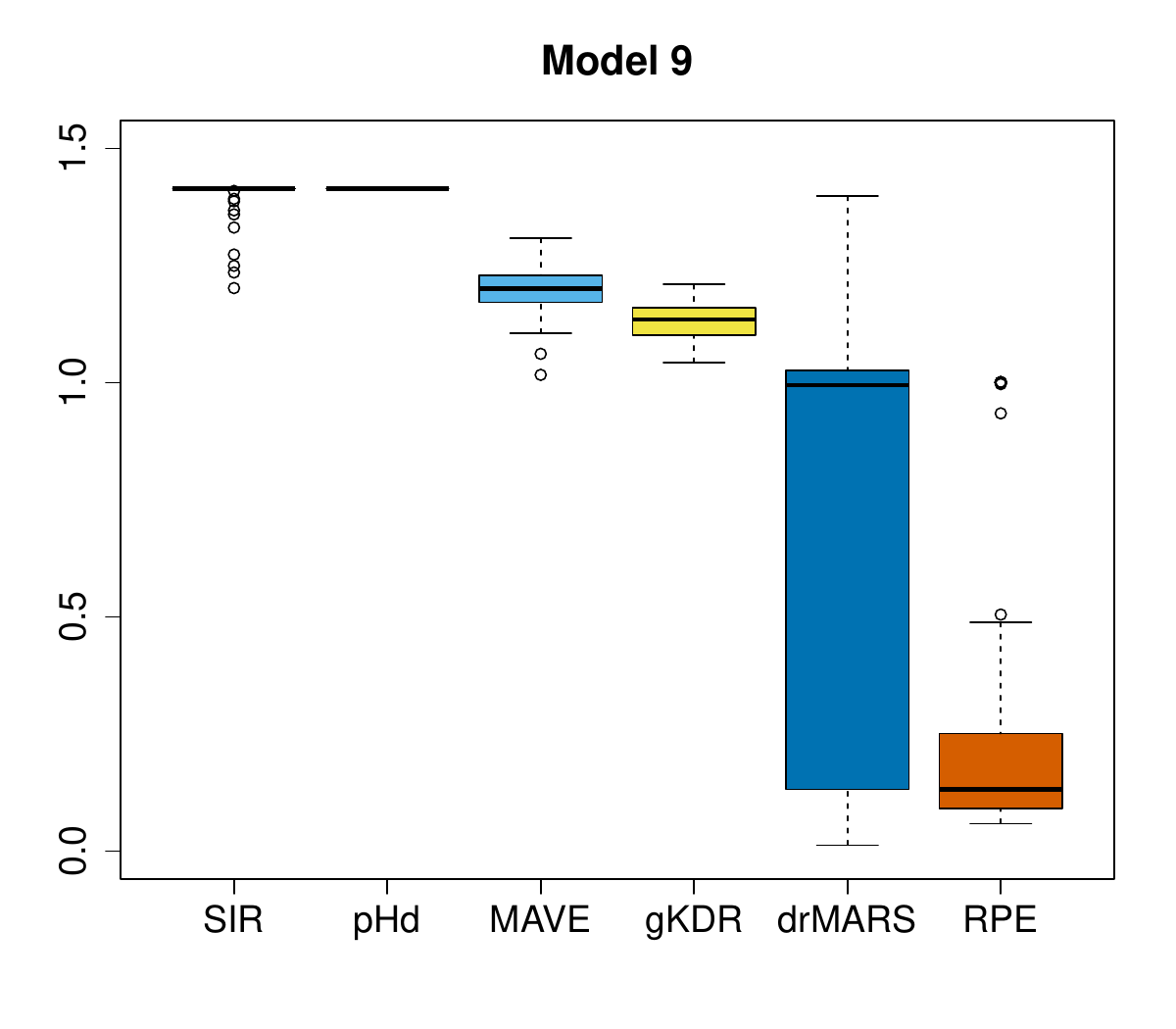}
    
  \caption{{\small Boxplots of $d_{\mathrm{FN}}(\hat{A}_0, A_0)$ for the different methods when $d_0$ is unknown for models 1a-9 over 100 simulations. Our RPE method here is based on the output from Algorithm~\ref{alg:RPEDR estimator}, with the corresponding projection dimension $\hat{d}_0$ chosen via Algorithm~\ref{alg:DimensionEstimator}. For the competing methods, $\hat{d}_0$ is chosen via the corresponding approach described at the beginning of this subsection.   The presented setting is $n=200$, $p=50$.}\label{fig:FN boxplots}}
\end{figure}

Figures~\ref{fig:FP boxplots} and~\ref{fig:FN boxplots} present results of $d_{\mathrm{FP}}$ and $d_{\mathrm{FN}}$, respectively, for different methods across the nine models.   In cases where $\hat{d}_0$ is frequently selected to be $d_0$, namely Models 3, 8 and 9, we observe excellent performance in terms of both false positives and false negatives.  In Models 1 and 4, our algorithm tends to identify the one-dimensional signal direction, by selecting the individual coordinate axes that contribute to the true projection. In these cases, we perform well in terms of the false negative measure, indicating that nearly all of the signal is captured. However, this comes at the cost of higher false positives, though our approach remains competitive with the other methods considered. For Models 2 and 6, we take  $\hat{d}_0$ to be larger $d_0$, and thus include some noise directions alongside the signals. However, our approach still excels in terms of false negatives, outperforming the other methods considered, especially in Model 2, where the competing methods fail to capture the signal directions. In Model 6 we pay a price in terms of false positives for this advantage.   Finally, in Models 5 and 7, our algorithms tends to select the $e_1$ and $e_2$ directions separately, while missing $e_3$, and thus also fails to combine these signals in the optimal way.   As a result, both the false positive and false negative performance are not close to zero. Nevertheless, our method remains competitive with the other algorithms considered in these examples. 

\section{Real world application}
\label{sec:realdata}
In this section, we compare the performance of our methods with competing approaches using real-world data. We make use of the following four datasets from the UC Irvine Machine Learning Repository:
\begin{enumerate}
  \item \textbf{Superconductivity}\footnote{\url{https://archive.ics.uci.edu/dataset/464/superconductivty+data}} \citep{Hamidieh2018ADS}: This dataset consists of measurements on 81 physical and atomic covariates for 21263 superconductors.  The response variable is the critical temperature of the superconductor. We use all $p = 81$ covariates and randomly select a subsample of size $n = 1000$ observations for training, and the remainder used for testing.  
  \item \textbf{Communities and Crime}\footnote{\url{https://archive.ics.uci.edu/dataset/183/communities+and+crime}} \citep{Redmond2002ADS}: This dataset consists of 1994 observations on 122 predictive variables relating to socioeconomic and police data. The response variable of interest is per capita rate of violent crimes. For simplicity, we remove variables with missing values, leaving $p = 105$ predictive covariates in our experiment and we take a random sample of $n = 1329$ observations for training. 
    \item \textbf{Residential Building}\footnote{\url{https://archive.ics.uci.edu/dataset/437/residential+building+data+set}} \citep{Rafiei2016ANM}: This dataset includes construction cost, sales prices, project variables, and economic variables corresponding to real estate single-family residential apartments in Tehran, Iran. There are 372 observations of 105 variables in total.    We use all $p = 105$ predictive variables and take a random sample of $n = 248$ observations for training.  There are two options for the response variable, either sales price or construction cost, both options are included in our experiments.
   \item \textbf{Geographical Origin of Music}\footnote{\url{https://archive.ics.uci.edu/dataset/315/geographical+original+of+music}} \citep{Zhou2014PredictingTG}: This dataset consists of 1059 observations on 116 variables relating to audio features of a piece of music. We use the latitude of the piece's origin as the response variable.  In this case all $p=116$ variables are used, and a subsample of $n = 706$ observations is taken for training.  
\end{enumerate}
In this section's experiments, since the true projection $A_0$ is unknown, we  measure the performance of the different techniques based on regression prediction accuracy after projecting the data.  More precisely, we fixed the projection dimension $\hat{d}_0$, set to either 3 or 10, and then apply the methods described in Section~\ref{sec:simsd0known} using only the training data. The interpretation is that we select the $\hat{d}_0$ most important directions chosen by each method.  We then apply MARS\footnote{We use cross-validation to detemine the degree, with the maximum order of interactions set to four.}  to the projected training data after each method. The performance is measured using the root-mean-squared-error (RMSE) on the corresponding projected test set.  The results of our real data experiments are presented in Table~\ref{table:realdata}, where we see that our methods enjoy competitive performance across all four datasets considered here. 
 
\begin{table}[!ht]
\centering
\small
\begin{tabular}{ c | c | c | c c c c c c | c c }
 & &  & \multicolumn{6}{c|}{Dimension reduction technique} & \multicolumn{2}{c}{Our Methods}\\
\hline
Dataset & No proj & $\hat{d}_0$ & SIR & pHd & MAVE & DR & gKDR & drMARS & RPE & RPE2 \\
\hline
\multirow{2}{*}{1} & \multirow{2}{*}{17.4} & 3 & \textbf{16.5} & 30.2 & 19.3 & 29.9 & 18.1 & 18.8 & 19.4 & 17.5\\
& & 10 & 17.3 & 27.8 & 18.0 & 22.2 & 17.9 & 20.4 & \textbf{17.1} & \textbf{17.1} \\
\hline
\multirow{2}{*}{2} &\multirow{2}{*}{0.180}& 3 & \textbf{0.127} & 0.212 & 0.140 & 0.184 & 0.132 & 0.138 &  0.129 & 0.129 \\
& & 10 & 0.137 & 0.187 & 0.135 & 0.136 & 0.132 & 0.140 & 0.132 & \textbf{0.126} \\
\hline
\multirow{2}{*}{3(sales)} & \multirow{2}{*}{150} & 3 & 295 & 1198 & 218 & / & 386 &  \textbf{158} & 218 & 218  \\
&  & 10 & 295 & 699 & 295 &/ & 429 & 211 & \textbf{124} & \textbf{124} \\
\hline
\multirow{2}{*}{3(cons.)} & \multirow{2}{*}{32.1} & 3 & 70.4 & 203 & 47.6 & / & 161 & 136 &  \textbf{34.7} &  \textbf{34.7}  \\
&  & 10 & 68.4 & 301 & 42.4 & / & 47.1 & 29.6 & \textbf{24.1} & \textbf{24.1} \\
\hline
\multirow{2}{*}{4} & \multirow{2}{*}{25.7} & 3 & 16.9 & 18.1 & 16.7 & / & 17.7 & 17.0 & 16.9 & \textbf{16.1} \\
& & 10 & 17.9 & 17.5 & \textbf{15.8} & / & 17.0 & 19.4 & 17.0 & 16.0 \\
\hline
\end{tabular}
\caption{\small The RMSE on the test set for our real data experiments. For comparison, we include the corresponding results when we apply the MARS algorithm directly to the original data in the ambient space (i.e. without projecting the data first). For datasets 3 and 4 the DR method had  very large RMSE and we therefore omitted the results.}
\label{table:realdata}
\end{table}

\section{Discussion and extensions}
\label{sec:conclusions}
We have introduced a new general approach to sufficient dimension reduction, based on aggregating an ensemble of carefully chosen projections.  The proposed framework is highly flexible, allowing for different distributions of random projections and a user-specified choice of base regression method.  We provide default recommendations for these aspects based on extensive numerical experiments, but we anticipate that many other possible choices will be effective in certain cases.       

There are several ways in which our framework could be extended further.  First, a natural assumption in high-dimensional settings is sparsity, specifically the case where the projection $A_0$ is sparse (i.e.~contains many zero entries).  In this scenario, it may be more effective to consider sparse random projections, but perhaps more importantly, we'd suggest that employing sparse singular value decomposition \citep[see, for example,][]{yang2014sparse} to aggregate the chosen projections in order to ensure the resulting estimate $\hat{A}_0$ is sparse.  

A second possible extension arises in the context of additive index models \citep[see, for example,][]{friedman1981projection,ruan2010dimension}.  In this case, we may seek to sequentially find the signal directions one-by-one. For the first signal direction, we could apply our Algorithm~\ref{alg:RPEDR estimator} directly, keeping only the first singular vector.  To identify the subsequent signal directions, we would then apply a modification of our algorithm that always includes the signals found so far and considers random projections that are orthogonal to the previously identified signals.     

Finally, while this paper has focused on dimension reduction, specifically estimating the projection matrix $A_0$, it would be interesting to consider alternative ways of aggregating the results when the primary objective is to predict the response $Y$.  In particular,  improved predictive performance might be achieved  by directly aggregating the predictions from the chosen projections, as opposed to fitting a new regression model after projecting the data with our estimated projection $\hat{A}_0$. These extensions are left for future work.

\section*{Acknowledgements}
The research of TIC was supported by Engineering and Physical Sciences Research Council (EPSRC) New Investigator Award EP/V002694/1.

\bibliography{SDRbib.bib}
\bibliographystyle{apalike}

\appendix

\section{Technical results and proofs}
\label{sec:proofs}

The main aim of this section is to prove Theorem~\ref{thm:choiceofL}, for which we will make use of the following basic technical lemma, see, for example, \citet[Theorem~4.2.2]{horn2012matrix}.
\begin{lemma} 
\label{lemma1}
    Let $A \in \mathbb{R}^{p \times p}$ be symmetric matrix, let $\lambda_1, \cdots, \lambda_p$ be the eigenvalues of $A$, $\lambda_{max} := \max(\lambda_1,\cdots,\lambda_p)$, $\lambda_{min} := \min (\lambda_1,\cdots,\lambda_p)$. Then $\lambda_{max} = \max_{\left \| v \right \|_2 = 1} v^T A v$, and $\lambda_{min} = \min_{\left \| v \right \|_2 = 1} v^T A v$.
\end{lemma}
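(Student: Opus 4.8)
The plan is to read off both identities directly from the spectral theorem for symmetric matrices, by reducing the quadratic form $v^T A v$ to a convex combination of the eigenvalues in an orthonormal eigenbasis. Since $A$ is symmetric, the spectral theorem supplies an orthogonal matrix $U = (u_1, \ldots, u_p)$ whose columns form an orthonormal basis of eigenvectors, with $A u_i = \lambda_i u_i$, so that $A = U \Lambda U^T$ where $\Lambda = \mathrm{diag}(\lambda_1, \ldots, \lambda_p)$. This diagonalisation is the only structural input needed.

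Next I would fix an arbitrary unit vector $v$ and expand it in the eigenbasis as $v = \sum_{i=1}^p c_i u_i$, where $c_i = u_i^T v$. Orthonormality of the $u_i$ gives $\sum_{i=1}^p c_i^2 = \|v\|_2^2 = 1$, and substituting into the quadratic form yields
\[
v^T A v = \sum_{i=1}^p \lambda_i c_i^2 .
\]
The point of this computation is that it exhibits $v^T A v$ as a convex combination of the eigenvalues: the weights $c_i^2$ are nonnegative and sum to one.

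From this representation both bounds follow at once, since $\lambda_{min} = \lambda_{min}\sum_i c_i^2 \leq \sum_i \lambda_i c_i^2 \leq \lambda_{max}\sum_i c_i^2 = \lambda_{max}$, so that $\lambda_{min} \leq v^T A v \leq \lambda_{max}$ for every unit $v$. To see that these bounds are attained, I would take $v$ to be a unit eigenvector associated with $\lambda_{max}$ (respectively $\lambda_{min}$), for which $v^T A v = \lambda_{max}$ (respectively $\lambda_{min}$). Combining the universal bound with its attainment gives $\max_{\|v\|_2 = 1} v^T A v = \lambda_{max}$ and $\min_{\|v\|_2 = 1} v^T A v = \lambda_{min}$, as claimed. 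There is no genuine obstacle here, as the result is a standard consequence of the diagonalisability of symmetric matrices; the only point needing (minor) care is that the reduction relies on $U$ being norm-preserving, which guarantees $\sum_i c_i^2 = 1$ and hence that the weights do indeed form a probability vector.
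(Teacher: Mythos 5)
Your proof is correct and complete: the spectral decomposition, the convex-combination representation $v^T A v = \sum_i \lambda_i c_i^2$, and attainment at the corresponding unit eigenvectors together give exactly the Rayleigh--Ritz characterisation. The paper itself does not prove this lemma at all --- it simply cites Theorem~4.2.2 of Horn and Johnson --- and your argument is precisely the standard proof of that cited result, so there is nothing to reconcile between the two.
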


\begin{proof}[Proof of Theorem~\ref{thm:choiceofL}]
Recall from Algorithm~\ref{alg:RPEDR estimator} that $\hat{\Pi} = \frac{1}{L} \sum_{\ell = 1}^{L} \mathbf{P}_{\ell, *} \mathbf{P}_{\ell, *}^T$ and $\hat{A}_0^{L}$ denotes the $\mathbb{R}^{p \times d_0}$ which has columns given by the eigenvectors of $\hat{\Pi}$ ordered according to the corresponding eigenvalues. Recall also that $\Pi^{\infty} = \mathbb{E} \mathbf{P}_{\ell, *} \mathbf{P}_{\ell, *}^T$.  Let $\Pi := A_0 A_0^T \in \mathbb{R}^{p\times p}$, and for $j \in [d_0]$ write $a_j$ for the $j$th column of $A_0$, so that $A_0 =(a_1,\cdots, a_{d_0})$.  Since $A_0^TA_0 = I_{d_0\times d_0}$, the $d_0$ largest eigenvalues of $\Pi$ are 1 and the remainder are $0$. Indeed, $\rank{(\Pi)} = \rank{(A_0)} = d_0$, so $\Pi$ has $d_0$ nonzero eigenvalues. Moreover, $\Pi a_j = A_0 A_0 ^T a_j = A_0 (A_0 ^T a_j) = A_0 e_j = a_j$ for $j \in [d_0]$.  Then, since $\Pi$ and $\hat{\Pi}$ are symmetric, by the Davis--Kahan Theorem \citep[Theorem~2]{yu2015useful} we have
\begin{equation}
\label{eq:Davis_kahan}
\begin{split}
    d_{\mathrm{F}}\bigl(\mathcal{S}(\hat{A}_0^{L}), \mathcal{S}({A}_0)\bigr)
    \equiv \bigl \| \sin{\Theta(\hat{A}_0^L, A_0)} \bigr \|_{F} 
    & \leq 2 d_0^{1/2} \bigl \| \hat{\Pi} - \Pi \bigl \|_{\mathrm{op}}  \\
  & \leq 2d_0^{1/2} \Bigl( \| \hat{\Pi} - \Pi^{\infty} \|_{\mathrm{op}} + \| \Pi^{\infty} - \Pi \|_{\mathrm{op}} \Bigr).
\end{split}
\end{equation}
The second term in~\eqref{eq:Davis_kahan} doesn't depend on $L$ nor the randomness in the projections.  The remainder of the proof therefore involves controlling the expectation of the first term in~\eqref{eq:Davis_kahan}. To that end, we first show that 
\begin{equation}
\label{eq:OPtailbound}
\begin{split}
    \mathbb{P} \Bigl ( \bigl \| \hat{\Pi} - \Pi^{\infty} \bigr \|_{\mathrm{op}} \geq t \Bigr ) \leq p \cdot e^{-t^2 L / 8}
\end{split}
\end{equation}
for $t > 0$.  To see this, for $\ell \in [L]$ let $\Pi_{\ell} := \mathbf{P}_{\ell, *} \mathbf{P}_{\ell, *}^T$ and write $J_{\ell} := \frac{1}{L}(\Pi_{\ell} - \Pi^{\infty})$, then
\[
    \hat{\Pi} - \Pi^{\infty} = \frac{1}{L} \sum_{\ell=1}^L \mathbf{P}_{\ell, *} \mathbf{P}_{\ell, *}^T - \Pi^{\infty}   = \sum_{\ell=1}^L J_{\ell}.
\]
Further $\mathbb{E}(J_{\ell}) = 0$ and $J_{\ell}$ (and thus also $J_{\ell}^2$) is symmetric.  Now, we show that $J_{\ell}^2 \preccurlyeq \frac{1}{L^2} I_{p \times p}$, in other words $\frac{1}{L^2}I_{p \times p} - J_{\ell}^2$ is positive semidefinite, almost surely, or equivalently that 
\begin{equation}
\label{eq:possemideff}
\sup_{\{v: \|v\|=1\}} v^T J_{\ell}^2 v \leq \frac{1}{L^2}.
\end{equation}
First, since $\mathbf{P}_{\ell,*}^T\mathbf{P}_{\ell,*} = I_{d \times d}$, the eigenvalues of $\Pi_\ell$ take values in $\{0,1\}$ and we have by Lemma~\ref{lemma1} that $0 \leq v^T \Pi_l v \leq 1$ whenever $\| v \|_2=1$.  Moreover, by the linearity of expectation, we also have
\[
v^T \Pi^{\infty} v = v^T \mathbb{E} (\Pi_{\ell}) v = \mathbb{E} (v^T \Pi_{\ell} v ) \in [0,1] 
\] 
 whenever $\| v \|_2=1$.  It follows that 
\begin{equation*}
\label{Pi_l - Pi_infty inequality}
v^T(\Pi_{\ell} - \Pi^{\infty})v = v^T \Pi_{\ell} v - v^T \Pi^{\infty} v \in [-1,1], \qquad \textrm{for all} \ \left \| v \right \|_2=1. 
\end{equation*}
Hence we can write $\Pi_{\ell} - \Pi^{\infty} = U_{\ell} D_{\ell} U_{\ell}^T$, where $U_{\ell} \in \mathbb{R}^{p\times p}$ is orthonormal and $D_{\ell}$ is a diagonal $p\times p$ matrix with all entries between $-1$ and $1$  \citep[Corollary~2.5.11]{horn2012matrix}.  We deduce that, for all $v \in \mathbb{R}^p$ with  $\| v \|_2=1$, we have
\[
   v^T J_{\ell}^2 v = \frac{1}{L^2}  v^T (\Pi_{\ell} - \Pi^{\infty})^2 v = \frac{1}{L^2}  v^T U_{\ell} D_{\ell}^2 U_{\ell}^T  v \leq \frac{1}{L^2} \|U_{\ell}^T  v\| = \frac{1}{L^2},
\]
which establishes~\eqref{eq:possemideff}.

Finally, because each of the projections $\mathbf{P}_{\ell,*}$ is chosen from a disjoint group of independently generated random projections, we have that $\mathbf{P}_{\ell,*}$ for $\ell \in [L]$ are independent and therefore $J_{\ell}$ for $\ell \in [L]$ are also independent.  It then follows by the Matrix Hoeffding inequality (see, for example \citet[Theorem~1.3]{tropp2012user}), that 
\[
    \mathbb{P} \Bigl ( \bigl \| \hat{\Pi} - \Pi^{\infty} \bigr \|_{\mathrm{op}} \geq t \Bigr ) \leq p \cdot e^{-\frac{t^2}{8 \| \sum_{\ell = 1}^{L} \frac{I_{p \times p}}{L^2} \|_{\mathrm{op}}}} =  p \cdot e^{-\frac{t^2 L}{8}},
\]
which establishes the bound in \eqref{eq:OPtailbound}.  To complete the proof we bound the expectation as follows:
\begin{equation*}
\label{expectation}
\begin{split}
    \mathbb{E} \Bigl (\bigl \| \hat{\Pi} - \Pi^{\infty} \bigr \|_{\mathrm{op}} \Bigr ) 
    &= \int_0^{\infty} \mathbb{P} \Bigl ( \bigl \| \hat{\Pi} - \Pi^{\infty} \bigr \|_{\mathrm{op}} \geq t \Bigr ) dt \\
    &\leq  \int_0^{\infty} p \cdot e^{-\frac{t^2 L}{8}} dt  = \frac{p \sqrt{2 \pi}}{\sqrt{L}},
\end{split}
\end{equation*}
as required. 
\end{proof}

\section{Full simulation results}
\label{sec:fullsims}

In this section, we present the full results of our simulation study, which follows the design of the experiments described in Section~\ref{sec:numerical} in the main text.  Here, for each model, we consider the settings with $p \in \{20, 50, 100\}$ and $n \in \{50, 200, 500\}$.  The full description of the algorithms considered, along with how any tuning parameters were chosen, is given in Section~\ref{sec:numerical}.   To save space, rather than presenting boxplots here, we instead focus on the mean and standard error of the the sin-theta distance (when $d_0$ is known) over 100 repeats of the experiments. The results are given in Table~\ref{table:1-4 sin theta distance} (for Models 1a-4) and Table~\ref{table:5-9 sin theta distance} (for Models 5-9).

\begin{table}[!ht]
\centering
\scriptsize
\begin{tabular}{ c c c  | c c c c c c | c c }
 \multicolumn{3}{c|}{Setting} & \multicolumn{6}{c}{Competing Methods} & \multicolumn{2}{|c}{Our Methods}\\
\hline
Model& $p$ & $n$ & SIR & pHd & MAVE & DR & gKDR & drMARS & RPE & RPE2 \\
\hline 
& \multirow{3}{*}{20} & 50
& $0.97_{0.05}$ & $0.81_{0.10}$ & $0.87_{0.14}$ & $0.97_{0.05}$ & $0.92_{0.09}$ & $0.13_{0.18}$ & $0.32_{0.14}$ & $\mathbf{0.12_{0.04}}$ \\
                   &                      & 200 
& $0.97_{0.05}$ & $0.40_{0.08}$ & $0.24_{0.04}$ & $0.33_{0.06}$ & $0.29_{0.06}$ & $\mathbf{0.03_{0.01}}$ & $0.20_{0.11}$ & $0.06_{0.02}$ \\ 
                &                      & 500 
& $0.96_{0.07}$ & $0.24_{0.05}$ & $0.10_{0.02}$ & $0.19_{0.03}$ & $0.14_{0.03}$ & $\mathbf{0.01_{0.01}}$ & $0.13_{0.08}$ & $0.05_{0.02}$ \\  
\multirow{2}{*}{1a}                & \multirow{3}{*}{50}  & 50
& / & / & $0.93_{0.09}$ & / & $0.97_{0.03}$ & $0.34_{0.28}$ & $0.46_{0.17}$ & $\mathbf{0.14_{0.08}}$ \\
\multirow{3}{*}{($d_0 = 1$)}           &                      & 200 
 & $0.99_{0.02}$ & $0.74_{0.09}$ & $0.93_{0.08}$ & $0.65_{0.12}$ & $0.98_{0.04}$ & $\mathbf{0.04_{0.01}}$ & $0.32_{0.17}$ & $0.06_{0.02}$ \\ 
             &                      & 500 
& $0.98_{0.04}$ & $0.43_{0.06}$ & $0.42_{0.12}$ & $0.34_{0.04}$ & $0.57_{0.20}$ & $\mathbf{0.02_{0.02}}$ & $0.28_{0.15}$ & $0.05_{0.02}$ \\ 
                   & \multirow{3}{*}{100} & 50
& / & / & $0.95_{0.07}$ & / & $0.98_{0.02}$ & $0.33_{0.28}$ & $0.57_{0.17}$ & $\mathbf{0.29_{0.29}}$\\                   
                   &              & 200 
& $1.00_{0.00}$ & $0.95_{0.03}$ & $0.97_{0.04}$ & $0.99_{0.01}$ & $0.99_{0.02}$ & $\mathbf{0.05_{0.02}}$ & $0.46_{0.17}$ & $\mathbf{0.06_{0.02}}$ \\  
                   &                      & 500 
& $0.99_{0.01}$ & $0.72_{0.07}$ & $0.94_{0.08}$ & $0.54_{0.04}$ & $0.99_{0.02}$ & $\mathbf{0.01_{0.01}}$ & $0.38_{0.16}$ & $0.05_{0.02}$ \\
\hline

                   & \multirow{3}{*}{20}  & 50 
& $0.98_{0.03}$ & $0.97_{0.03}$ & $0.97_{0.04}$ & $0.97_{0.05}$ & $0.97_{0.03}$ & $\mathbf{0.95_{0.17}}$ & $0.97_{0.08}$ & $0.97_{0.06}$ \\
                   &                      & 200 
& $0.98_{0.03}$ & $0.98_{0.03}$ & $0.98_{0.02}$ & $0.97_{0.03}$ & $0.98_{0.03}$ & $0.56_{0.46}$ & $\mathbf{0.06_{0.05}}$ & $0.10_{0.16}$\\ 
                   &                      & 500 
& $0.98_{0.03}$ & $0.97_{0.04}$ & $0.97_{0.03}$ & $0.97_{0.05}$ & $0.98_{0.03}$ & $0.18_{0.32}$ & $\mathbf{0.03_{0.00}}$ & $\mathbf{0.03_{0.00}}$ \\
\multirow{2}{*}{2} & \multirow{3}{*}{50}  & 50
& / & / & $0.98_{0.04}$ & / & $0.99_{0.01}$ & $0.99_{0.01}$ & $\mathbf{0.96_{0.13}}$ & $0.98_{0.05}$ \\                   
\multirow{3}{*}{($d_0 = 1$)}                   
                   &                      & 200 
& $0.99_{0.02}$ & $0.99_{0.01}$ & $0.99_{0.02}$ & $0.99_{0.02}$ & $0.99_{0.01}$ & $0.93_{0.23}$ & $\mathbf{0.09_{0.14}}$ &  $0.27_{0.37}$\\             
                   &                      & 500  
& $0.99_{0.01}$ & $0.99_{0.02}$ & $0.99_{0.01}$ & $0.99_{0.01}$ & $0.99_{0.01}$ & $0.60_{0.47}$ & $\mathbf{0.03_{0.01}}$ & $\mathbf{0.03_{0.01}}$ \\
                   & \multirow{3}{*}{100} & 50
& / & / & $0.99_{0.02}$ & / & $1.00_{0.01}$ & $\mathbf{0.98_{0.08}}$ & $1.00_{0.01}$ & $1.00_{0.01}$ \\
                   &                      & 200 
& $0.99_{0.04}$ & $0.99_{0.01}$ & $0.99_{0.02}$ & $1.00_{0.01}$ & $1.00_{0.01}$ & $0.94_{0.21}$ & $\mathbf{0.16_{0.26}}$ & $0.40_{0.42}$ \\  
                   &                      & 500 
& $1.00_{0.01}$ & $0.99_{0.01}$ & $0.99_{0.01}$ & $1.00_{0.01}$ & $0.99_{0.01}$ & $0.77_{0.41}$ & $\mathbf{0.02_{0.01}}$ & $0.03_{0.01}$ \\
\hline

                   & \multirow{3}{*}{20}  & 50
& $1.30_{0.07}$ & $1.27_{0.08}$ & $1.07_{0.10}$ & $1.29_{0.07}$ & $1.16_{0.11}$ & $0.92_{0.28}$ & $\mathbf{0.50_{0.31}}$ & $\mathbf{0.50_{0.30}}$ \\                   
 &                      & 200 
& $0.85_{0.13}$ & $0.88_{0.12}$ & $0.72_{0.13}$ & $0.99_{0.11}$ & $0.89_{0.12}$ & $0.48_{0.33}$ & $\mathbf{0.10_{0.02}}$ & $\mathbf{0.10_{0.02}}$\\                           &                      & 500 
& $0.51_{0.08}$ & $0.55_{0.08}$ & $0.39_{0.07}$ & $0.67_{0.11}$ & $0.60_{0.11}$ & $0.27_{0.26}$ & $\mathbf{0.07_{0.01}}$ & $\mathbf{0.07_{0.01}}$ \\
\multirow{2}{*}{3} & \multirow{3}{*}{50}  & 50
& / & / & $1.12_{0.10}$ & / & $1.32_{0.05}$ & $1.21_{0.17}$ & $\mathbf{0.66_{0.34}}$ & $\mathbf{0.70_{0.33}}$ \\
\multirow{3}{*}{($d_0 = 2$)}
 &                      & 200 
& $1.27_{0.06}$ & $1.29_{0.06}$ & $1.07_{0.08}$ & $1.23_{0.07}$ & $1.03_{0.08}$ & $0.54_{0.27}$ & $\mathbf{0.10_{0.03}}$ & $\mathbf{0.10_{0.04}}$ \\                          &                      & 500  
& $0.87_{0.10}$ & $0.93_{0.09}$ & $0.80_{0.10}$ & $0.98_{0.09}$ & $0.79_{0.07}$ & $0.32_{0.25}$ & $\mathbf{0.06_{0.01}}$ & $\mathbf{0.06_{0.01}}$ \\
                   & \multirow{3}{*}{100} & 50
& / & / & $1.16_{0.11}$ & / & $1.37_{0.02}$ & $1.18_{0.19}$ & $\mathbf{0.71_{0.35}}$ & $0.80_{0.31}$ \\                   
 &                      & 200 
& $1.38_{0.05}$ & $1.38_{0.02}$ & $1.09_{0.08}$ & $1.29_{0.04}$ & $1.22_{0.04}$ & $0.63_{0.25}$ & $\mathbf{0.10_{0.05}}$ & $\mathbf{0.11_{0.05}}$ \\  
                   &                      & 500 
& $1.22_{0.05}$ & $1.27_{0.06}$ & $1.06_{0.06}$ & $1.16_{0.06}$ & $0.94_{0.05}$ & $0.36_{0.27}$ & $\mathbf{0.06_{0.01}}$ & $0.07_{0.01}$ \\  
\hline

                   & \multirow{3}{*}{20}  & 50
& $0.90_{0.10}$ & $0.93_{0.08}$ & $\mathbf{0.51_{0.12}}$ & $0.85_{0.13}$ & $0.65_{0.13}$ & $\mathbf{0.50_{0.21}}$ & $0.53_{0.12}$ & $\mathbf{0.48_{0.19}}$ \\
                   &                      & 200 
& $0.30_{0.06}$ & $0.61_{0.11}$ & $0.29_{0.05}$ & $0.37_{0.07}$ & $0.38_{0.08}$ & $0.28_{0.22}$ & $0.34_{0.14}$ & $\mathbf{0.16_{0.04}}$ \\ 
                   &                      & 500 
& $0.17_{0.03}$ & $0.38_{0.08}$ & $0.18_{0.03}$ & $0.23_{0.04}$ & $0.23_{0.04}$ & $0.14_{0.13}$ & $0.29_{0.12}$ & $\mathbf{0.11_{0.03}}$ \\
\multirow{2}{*}{4} & \multirow{3}{*}{50}  & 50
& / & / & $\mathbf{0.55_{0.15}}$ & / & $0.86_{0.09}$ & $0.82_{0.20}$ & $0.66_{0.12}$ & $\mathbf{0.55_{0.19}}$ \\
\multirow{3}{*}{($d_0 = 1$)}
                   &                      & 200 
& $0.67_{0.12}$ & $0.90_{0.08}$ & $0.43_{0.06}$ & $0.60_{0.07}$ & $0.49_{0.06}$ & $0.32_{0.10}$ & $0.61_{0.15}$ & $\mathbf{0.14_{0.04}}$ \\                                                     &                      & 500  
& $0.30_{0.03}$ & $0.66_{0.09}$ & $0.30_{0.03}$ & $0.37_{0.04}$ & $0.34_{0.04}$ & $0.18_{0.09}$ & $0.53_{0.15}$ & $\mathbf{0.09_{0.03}}$ \\
                   & \multirow{3}{*}{100} & 50
& / & / & $\mathbf{0.61_{0.16}}$ & / & $0.95_{0.04}$ & $0.82_{0.20}$ & $0.74_{0.10}$ & $\mathbf{0.62_{0.18}}$ \\
                   &                      & 200 
& $0.99_{0.01}$ & $0.98_{0.02}$ & $0.46_{0.06}$ & $0.77_{0.07}$ & $0.64_{0.06}$ & $0.41_{0.09}$ & $0.70_{0.10}$ & $\mathbf{0.14_{0.04}}$ \\  
                   &                      & 500 
& $0.52_{0.05}$ & $0.88_{0.06}$ & $0.39_{0.04}$ & $0.52_{0.04}$ & $0.41_{0.03}$ & $0.20_{0.10}$ & $0.63_{0.15}$ & $\mathbf{0.09_{0.03}}$ \\  
\hline
\end{tabular}
\caption{\small The average sin-theta distance between $\hat{A}_0$ and $A_0$ over 100 repeats of the experiment for Models~1a, 2, 3 and 4, with $p \in \{20,50,100\}$ and $n \in \{50, 200,500\}$. We compare two versions of our method with seven competing existing approaches. For each setting, we also present $10$ times the standard error for each method in subscript.  The ``/" entries denote the the method is not applicable. }
\label{table:1-4 sin theta distance}
\end{table}

\begin{table}[!ht]
\centering
\scriptsize
\begin{tabular}{ c c c  | c c c c c c | c c }
 \multicolumn{3}{c|}{Setting} & \multicolumn{6}{c}{Competing Methods} & \multicolumn{2}{|c}{Our Methods}\\
\hline
Model& $p$ & $n$ & SIR & pHd & MAVE & DR & gKDR & drMARS & RPE & RPE2 \\
\hline
                   & \multirow{3}{*}{20}  & 50
& $1.28_{0.07}$ & $1.33_{0.06}$ & $1.15_{0.09}$ & $1.25_{0.09}$ & $1.15_{0.08}$ & $1.06_{0.19}$ & $1.06_{0.09}$ & $\mathbf{1.03_{0.13}}$ \\
                   &                      & 200 
& $1.00_{0.06}$ & $1.07_{0.08}$ & $0.95_{0.12}$ & $1.01_{0.08}$ & $0.95_{0.09}$ & $\mathbf{0.73_{0.31}}$ & $1.00_{0.00}$ & $0.93_{0.22}$\\  
                   &                      & 500 
& $0.85_{0.12}$ & $0.99_{0.05}$ & $0.64_{0.14}$ & $0.85_{0.12}$ & $0.74_{0.13}$ & $\mathbf{0.44_{0.33}}$ & $1.00_{0.00}$ & $0.66_{0.39}$ \\  
\multirow{2}{*}{5} & \multirow{3}{*}{50}  & 50
& / & / & $1.20_{0.10}$ & / & $1.30_{0.05}$ & $1.29_{0.12}$ & $\mathbf{1.12_{0.11}}$ & $\mathbf{1.13_{0.10}}$ \\
\multirow{3}{*}{($d_0 = 2$)}
                   &                      & 200 
& $1.24_{0.05}$ & $1.38_{0.03}$ & $1.15_{0.05}$ & $1.18_{0.05}$ & $1.09_{0.05}$ & $\mathbf{0.84_{0.29}}$ & $1.00_{0.00}$ & $0.97_{0.14}$\\                                    &                      & 500 
& $1.04_{0.04}$ & $1.11_{0.07}$ & $1.06_{0.05}$ & $1.05_{0.04}$ & $0.95_{0.05}$ & $\mathbf{0.50_{0.30}}$ & $1.00_{0.00}$ & $0.70_{0.36}$ \\ 
                   & \multirow{3}{*}{100} & 50
& / & / & $1.25_{0.08}$ & / & $1.36_{0.03}$ & $1.30_{0.12}$ & $\mathbf{1.15_{0.13}}$ & $1.19_{0.11}$ \\
                   &                      & 200 
& $1.40_{0.02}$ & $1.40_{0.01}$ & $1.18_{0.04}$ & $1.27_{0.03}$ & $1.20_{0.04}$ & $1.02_{0.21}$ & $1.00_{0.00}$ & $\mathbf{0.96_{0.15}}$ \\ 
                   &                      & 500 
& $1.17_{0.03}$ & $1.39_{0.03}$ & $1.15_{0.03}$ & $1.15_{0.03}$ & $1.05_{0.04}$ & $\mathbf{0.53_{0.29}}$ & $1.00_{0.00}$ & $0.74_{0.34}$ \\   
\hline

                   & \multirow{2}{*}{20}  & 50
& $1.59_{0.06}$ & $1.50_{0.08}$ & $1.38_{0.12}$ & $1.59_{0.07}$ & $1.52_{0.08}$ & $1.16_{0.46}$ & $\mathbf{1.14_{0.27}}$ & $1.20_{0.26}$ \\
                   &                      & 200 
& $1.58_{0.07}$ & $1.39_{0.09}$ & $1.26_{0.15}$ & $1.38_{0.10}$ & $1.16_{0.13}$ & $\mathbf{0.66_{0.56}}$ & $\mathbf{0.73_{0.28}}$ & $\mathbf{0.73_{0.29}}$ \\  
                   &                      & 500 
& $1.59_{0.06}$ & $1.36_{0.09}$ & $0.41_{0.05}$ & $0.98_{0.13}$ & $0.69_{0.10}$ & $0.49_{0.56}$ & $\mathbf{0.36_{0.12}}$ & $\mathbf{0.36_{0.13}}$ \\
\multirow{2}{*}{6} & \multirow{3}{*}{50}  & 50
& / & / & $1.47_{0.11}$ & / & $1.64_{0.03}$ & $1.53_{0.16}$ & $\mathbf{1.36_{0.23}}$ & $1.44_{0.16}$ \\
\multirow{3}{*}{($d_0 = 3$)}
                   &                      & 200 
& $1.68_{0.02}$ & $1.59_{0.04}$ & $1.55_{0.06}$ & $1.65_{0.03}$ & $1.65_{0.03}$ & $\mathbf{1.09_{0.51}}$ & $\mathbf{1.08_{0.35}}$ & $1.18_{0.36}$\\  
                   &                      & 500  
& $1.67_{0.03}$ & $1.54_{0.05}$ & $1.53_{0.06}$ & $1.49_{0.07}$ & $1.47_{0.06}$ & $\mathbf{0.66_{0.65}}$ & $\mathbf{0.67_{0.31}}$ & $\mathbf{0.66_{0.32}}$ \\
                   & \multirow{3}{*}{100} & 50
& / & / & $1.54_{0.11}$ & / & $1.68_{0.02}$ & $\mathbf{1.50_{0.21}}$ & $\mathbf{1.51_{0.17}}$ & $1.56_{0.14}$ \\
                   &                      & 200 
& $1.71_{0.03}$ & $1.68_{0.02}$ & $1.56_{0.05}$ & $1.70_{0.01}$ & $1.69_{0.02}$ & $\mathbf{1.29_{0.46}}$ & $\mathbf{1.31_{0.23}}$ & $1.41_{0.21}$ \\
                   &                      & 500 
& $1.70_{0.01}$ & $1.64_{0.02}$ & $1.63_{0.04}$ & $1.67_{0.03}$ & $1.69_{0.02}$ & $\mathbf{0.88_{0.65}}$ & $1.02_{0.34}$ & $1.12_{0.38}$ \\  
\hline

                   & \multirow{3}{*}{20}  & 50
& $1.34_{0.05}$ & $1.29_{0.06}$ & $1.23_{0.11}$ & $1.33_{0.06}$ & $1.30_{0.08}$ & $\mathbf{1.09_{0.24}}$ & $1.13_{0.19}$ & $1.15_{0.18}$ \\                   
                   &                      & 200 
& $1.34_{0.05}$ & $1.13_{0.11}$ & $1.05_{0.15}$ & $1.07_{0.08}$ & $1.02_{0.07}$ & $\mathbf{0.60_{0.23}}$ & $0.65_{0.19}$ & $\mathbf{0.61_{0.22}}$ \\   
                   &                      & 500 
& $1.33_{0.06}$ & $0.99_{0.13}$ & $0.54_{0.22}$ & $0.95_{0.09}$ & $0.78_{0.16}$ & $0.55_{0.16}$ & $\mathbf{0.55_{0.02}}$ & $\mathbf{0.54_{0.17}}$ \\
\multirow{2}{*}{7} & \multirow{3}{*}{50}  & 50
& / & / & $1.30_{0.09}$ & / & $1.36_{0.03}$ & $1.28_{0.16}$ & $\mathbf{1.22_{0.18}}$ & $1.26_{0.17}$\\
\multirow{3}{*}{($d_0 = 2$)}
                   &                      & 200 
& $1.39_{0.02}$ & $1.31_{0.05}$ & $1.29_{0.07}$ & $1.33_{0.06}$ & $1.37_{0.04}$ & $\mathbf{0.67_{0.29}}$ & $0.79_{0.22}$ & $0.78_{0.27}$\\  
                   &                      & 500  
& $1.39_{0.02}$ & $1.21_{0.09}$ & $1.20_{0.09}$ & $1.10_{0.05}$ & $1.20_{0.09}$ & $0.56_{0.20}$ & $0.57_{0.01}$ & $\mathbf{0.53_{0.11}}$ \\ 
                   & \multirow{3}{*}{100} & 50
& / & / & $1.35_{0.07}$ & / & $1.39_{0.02}$ & $\mathbf{1.25_{0.17}}$ & $1.31_{0.13}$ & $1.33_{0.11}$ \\
                   &                      & 200 
& $1.40_{0.01}$ & $1.39_{0.02}$ & $1.32_{0.06}$ & $1.40_{0.01}$ & $1.39_{0.02}$ & $\mathbf{0.87_{0.31}}$ & $0.95_{0.26}$ & $1.01_{0.26}$ \\  
                   &                      & 500 
& $1.40_{0.01}$ & $1.33_{0.04}$ & $1.33_{0.06}$ & $1.30_{0.06}$ & $1.39_{0.02}$ & $\mathbf{0.53_{0.17}}$ & $0.58_{0.01}$ & $0.56_{0.08}$ \\
\hline

                   & \multirow{3}{*}{20}  & 50
& $1.30_{0.08}$ & $1.21_{0.07}$ & $1.02_{0.11}$ & $1.29_{0.07}$ & $1.16_{0.13}$ & $0.38_{0.30}$ & $\mathbf{0.21_{0.15}}$ & $\mathbf{0.20_{0.12}}$ \\
                   &                      & 200 
& $1.03_{0.15}$ & $1.03_{0.07}$ & $0.71_{0.12}$ & $0.92_{0.13}$ & $0.83_{0.13}$ & $0.20_{0.13}$ & $\mathbf{0.08_{0.02}}$ & $\mathbf{0.08_{0.02}}$ \\  
                   &                      & 500 
& $0.68_{0.13}$ & $0.88_{0.19}$ & $0.38_{0.07}$ & $0.68_{0.14}$ & $0.58_{0.12}$ & $0.07_{0.05}$ & $\mathbf{0.07_{0.01}}$ & $\mathbf{0.07_{0.01}}$ \\
\multirow{2}{*}{8} & \multirow{3}{*}{50}  & 50
& / & / & $1.11_{0.11}$ & / & $1.32_{0.05}$ & $0.73_{0.30}$ & $\mathbf{0.24_{0.24}}$ & $0.27_{0.24}$ \\
\multirow{3}{*}{($d_0 = 2$)}
                   &                      & 200 
& $1.34_{0.05}$ & $1.20_{0.04}$ & $1.02_{0.06}$ & $1.16_{0.06}$ & $1.13_{0.09}$ & $0.26_{0.16}$ & $\mathbf{0.08_{0.01}}$ & $\mathbf{0.08_{0.01}}$\\  
                   &                      & 500  
& $1.06_{0.11}$ & $1.07_{0.02}$ & $0.82_{0.08}$ & $0.94_{0.10}$ & $0.83_{0.08}$ & $0.14_{0.10}$ & $\mathbf{0.07_{0.01}}$ & $\mathbf{0.07_{0.01}}$ \\ 
                   & \multirow{3}{*}{100} & 50
& / & / & $1.15_{0.10}$ & / & $1.37_{0.02}$ & $0.81_{0.27}$ & $\mathbf{0.33_{0.31}}$ & $\mathbf{0.35_{0.29}}$\\
                   &                      & 200 
& $1.40_{0.01}$ & $1.36_{0.02}$ & $1.06_{0.06}$ & $1.32_{0.04}$ & $1.28_{0.05}$ & $0.36_{0.14}$ & $\mathbf{0.07_{0.01}}$ & $0.08_{0.02}$\\
                   &                      & 500 
& $1.34_{0.04}$ & $1.19_{0.04}$ & $0.99_{0.05}$ & $1.10_{0.05}$ & $1.06_{0.07}$ & $0.17_{0.12}$ & $\mathbf{0.07_{0.01}}$ & $\mathbf{0.07_{0.01}}$ \\ 
\hline

                   & \multirow{3}{*}{20}  & 50
& $1.30_{0.07}$ & $1.31_{0.06}$ & $1.21_{0.09}$ & $1.32_{0.07}$ & $1.22_{0.07}$ & $\mathbf{0.95_{0.32}}$ & $1.00_{0.15}$ & $1.02_{0.16}$ \\
                   &                      & 200 
& $1.06_{0.05}$ & $1.17_{0.08}$ & $1.12_{0.06}$ & $1.18_{0.09}$ & $1.17_{0.07}$ & $0.31_{0.43}$ & $\mathbf{0.18_{0.21}}$ & $\mathbf{0.18_{0.22}}$ \\ 
                   &                      & 500 
& $1.00_{0.04}$ & $1.08_{0.05}$ & $1.03_{0.04}$ & $1.06_{0.04}$ & $1.08_{0.05}$ & $0.14_{0.31}$ & $\mathbf{0.07_{0.02}}$ & $\mathbf{0.07_{0.02}}$ \\
\multirow{2}{*}{9} & \multirow{3}{*}{50}  & 50
& / & / & $1.25_{0.08}$ & / & $1.33_{0.04}$ & $1.24_{0.15}$ & $\mathbf{1.07_{0.16}}$ & $1.12_{0.17}$ \\
\multirow{3}{*}{($d_0 = 2$)}
                   &                      & 200 
& $1.30_{0.06}$ & $1.33_{0.04}$ & $1.22_{0.04}$ & $1.30_{0.05}$ & $1.18_{0.04}$ & $0.69_{0.43}$ & $\mathbf{0.27_{0.30}}$ & $\mathbf{0.28_{0.30}}$ \\  
                   &                      & 500  
& $1.09_{0.03}$ & $1.21_{0.05}$ & $1.15_{0.03}$ & $1.18_{0.06}$ & $1.15_{0.04}$ & $0.47_{0.48}$ & $\mathbf{0.09_{0.06}}$ & $\mathbf{0.09_{0.06}}$ \\
                   & \multirow{3}{*}{100} & 50
& / & / & $1.28_{0.07}$ & / & $1.37_{0.02}$ & $1.24_{0.16}$ & $\mathbf{1.07_{0.18}}$ & $1.13_{0.15}$ \\
                   &                      & 200 
& $1.40_{0.03}$ & $1.39_{0.02}$ & $1.23_{0.05}$ & $1.34_{0.03}$ & $1.27_{0.04}$ & $0.89_{0.36}$ & $\mathbf{0.35_{0.35}}$ & $\mathbf{0.37_{0.36}}$ \\
                   &                      & 500 
& $1.26_{0.05}$ & $1.33_{0.04}$ & $1.22_{0.03}$ & $1.26_{0.04}$ & $1.15_{0.02}$ & $0.53_{0.48}$ & $\mathbf{0.11_{0.18}}$ & $\mathbf{0.11_{0.18}}$ \\ 
\hline 
\end{tabular}
\caption{\small The average sin-theta distance between $\hat{A}_0$ and $A_0$ over 100 repeats of the experiment for Models~5-9, with $p \in \{20,50,100\}$ and $n \in \{50, 200,500\}$. We compare two versions of our method with seven competing existing approaches. For each setting, we also present $10$ times the standard error for each method in subscript.  The ``/" entries denote the the method is not applicable. }
\label{table:5-9 sin theta distance}
\end{table}

The broad message from the results in Tables~\ref{table:1-4 sin theta distance} and~\ref{table:5-9 sin theta distance} is that our random projection based algorithms are competitive across the 9 models, for different values of $n$ and $p$.  Indeed, the relative performance is somewhat similar to the $p=50$ and $n = 200$ case presented in the main text.  There are some additional points worth noting here.  First,  when $n = 50$ and $p \in \{50, 100\}$ so that $p \geq n$, the SIR, pHd and DR methods are not applicable.  Overall, our approach enjoys the best performance for Models 2, 3, 4, 8 and 9.  For the remaining models, the performance of our random projection based method and drMARS is close, and these outperform the other competitors.   

\subsection{Computational cost}
\label{sec:runtime}
In this section we investigate the actual runtime in seconds of the different methods we consider.  In Table~\ref{table:model runtime}, we present the run time for one run of the experiment for Model 1a, for different values of $n$ and $p$. The second right column, with label `RPE', shows the running time of the experiment on a single core of a 3.20 GHz Intel i9-14900KF computer.  We also demonstrate the potential speed up that may be gained using parallel computing -- the RPE (10 cores) column presents the runtime of algorithm using ten cores of a 3.20 GHz Intel i9-14900KF computer.  Using parallel computing in this example results in approximately seven times faster compute time.

\begin{table}[!ht]
\centering
\small
\begin{tabular}{ c c  | c c c c c c | c c c }
 \multicolumn{2}{c|}{Setting} & \multicolumn{6}{c}{Competing Methods} & \multicolumn{2}{|c}{Our Methods}\\
\hline
$p$ & $n$ & SIR & pHd & MAVE & DR & gKDR & drMARS & RPE & RPE(10 cores) \\
\hline 
\multirow{3}{*}{20} & 50
& $0.007$ & $0.007$ & $0.051$ & $0.009$ & $0.015$ & $0.138$ & $120.151$ & $18.043$  \\
                                         & 200 
& $0.007$ & $0.005$ & $0.408$ & $0.007$ & $0.078$ & $0.211$ & $146.465$ & $22.750$  \\ 
                                      & 500 
& $0.008$ & $0.006$ & $0.961$ & $0.008$ & $0.381$ & $0.276$ & $194.996$ & $28.899$ \\ 
\hline
\multirow{3}{*}{50}  & 50
& / & / & $0.047$ & / & $0.026$ & $0.185$ & $367.642$ & $56.797$ \\
                      & 200 
& $0.011$ & $0.007$ & $0.827$ & $0.013$ & $0.224$ & $0.478$ & $508.912$ & $73.372$ \\ 
                                   & 500 
& $0.014$ & $0.009$ & $2.988$ & $0.020$ & $1.006$ & $0.635$ & $697.639$ & $98.825$  \\ 
\hline
\multirow{3}{*}{100} & 50
& / & / & $0.047$ & / & $0.109$ & $0.188$ & $1302.905$ & $276.516$ \\                   
& 200 
& $0.032$ & $0.016$ & $0.820$ & $0.018$ & $0.778$ & $0.841$ & $1724.973$ & $300.544$  \\  
                                        & 500 
& $0.033$ & $0.020$ & $7.945$ & $0.031$ & $3.327$ & $1.162$ & $2040.819$ & $356.924$  \\
\hline 
\end{tabular}
\caption{\small The runtime in seconds of the methods considered for the simulations for Model~1a, with $p \in \{20,50,100\}$ and $n \in \{50, 200,500\}$. The "/" entries denote the the method is not applicable. }
\label{table:model runtime}
\end{table}

\subsection{Dimension \texorpdfstring{$d_0$}{d0} unknown}

Tables~\ref{table:FP} and~\ref{table:FN} show the full results for the false positive and false negative metrics of additional experiments corresponding to those in Section~\ref{sec:numericald0unknown}. We present the results for $p \in \{20, 50, 100\}$ and $n = 200$. Results for $n = 50$ and $n = 500$, which are broadly similar, are omitted for brevity.

\begin{table}[!ht]
\centering
\small
\begin{tabular}{ c c | c c c c c | c  }
 \multicolumn{2}{c|}{Setting} & \multicolumn{5}{c}{Competing Methods} & \multicolumn{1}{|c}{Our Methods}\\
\hline
Model& $p$ & SIR & pHd & MAVE & gKDR & drMARS & RPE \\
\hline 
\multirow{3}{*}{1a ($d_0 = 1)$}& 20
& $0.06_{0.23}$ & $0.07_{0.17}$ & $0.24_{0.04}$ & $1.48_{0.60}$ & $1.15_{0.76}$ & $1.00_{0.00}$ \\ 

                   & 50 
& $0.04_{0.20}$ & $0.00_{0.00}$ & $0.93_{0.08}$ & $2.22_{0.33}$ & $1.54_{0.80}$ & $1.00_{0.04}$ \\  

                   & 100 
& $0.00_{0.00}$ & $0.00_{0.00}$ & $0.97_{0.04}$ & $2.87_{0.26}$ & $1.82_{0.84}$ & $1.07_{0.21}$ \\

\hline 
\multirow{3}{*}{2 ($d_0=1$)} & 20 
& $0.04_{0.22}$ & $0.00_{0.00}$ & $1.68_{0.20}$ & $1.51_{0.42}$ & $1.33_{0.58}$ & $0.59_{0.59}$ \\ 

                   & 50  
& $0.02_{0.14}$ & $0.00_{0.00}$ & $1.70_{0.12}$ & $2.13_{0.47}$ & $1.81_{0.62}$ & $1.02_{0.83}$ \\  

                   & 100  
& $0.00_{0.00}$ & $0.00_{0.00}$ & $1.68_{0.13}$ & $2.63_{0.38}$ & $1.90_{0.70}$ & $1.65_{1.07}$ \\  

\hline 
\multirow{3}{*}{3 ($d_0=2$)} & 20  
& $0.46_{0.25}$ & $0.04_{0.15}$ & $0.92_{0.30}$ & $1.23_{0.52}$ & $1.06_{0.63}$ & $0.10_{0.02}$ \\     

                   & 50  
& $0.19_{0.40}$ & $0.00_{0.00}$ & $1.34_{0.21}$ & $1.94_{0.46}$ & $1.50_{0.73}$ & $0.19_{0.29}$ \\     

                   & 100  
& $0.00_{0.00}$ & $0.00_{0.00}$ & $1.35_{0.20}$ & $2.55_{0.41}$ & $1.54_{0.84}$ & $0.35_{0.47}$ \\ 

\hline 

\multirow{3}{*}{4 ($d_0=1$)} & 20 
& $0.32_{0.19}$ & $0.00_{0.00}$ & $1.06_{0.35}$ & $1.32_{0.61}$ & $1.09_{0.67}$ & $1.42_{0.00}$ \\

                   & 50  
& $0.09_{0.25}$ & $0.00_{0.00}$ & $1.27_{0.21}$ & $2.10_{0.49}$ & $1.50_{0.81}$ & $1.44_{0.09}$ \\  

                   & 100 
& $0.00_{0.00}$ & $0.00_{0.00}$ & $1.28_{0.20}$ & $2.49_{0.45}$ & $1.72_{0.83}$ & $1.54_{0.18}$ \\

\hline
\multirow{3}{*}{5 ($d_0=2$)} & 20 
& $0.35_{0.21}$ & $0.01_{0.10}$ & $0.98_{0.34}$ & $1.45_{0.47}$ & $0.97_{0.55}$ & $1.01_{0.01}$ \\ 

                   & 50   
& $0.09_{0.28}$ & $0.00_{0.00}$ & $1.40_{0.19}$ & $2.06_{0.39}$ & $1.47_{0.67}$ & $1.05_{0.13}$ \\  

                   & 100 
& $0.00_{0.00}$ & $0.00_{0.00}$ & $1.41_{0.18}$ & $2.52_{0.40}$ & $1.55_{0.72}$ & $1.19_{0.25}$ \\

\hline 

\multirow{3}{*}{6 ($d_0=3$)} & 20   
& $0.03_{0.15}$ & $0.00_{0.00}$ & $1.26_{0.29}$ & $0.85_{0.37}$ & $0.59_{0.57}$ & $1.01_{0.46}$ \\  

                   & 50  
& $0.00_{0.00}$ & $0.00_{0.00}$ & $1.55_{0.23}$ & $1.47_{0.56}$ & $1.10_{0.79}$ & $1.97_{0.56}$ \\  

                   & 100  
& $0.00_{0.00}$ & $0.00_{0.00}$ & $1.56_{0.22}$ & $2.55_{0.50}$ & $1.36_{0.73}$ & $2.47_{0.62}$ \\
 
\hline 
\multirow{3}{*}{7 ($d_0=2$)} & 20  
& $0.02_{0.13}$ & $0.00_{0.00}$ & $1.31_{0.32}$ & $1.00_{0.53}$ & $0.87_{0.57}$ & $0.99_{0.44}$ \\

                   & 50  
& $0.02_{0.13}$ & $0.00_{0.00}$ & $1.61_{0.17}$ & $1.91_{0.55}$ & $1.23_{0.77}$ & $1.63_{0.65}$ \\  

                   & 100  
& $0.00_{0.00}$ & $0.00_{0.00}$ & $1.63_{0.16}$ & $2.56_{0.41}$ & $1.52_{0.82}$ & $2.26_{0.79}$ \\ 

\hline 
\multirow{3}{*}{8 ($d_0=2$)} & 20   
& $0.37_{0.34}$ & $0.11_{0.17}$ & $0.48_{0.24}$ & $1.56_{0.40}$ & $0.99_{0.63}$ & $0.08_{0.02}$\\

                   & 50   
& $0.12_{0.35}$ & $0.00_{0.00}$ & $0.82_{0.25}$ & $2.23_{0.35}$ & $1.42_{0.71}$ & $0.09_{0.09}$ \\  

                   & 100 
& $0.00_{0.00}$ & $0.00_{0.00}$ & $0.90_{0.22}$ & $2.65_{0.35}$ & $1.62_{0.80}$ & $0.18_{0.31}$ \\

\hline 
\multirow{3}{*}{9 ($d_0=2$)} & 20  
& $0.29_{0.28}$ & $0.00_{0.00}$ & $1.45_{0.24}$ & $1.63_{0.40}$ & $0.86_{0.64}$ & $0.11_{0.14}$ \\ 

                   & 50   
& $0.10_{0.29}$ & $0.00_{0.00}$ & $1.53_{0.16}$ & $2.23_{0.38}$ & $1.26_{0.80}$ & $0.28_{0.36}$ \\ 

                   & 100  
& $0.00_{0.00}$ & $0.00_{0.00}$ & $1.52_{0.16}$ & $2.67_{0.39}$ & $1.74_{0.81}$ & $0.41_{0.51}$ \\

\hline 
\end{tabular}

\caption{\small The average of $d_{\mathrm{FP}}(\mathcal{S}(\hat{A}_0), \mathcal{S}(A_0))$ over 100 repeats of the experiment for Models~1a-9, with $p \in \{20,50,100\}$ and $n =200$. For each setting, we also present $10$ times the standard error for each method in subscript. }
\label{table:FP}
\end{table}

\begin{table}[!ht]
\centering
\small
\begin{tabular}{ c c | c c c c c | c  }
 \multicolumn{2}{c|}{Setting} & \multicolumn{5}{c}{Competing Methods} & \multicolumn{1}{|c}{Our Methods}\\
\hline
Model& $p$ & SIR & pHd & MAVE & gKDR & drMARS & RPE \\
\hline 
\multirow{3}{*}{1a ($d_0 = 1)$}& \multirow{1}{*}{20}  
& $0.99_{0.04}$ & $0.88_{0.25}$ & $0.24_{0.04}$ & $0.25_{0.06}$ & $0.02_{0.02}$ & $0.05_{0.01}$ \\
                      & \multirow{1}{*}{50} 
& $1.00_{0.00}$ & $1.00_{0.00}$ & $0.93_{0.08}$ & $0.69_{0.15}$ & $0.02_{0.01}$ & $0.04_{0.01}$ \\
                           & \multirow{1}{*}{100} 
& $1.00_{0.00}$ & $1.00_{0.00}$ & $0.97_{0.04}$ & $0.83_{0.07}$ & $0.03_{0.02}$ & $0.04_{0.00}$ \\ 
                \hline 
\multirow{3}{*}{2 ($d_0=1$)} & \multirow{1}{*}{20} 
&  $1.00_{0.00}$ & $1.00_{0.00}$ & $0.92_{0.07}$ & $0.94_{0.05}$ & $0.48_{0.46}$ & ${0.05_{0.02}}$ \\ 
                             & \multirow{1}{*}{50} 
& $1.00_{0.00}$ & $1.00_{0.00}$ & $0.96_{0.03}$ & $0.95_{0.04}$ & $0.87_{0.29}$ & ${0.07_{0.04}}$ \\ 
                    & \multirow{1}{*}{100} 
& $1.00_{0.00}$ & $1.00_{0.00}$ & $0.97_{0.03}$ & $0.96_{0.02}$ & $0.90_{0.25}$ & ${0.09_{0.08}}$ \\ 
                  \hline 
\multirow{3}{*}{3 ($d_0=2$)} & \multirow{1}{*}{20} 
& $1.07_{0.16}$ & $1.40_{0.07}$ & $0.69_{0.12}$ & $0.88_{0.17}$ & $0.43_{0.34}$ & ${0.10_{0.02}}$ \\ 
               & \multirow{1}{*}{50} 
& $1.40_{0.05}$ & $1.41_{0.00}$ & $1.02_{0.09}$ & $0.96_{0.08}$ & $0.48_{0.28}$ & ${0.10_{0.03}}$ \\  
                  & \multirow{1}{*}{100}
& $1.41_{0.00}$ & $1.41_{0.00}$ & $1.05_{0.09}$ & $1.12_{0.06}$ & $0.56_{0.30}$ & ${0.10_{0.09}}$ \\ 
               \hline 

\multirow{3}{*}{4 ($d_0=1$)} & \multirow{1}{*}{20} 
& $0.35_{0.19}$ & $1.00_{0.00}$ & $0.27_{0.05}$ & $0.35_{0.07}$ & $0.18_{0.08}$ & ${0.11_{0.03}}$ \\
                  & \multirow{1}{*}{50} 
& $0.96_{0.11}$ & $1.00_{0.00}$ & $0.40_{0.05}$ & $0.44_{0.05}$ & $0.26_{0.10}$ & ${0.09_{0.02}}$ \\
                 & \multirow{1}{*}{100}
& $1.00_{0.00}$ & $1.00_{0.00}$ & $0.43_{0.06}$ & $0.58_{0.05}$ & $0.32_{0.12}$ & ${0.09_{0.02}}$ \\
            \hline
\multirow{3}{*}{5 ($d_0=2$)} & \multirow{1}{*}{20} 
& $1.11_{0.13}$ & $1.41_{0.00}$ & $0.94_{0.14}$ & $0.88_{0.14}$ & ${0.55_{0.40}}$ & $0.93_{0.23}$ \\
                  & \multirow{1}{*}{50} 
& $1.41_{0.04}$ & $1.41_{0.00}$ & $1.13_{0.06}$ & $1.02_{0.06}$ & ${0.61_{0.34}}$ & $0.97_{0.16}$ \\  
                 & \multirow{1}{*}{100}
& $1.41_{0.00}$ & $1.41_{0.00}$ & $1.16_{0.05}$ & $1.15_{0.04}$ & ${0.89_{0.31}}$ & $0.95_{0.17}$ \\
            \hline 
\multirow{3}{*}{6 ($d_0=3$)} & \multirow{1}{*}{20}
& $1.73_{0.02}$ & $1.73_{0.00}$ & $1.26_{0.17}$ & $1.37_{0.22}$ & $1.04_{0.56}$ & ${0.63_{0.25}}$ \\ 
                              & \multirow{1}{*}{50}
& $1.71_{0.00}$ & $1.73_{0.00}$ & $1.55_{0.07}$ & $1.63_{0.11}$ & $1.22_{0.48}$ & ${0.93_{0.35}}$ \\  
                                & \multirow{1}{*}{100}
& $1.73_{0.00}$ & $1.73_{0.00}$ & $1.57_{0.07}$ & $1.63_{0.05}$ & $1.28_{0.50}$ & ${1.15_{0.28}}$ \\
               \hline 
\multirow{3}{*}{7 ($d_0=2$)} & \multirow{1}{*}{20} 
& $1.41_{0.00}$ & $1.41_{0.00}$ & $0.98_{0.15}$ & $1.01_{0.11}$ & $0.61_{0.39}$ & ${0.56_{0.20}}$ \\
                                      & \multirow{1}{*}{50}  
& $1.41_{0.01}$ & $1.41_{0.00}$ & $1.23_{0.06}$ & $1.27_{0.11}$ & ${0.61_{0.40}}$ & ${0.62_{0.23}}$ \\ 
                                    & \multirow{1}{*}{100}
& $1.41_{0.00}$ & $1.41_{0.00}$ & $1.28_{0.08}$ & $1.32_{0.05}$ & ${0.81_{0.37}}$ & ${0.80_{0.26}}$ \\ 
               \hline 
\multirow{3}{*}{8 ($d_0=2$)} &20 
& $1.25_{0.16}$ & $1.31_{0.16}$ & $0.84_{0.19}$ & $0.73_{0.14}$ & $0.18_{0.18}$ & ${0.08_{0.02}}$ \\  
                                  & 50  
& $1.41_{0.02}$ & $1.41_{0.00}$ & $1.07_{0.09}$ & $0.96_{0.10}$ & $0.19_{0.11}$ & ${0.08_{0.01}}$ \\ 
                                    &100  
& $1.41_{0.00}$ & $1.41_{0.00}$ & $1.11_{0.09}$ & $1.15_{0.06}$ & $0.25_{0.10}$ & ${0.07_{0.01}}$ \\
               \hline 
\multirow{3}{*}{9 ($d_0=2$)} & 20 
& $1.24_{0.15}$ & $1.41_{0.08}$ & $1.08_{0.07}$ & $1.11_{0.10}$ & $0.31_{0.44}$ & ${0.26_{0.34}}$ \\ 
                                    & 50 
& $1.40_{0.04}$ & $1.41_{0.00}$ & $1.20_{0.05}$ & $1.13_{0.04}$ & $0.68_{0.44}$ & ${0.29_{0.33}}$ \\  
                                 & 100 
& $1.41_{0.00}$ & $1.41_{0.00}$ & $1.21_{0.05}$ & $1.22_{0.04}$ & $0.85_{0.37}$ & ${0.34_{0.36}}$ \\
            \hline 
\end{tabular}

\caption{\small  The average of $d_{\mathrm{FN}}(\mathcal{S}(\hat{A}_0), \mathcal{S}(A_0))$ over 100 repeats of the experiment for Models~1a-9, with $p \in \{20,50,100\}$ and $n=200$. For each setting, we also present $10$ times the standard error for each method in subscript.}
\label{table:FN}
\end{table}

\end{document}